\theoremstyle{thmstyleone}%
\newtheorem{theorem}{Theorem}
\newtheorem{proposition}[theorem]{Proposition}%
\theoremstyle{thmstyletwo}%
\newtheorem{remark}{Remark}%
\theoremstyle{thmstylethree}%
\begin{document}

\title[Data-driven Implementations of Various Generalizations of Balanced Truncation]{Data-driven Implementations of Various Generalizations of Balanced Truncation}


\author[1]{\fnm{Umair} \sur{Zulfiqar}}\email{umair@yangtzeu.edu.cn}

\author*[2]{\fnm{Qiu-Yan} \sur{Song}}\email{qysong@hbnu.edu.cn}
\equalcont{These authors contributed equally to this work.}

\author[3]{\fnm{Zhi-Hua} \sur{Xiao}}\email{zhxiao@yangtzeu.edu.cn}
\equalcont{These authors contributed equally to this work.}

\author[4]{\fnm{Victor} \sur{Sreeram}}\email{victor.sreeram@uwa.edu.au}
\equalcont{These authors contributed equally to this work.}

\affil[1]{\orgdiv{School of Electronic Information and Electrical Engineering}, \orgname{Yangtze University}, \orgaddress{\city{Jingzhou}, \postcode{434023}, \state{Hubei}, \country{China}}}

\affil*[2]{\orgdiv{Huangshi Key Laboratory of Metaverse and Virtual Simulation, School of Mathematics and Statistics}, \orgname{Hubei Normal University}, \orgaddress{\city{Huangshi}, \postcode{435002}, \state{Hubei}, \country{China}}}

\affil[3]{\orgdiv{School of Mathematics and Systems Science}, \orgname{Wuhan University of Science and Technology}, \orgaddress{\city{Wuhan}, \postcode{430065}, \state{Hubei}, \country{China}}}

\affil[4]{\orgdiv{Department of Electrical, Electronic, and Computer Engineering}, \orgname{The University of Western Australia}, \orgaddress{\city{Perth}, \postcode{6009}, \state{Western Australia}, \country{Australia}}}

\abstract{Quadrature-based approximation of Gramians in standard balanced truncation yields a non-intrusive, data-driven implementation that requires only transfer function samples on the imaginary axis, which can be measured experimentally. This idea has recently been extended to several generalizations of balanced truncation, including positive-real balanced truncation, bounded-real balanced truncation, and balanced stochastic truncation. However, these extensions require samples of some spectral factorizations on the imaginary axis, and no practical method exists to obtain such data experimentally. As a result, these non-intrusive implementations are mainly of theoretical interest at present.

This paper shows that if the Gramians in these generalizations are approximated via rational interpolation rather than numerical integration, the resulting non-intrusive implementations do not require spectral factorization samples. Instead, they rely only on transfer function samples. Based on this idea, non-intrusive implementations are first developed for several variants of balanced truncation, wherein the Gramians are approximated implicitly using low-rank Alternating Direction Implicit (ADI) methods for Lyapunov and Riccati equations. These formulations include Linear Quadratic Gaussian (LQG) balanced truncation, \(\mathcal{H}_\infty\) balanced truncation, positive-real balanced truncation, bounded-real balanced truncation, self-weighted balanced truncation, and balanced stochastic truncation. They require transfer function samples in the right half of the \(s\)-plane, which cannot be measured experimentally.

Next, building on these results, novel data-driven non-intrusive implementations are proposed that require only transfer function samples on the imaginary axis. Hence, unlike the quadrature-based and ADI-based approaches, these non-intrusive formulations can be implemented using practically measurable data. Numerical results are presented for benchmark problems in model order reduction, which show that the proposed non-intrusive implementations achieve accuracy comparable to their intrusive counterparts.}

\keywords{ADI, Balanced truncation, Data-driven, Low-rank, Non-intrusive}



\maketitle

\section{Introduction}\label{sec1}
Model order reduction (MOR) addresses the problem of constructing a low-dimensional dynamical system that approximates an original high-dimensional system. The aim is to create a reduced-order model (ROM) whose input-output response approximates that of the full-order system within an acceptable error tolerance. This provides a systematic approach for generating computationally efficient surrogate systems to accelerate simulation, analysis, and control design for large-scale models. For comprehensive reviews of MOR techniques, see \citep{schilders2008model,quarteroni2014reduced,obinata2012model,benner2021model}.

Balanced truncation (BT) is a classical and effective MOR technique for linear time-invariant (LTI) systems \citep{moore1981principal}. It constructs a ROM by truncating states associated with small Hankel singular values, which correspond to weakly controllable and observable dynamics. The resulting ROM preserves the stability of the full-order model (FOM) and provides an \textit{a priori} error bound.

The primary computational cost of BT stems from solving Lyapunov equations to compute the controllability and observability Gramians. Efficient low-rank solvers for this task are reviewed in \citep{benner2013numerical,simoncini2016computational}. These methods require explicit access to the system’s state-space matrices and are therefore classified as intrusive. In contrast, non-intrusive approaches construct ROMs solely from input–output data, such as samples of the transfer function or impulse response \citep{gustavsen2002rational,mayo2007framework,nakatsukasa2018aaa,gosea2022data,cherifi2022greedy,pradovera2023toward,scarciotti2024interconnection,goyal2024rank,aumann2025practical}. One such method is Quadrature-based BT (QuadBT) \citep{goseaQuad}, which constructs the ROM using numerical quadrature applied to transfer function samples along the imaginary axis ($j\omega$) or to sampled impulse response and their derivatives. QuadBT has been extended to several classes of dynamical systems, such as second-order systems \citep{reiter2025data,wang2025data} and quadratic output systems \citep{padhi2025data}. Other non-intrusive approximate balancing approaches include \citep{liljegren2024data,willcox2002balanced,phillips2004poor,opmeer2011model,gosea2024non,burohman2023data}.

Over the past two decades, the low-rank Alternating Direction Implicit (ADI) method has emerged as a numerically efficient and effective solver for the Lyapunov equations in BT \citep{benner2013efficient}. Recently, a non-intrusive implementation of ADI-based BT was proposed \citep{zulfiqar2025non}. This approach constructs the ROM directly from transfer function samples taken in the right half of the $s$-plane.

BT has been extended to various system classes, including descriptor systems \citep{mehrmann2005balanced}, parametric systems \citep{wittmuess2016parametric}, nonlinear systems \citep{lall2002subspace}, and second-order systems \citep{reis2008balanced}. These extensions preserve various system properties such as passivity \citep{phillips2002guaranteed}, contractivity \citep{opdenacker2002contraction}, and the minimum-phase condition \citep{green1988balanced}. Recently, QuadBT has been extended to include certain generalizations of BT \citep{reiter2023generalizations}, such as balanced stochastic truncation (BST) \citep{green1988balanced}, positive-real BT (PRBT) \citep{green1988balanced}, and bounded-real BT (BRBT) \citep{opdenacker2002contraction,glover1988state}. Although this generalization \citep{reiter2023generalizations} is formulated in a non-intrusive setting, it depends on samples of spectral factorizations of the transfer function. Since practical methods for obtaining such samples in real-world settings are currently unavailable, the method remains largely theoretical.

In this work, we first introduce a low-rank ADI-based non-intrusive framework applicable to several BT generalizations, including Linear Quadratic Gaussian BT (LQGBT) \citep{jonckheere1983new}, \(\mathcal{H}_\infty\)BT \citep{mustafa1991controller}, PRBT \citep{green1988balanced}, BRBT \citep{opdenacker2002contraction}, self-weighted BT (SWBT) \citep{zhou1995frequency}, and BST \citep{green1988balanced}. In contrast to the quadrature-based approach in \citep{reiter2023generalizations}, the ADI-based non-intrusive implementations rely on transfer function samples in the right half of the \(s\)-plane. Although such samples cannot be measured in an experimental setting, the results developed for the ADI-based implementations are later generalized to obtain projection-based non-intrusive implementations that rely on transfer function samples on the imaginary axis, which can be measured experimentally. This leads to data-driven implementations of approximate BT, LQGBT, \(\mathcal{H}_\infty\)BT, PRBT, BRBT, SWBT, and BST. Numerical results show that both non-intrusive formulations perform comparably to their intrusive counterparts.

The paper is structured as follows. Section \ref{sec2} introduces the MOR problem and provides a brief overview of key MOR techniques relevant to this work. Sections \ref{sec3} and \ref{sec4} present the main contributions. In Section \ref{sec3}, we propose non-intrusive low-rank ADI-based implementations of LQGBT, \(\mathcal{H}_\infty\) BT, PRBT, BRBT, SWBT, and BST. Section \ref{sec4} presents projection-based data-driven counterparts of the algorithms in Section \ref{sec3}. The numerical performance of the proposed non-intrusive algorithms is tested in Section \ref{sec5}. Finally, Section \ref{sec6} offers concluding remarks.
\section{Preliminaries}\label{sec2}
Consider a stable LTI system \(G(s)\) of order $n$, given in minimal state-space form by
\[
G(s)=C(sI-A)^{-1}B+D,
\]
where \(A\in\mathbb{R}^{n\times n}\), \(B\in\mathbb{R}^{n\times m}\), \(C\in\mathbb{R}^{p\times n}\), and \(D\in\mathbb{R}^{p\times m}\).

Let \(\hat G(s)\) denote a stable ROM of order \(r\ll n\) that approximates \(G(s)\), with minimal realization
\[
\hat G(s)=\hat C(sI-\hat A)^{-1}\hat B+D,
\]
where \(\hat A\in\mathbb{C}^{r\times r}\), \(\hat B\in\mathbb{C}^{r\times m}\), and \(\hat C\in\mathbb{C}^{p\times r}\).

The ROM \(\hat G(s)\) is obtained via a Petrov–Galerkin projection of the FOM \(G(s)\) as:
\[
\hat A=\hat W^{*}A\hat V,\qquad
\hat B=\hat W^{*}B,\qquad
\hat C=C\hat V,
\]
with full-column-rank matrices \(\hat V,\hat W\in\mathbb{C}^{n\times r}\) satisfying \(\hat W^{*}\hat V=I\). The resulting ROM \(\hat G(s)\) is invariant under the transformations \(\hat V\rightarrow \hat V T_v\) and \(\hat W\rightarrow \hat W T_w\), where \(T_v,T_w\in\mathbb{C}^{r\times r}\) are nonsingular. This degree of freedom can be exploited to apply suitable similarity transformations that convert the generally complex matrices \( \hat{V}, \hat{W} \) and the reduced-state-space triplet \( (\hat{A}, \hat{B}, \hat{C}) \) into equivalent real-valued representations \citep{gallivan2004sylvester}.
\subsection{Krylov subspace-based Rational Interpolation \citep{beattie2017model}}
Let \((\sigma_1, \dots, \sigma_v)\) denote a set of right interpolation points and \((\mu_1, \dots, \mu_w)\) a set of left interpolation points. In the rational interpolation framework, the projection matrices \(\hat{V} \in \mathbb{C}^{n \times vm}\) and \(\hat{W} \in \mathbb{C}^{n \times wp}\) are defined as
\begin{align}
\hat{V} &= \begin{bmatrix} (\sigma_1 I - A)^{-1} B & \cdots & (\sigma_v I - A)^{-1} B\end{bmatrix},\label{Kry_V}\\
\hat{W} &= \begin{bmatrix} (\mu_1^* I - A^T)^{-1} C^T & \cdots & (\mu_w^*I - A^T)^{-1} C^T\end{bmatrix}.\label{Kry_W}
\end{align}
To satisfy the bi-orthogonality condition \(\hat W^{*}\hat V=I\), the left basis is modified as \(\hat W\leftarrow \hat W T_w\), with \(T_w=(\hat V^{*}\hat W)^{-1}\). The resulting ROM interpolates the FOM at the prescribed interpolation points:
\begin{align}
G(\sigma_i)= \hat{G}(\sigma_i), \quad G(\mu_i) = \hat{G}(\mu_i).\label{int_cond_1}
\end{align}
Moreover, at any interpolation point that appears in both sets—i.e., when \(\sigma_i = \mu_j\)—the ROM also matches the first derivative of the FOM at that point, satisfying the Hermite interpolation condition:
\begin{align}
G^\prime(\sigma_i) = \hat{G}^\prime(\sigma_i).\label{int_cond_2}
\end{align}
\subsection{The Loewner framework \citep{mayo2007framework}}
The feedthrough matrix \( D \) can be determined by sampling the transfer function \( G(s) \) at a sufficiently large frequency, as it holds that \( D = \lim_{s \to \infty} G(s) \). Define the strictly proper part of the FOM as  
\begin{align}
H(s)=C(sI-A)^{-1}B.\nonumber
\end{align}
Given samples of \( G(s) \) at selected points \( s_i \), the corresponding samples of \( H(s_i) \) are obtained via  
\begin{align}
H(s_i)=G(s_i)-G(\infty).\nonumber
\end{align}
In the Loewner framework, the following matrices are assembled using samples of \( H(s) \) at the right interpolation points \( \{\sigma_j\}_{j=1}^v \) and left interpolation points \( \{\mu_i\}_{i=1}^w \): 
\begin{align}
\hat{W}^*\hat{V}&=\begin{bmatrix}-\frac{H(\sigma_1) - H(\mu_1)}{\sigma_1 - \mu_1}&\cdots&-\frac{H(\sigma_v) - H(\mu_1)}{\sigma_v - \mu_1}\\\vdots&\ddots&\vdots\\-\frac{H(\sigma_1) - H(\mu_w)}{\sigma_1 - \mu_w}&\cdots&-\frac{H(\sigma_v) - H(\mu_w)}{\sigma_v - \mu_w}\end{bmatrix},\nonumber\\
\hat{W}^*A\hat{V}&=\begin{bmatrix}-\frac{\sigma_1 H(\sigma_1) - \mu_1 H(\mu_1)}{\sigma_1 - \mu_1}&\cdots&-\frac{\sigma_v H(\sigma_v) - \mu_1 H(\mu_1)}{\sigma_v - \mu_1}\\\vdots&\ddots&\vdots\\-\frac{\sigma_1 H(\sigma_1) - \mu_w H(\mu_w)}{\sigma_1 - \mu_w}&\cdots&-\frac{\sigma_v H(\sigma_v) - \mu_w H(\mu_w)}{\sigma_v - \mu_w}\end{bmatrix},\nonumber\\
\hat{W}^*B&=\begin{bmatrix}H(\mu_1)\\\vdots\\H(\mu_w)\end{bmatrix},\nonumber\\
 C\hat{V}&=\begin{bmatrix}H(\sigma_1)&\cdots H(\sigma_v)\end{bmatrix},\label{LF}
\end{align}where \( \hat{V} \) and \( \hat{W} \) are the projection matrices defined in (\ref{Kry_V}) and (\ref{Kry_W}), respectively.

When a right and a left interpolation point are nearly identical (i.e., \( \sigma_j \approx \mu_i \)), the divided differences approach derivatives:  
\begin{align}
 \frac{H(\sigma_j) - H(\mu_i)}{\sigma_j - \mu_i}&\approx H^\prime(\sigma_j),\nonumber\\
  \frac{\sigma_j H(\sigma_j) - \mu_i H(\mu_i)}{\sigma_j - \mu_i} &\approx H(\sigma_j) + \sigma_j H^\prime(\sigma_j).\nonumber
\end{align}Consequently, when the interpolation sets share common points, constructing \( \hat{W}^*\hat{V} \) and \( \hat{W}^*A\hat{V} \) also requires samples of the derivative \( H'(\sigma_j) \). The matrices \( \hat{W}^*\hat{V} \) and \( \hat{W}^*A\hat{V} \) have a special structure and are referred to as the Loewner and shifted Loewner matrices, respectively.

The matrix \(\hat{W}^*\hat{V}\) can be singular, in which case the standard reduced-order matrices  
\(\hat{A} = (\hat{W}^*\hat{V})^{-1}\hat{W}^*A\hat{V}\) and \(\hat{B} = (\hat{W}^*\hat{V})^{-1}\hat{W}^*B\) are not well defined. In such cases, the ROM is constructed in descriptor form as  
\begin{align}
\hat{G}(s)=C\hat{V}(s\hat{W}^*\hat{V}-\hat{W}^*A\hat{V})^{-1}\hat{W}^*B+G(\infty).\nonumber
\end{align}
\subsection{Pseudo-optimal Rational Krylov Algorithm (PORK) \citep{wolf2014h}}
Define the matrices $S_v$, $S_w$, $L_v$, and $L_w$ as:
\begin{align}
S_v &= \text{diag}(\sigma_1, \dots, \sigma_v)\otimes I_m,& S_w &= \text{diag}(\mu_1, \dots, \mu_w)\otimes I_p,\nonumber\\
L_v &= \begin{bmatrix} 1, \dots, 1 \end{bmatrix}\otimes I_m,& L_w^T &= \begin{bmatrix} 1, \dots, 1 \end{bmatrix}\otimes I_p.\label{SbLbScLc}
\end{align}
Then the projection matrices \( \hat{V} \) and \( \hat{W} \), defined in (\ref{Kry_V}) and (\ref{Kry_W}), satisfy the following Sylvester equations:
\begin{align}
A\hat{V}-\hat{V}S_v+BL_v&=0,\label{sylv_V}\\
A^T\hat{W}-\hat{W}S_w^*+C^TL_w^T&=0.\label{sylv_W}
\end{align}Pre-multiplying (\ref{sylv_V}) by \( \hat{W}^* \) and enforcing the bi-orthogonality condition \( \hat{W}^*\hat{V} = I \) yields \( \hat{A} = S_v - \hat{B} L_v \). This expresses the reduced-order state matrix \(\hat{A}\) in terms of the free parameter \(\hat{B} = \zeta \), which preserves the interpolation conditions enforced by \( \hat{V} \). Therefore, varying \(\zeta\) is equivalent to modifying \( \hat{W} \).

Assume that all interpolation points \( \sigma_1, \dots, \sigma_v \) are located in the right half of the \(s\)-plane and that the pair \( (S_v, L_v) \) satisfies the Lyapunov equation: 
\begin{align}
-S_v^* Q_v - Q_v S_v + L_v^T L_v = 0.\label{pork_qs}
\end{align}
If the pair \( (S_v, L_v) \) is observable, setting the free parameter \( \hat{B} = Q_v^{-1} L_v^T \) yields $\hat{A} = S_v - \hat{B} L_v = -Q_v^{-1} S_v^* Q_v$. The resulting ROM
\begin{align}
\hat{A}_r = -Q_v^{-1} S_v^* Q_v,\quad \hat{B} = Q_v^{-1} L_v^T, \quad \hat{C} = C\hat{V},\nonumber
\end{align}
satisfies the $\mathcal{H}_2$-optimality condition $\frac{\partial}{\partial\hat{C}}\big(||G(s)-\hat{G}(s)||_{\mathcal{H}_2}^2\big)=0$. Throughout this paper, this method is referred to as Input PORK (I-PORK).

Similarly, pre-multiplying (\ref{sylv_W}) by \(\hat{V}^*\) reveals that \( \hat{A} \) can be written as \( \hat{A} = S_w - L_w \hat{C} \). This expresses \( \hat{A} \) in terms of the free parameter \( \hat{C}=\zeta \), which is equivalent to varying \( \hat{V} \); the interpolation conditions induced by \( \hat{W} \) are not affected.

Now, assume all interpolation points \( \mu_1, \dots, \mu_w \) lie in the right half of the $s$-plane and the pair \((S_w, L_w)\) satisfies the Lyapunov equation:
\begin{align}  
-S_w P_w - P_w S_w^* + L_w L_w^T = 0. \label{pork_ps}  
\end{align}  
If \((S_w, L_w)\) is controllable, choosing the free parameter \( \hat{C} = L_w^T P_w^{-1} \) yields \( \hat{A} = S_w - L_w\hat{C} = -P_w S_w^* P_w^{-1} \). The resulting ROM
\begin{align}  
\hat{A} = -P_w S_w^* P_w^{-1}, \quad \hat{B} = \hat{W}^* B, \quad \hat{C} = L_w^T P_w^{-1},  
\end{align}  
satisfies the \(\mathcal{H}_2\)-optimality condition $\frac{\partial}{\partial\hat{B}}\bigl(\|G(s)-\hat{G}(s)\|_{\mathcal{H}_2}^2\bigr)=0$. This approach is referred to as Output PORK (O-PORK) in the remainder of the paper.
\subsection{Balanced Truncation (BT) and Its extensions}
The controllability Gramian \( P \) and observability Gramian \( Q \) of the state-space realization $(A,B,C,D)$ can be expressed in the frequency domain via the integrals
\begin{align}
P &= \frac{1}{2\pi} \int_{-\infty}^{\infty} (j\omega I - A)^{-1} BB^T (-j\omega I - A^T)^{-1} \, d\omega, \label{int1}\\
Q &= \frac{1}{2\pi} \int_{-\infty}^{\infty} (-j\omega I - A^T)^{-1} C^T C (j\omega I - A)^{-1} \, d\omega. \label{int2}
\end{align}
These Gramians are solutions to the Lyapunov equations
\begin{align}
AP + PA^T + BB^T = 0,\label{lyap_P}\\
A^TQ + QA + C^T C = 0.\label{lyap_Q}
\end{align}
If the Cholesky factorizations \( P = L_p L_p^T \) and \( Q = L_q L_q^T \) are computed, the balanced square-root algorithm (BSA) \citep{tombs1987truncated} provides a numerically robust way to perform BT \citep{moore1981principal}. The method starts with the singular value decomposition (SVD) of \( L_q^T L_p \):
\[
L_q^TL_p = \begin{bmatrix} U_1 & U_2 \end{bmatrix} \begin{bmatrix} \Sigma_{r} & 0 \\ 0 & \Sigma_{n-r} \end{bmatrix} \begin{bmatrix} V_1^T \\ V_2^T \end{bmatrix}.
\]
From this decomposition, the projection matrices are obtained as 
\[
\hat{W} = L_q U_1 \Sigma_r^{-\frac{1}{2}} \quad \text{and} \quad \hat{V} = L_p V_1 \Sigma_r^{-\frac{1}{2}}.  
\]  
The resulting ROM preserves the asymptotic stability of the original model \(G(s)\) as well as its \(r\) largest Hankel singular values \(\sqrt{\lambda_i(PQ)}\). To preserve additional properties—such as stable minimum phase, positive-realness, or bounded-realness—generalized versions of BT have been developed, which primarily differ in their definitions of the Gramians.  

In LQGBT \citep{jonckheere1983new}, the Gramian-like matrices are computed as stabilizing solutions to the following filter and controller Riccati equations:
\begin{align}
AP_{\text{LQG}}+P_{\text{LQG}}A^T+BB^T-P_{\text{LQG}}C^TCP_{\text{LQG}}=0,\label{Ricc_LQG_P}\\
A^TQ_{\text{LQG}}+Q_{\text{LQG}}A+C^TC-Q_{\text{LQG}}BB^TQ_{\text{LQG}}=0.\label{Ricc_LQG_Q}
\end{align}By replacing \( P \) and \( Q \) with \( P_{\text{LQG}} \) and \( Q_{\text{LQG}} \) in BT, respectively, a ROM can be obtained that is suitable for designing a reduced-order LQG controller with good closed-loop performance when used with the original plant.

In \(\mathcal{H}_\infty\)BT \citep{mustafa1991controller}, the Gramian-like matrices are computed as stabilizing solutions to the following filter and controller Riccati equations with \(\gamma > 0\):
\begin{align}
AP_{\mathcal{H}_\infty}+P_{\mathcal{H}_\infty}A^T+BB^T-(1-\gamma^{-2})P_{\mathcal{H}_\infty}C^TCP_{\mathcal{H}_\infty}=0,\label{Ricc_H_inf_P}\\
A^TQ_{\mathcal{H}_\infty}+Q_{\mathcal{H}_\infty}A+C^TC-(1-\gamma^{-2})Q_{\mathcal{H}_\infty}BB^TQ_{\mathcal{H}_\infty}=0.\label{Ricc_H_inf_Q}
\end{align}By replacing \( P \) and \( Q \) with \( P_{\mathcal{H}_\infty} \) and \( Q_{\mathcal{H}_\infty} \) in BT, respectively, a ROM can be obtained that is suitable for designing a reduced-order \(\mathcal{H}_\infty\) controller with good closed-loop performance when used with the original plant.

In PRBT \citep{green1988balanced,phillips2003guaranteed}, the Gramians are computed as stabilizing solutions to the following Riccati equations:
\begin{align}
AP_{\text{PR}}+P_{\text{PR}}A^T+(B-P_{\text{PR}}C^T)(D+D^T)^{-1}(B-P_{\text{PR}}C^T)^T=0,\label{P_PR}\\
A^TQ_{\text{PR}}+Q_{\text{PR}}A+(C-B^TQ_{\text{PR}})^T(D+D^T)^{-1}(C-B^TQ_{\text{PR}})=0.\label{Q_PR}
\end{align}By replacing \( P \) and \( Q \) with \( P_{\text{PR}} \) and \( Q_{\text{PR}} \) in BT, respectively, a ROM that preserves the positive-realness of the original model can be obtained.

In BRBT \citep{pernebo2003model,phillips2003guaranteed}, the Gramians are computed as stabilizing solutions to the following Riccati equations:
\begin{align}
AP_{\text{BR}}+P_{\text{BR}}A^T+BB^T+(P_{\text{BR}}C^T&+BD^T)(I-DD^T)^{-1}\nonumber\\
&\times(P_{\text{BR}}C^T+BD^T)^T=0,\label{P_BR}\\
A^TQ_{\text{BR}}+Q_{\text{BR}}A+C^TC+(B^TQ_{\text{BR}}&+D^TC)^T(I-D^TD)^{-1}\nonumber\\
&\times(B^TQ_{\text{BR}}+D^TC)=0.\label{Q_BR}
\end{align}By replacing \( P \) and \( Q \) with \( P_{\text{BR}} \) and \( Q_{\text{BR}} \) in BT, respectively, a ROM that preserves the bounded-realness of the original model can be obtained.

In SWBT \citep{zhou1995frequency}, the controllability Gramian \( P \) remains the same as in BT, while the weighted observability Gramian \( Q_{\text{SW}} \) solves the following Lyapunov equation:
\begin{align}
(A-BD^{-1}C)^TQ_{\text{SW}}+Q_{\text{SW}}(A-BD^{-1}C)+C^T(DD^T)^{-1}C=0.\label{Q_SW}
\end{align}By replacing \( Q \) with \( Q_{\text{SW}} \) in BT, a ROM that preserves the minimum-phase property of the original model can be obtained.

In BST \citep{green1988balanced,desai1984transformation}, the controllability Gramian \( P \) remains the same as in BT, while the weighted observability Gramian \( Q_{\text{S}} \) is computed as stabilizing solution to the following Riccati equation:
\begin{align}
A^TQ_{\text{S}}+Q_{\text{S}}A+\big(C-(CP+DB^T)Q_{\text{S}}\big)^T&(DD^T)^{-1}\nonumber\\
&\times\big(C-(CP+DB^T)Q_{\text{S}}\big)=0.\label{Q_S}
\end{align}By replacing \( Q \) with \( Q_{\text{S}} \) in BT, a ROM that preserves the minimum-phase property of the original model can be obtained. Unlike SWBT, BST can handle non-minimum-phase models as well. Both SWBT and BST minimize the relative error
\[ G^{-1}(s)\Big(G(s) - \hat{G}(s)\Big)
\] and tend to ensure uniform accuracy across the entire frequency spectrum.
\subsection{Quadrature-based BT (QuadBT) \citep{goseaQuad}}
The frequency-domain integrals in (\ref{int1}) and (\ref{int2}) can be approximated by numerical quadrature as: 
\begin{align}  
P &\approx \tilde{P} = \sum_{i=1}^{v} w_{p,i}^2 (j\sigma_i I - A)^{-1} B B^T (-j\sigma_i I - A^T)^{-1}, \nonumber \\  
Q &\approx \tilde{Q} = \sum_{i=1}^{w} w_{q,i}^2 (-j\mu_i I - A^T)^{-1} C^T C (j\mu_i I - A)^{-1}, \nonumber  
\end{align}  
where \(\sigma_i\) and \(\mu_i\) denote the quadrature nodes, and \(w_{p,i}^2\) and \(w_{q,i}^2\) are the associated weights. These approximations admit Cholesky-like factorizations \(\tilde{P} = \tilde{L}_p \tilde{L}_p^*\) and \(\tilde{Q} = \tilde{L}_q \tilde{L}_q^*\), which can be written as  
\[
\tilde{L}_p = \hat{V} \hat{L}_p, \qquad \tilde{L}_q = \hat{W} \hat{L}_q,
\]with \(\hat{V}\) and \(\hat{W}\) defined as in (\ref{Kry_V}) and (\ref{Kry_W}), respectively, and where
\begin{align}   
\hat{L}_p &= \text{diag}(w_{p,1}, \dots, w_{p,v}) \otimes I_m, \nonumber \\  
\hat{L}_q &= \text{diag}(w_{q,1}, \dots, w_{q,w}) \otimes I_p. \nonumber  
\end{align}  
In BSA, the exact Cholesky factors \(L_p\) and \(L_q\) are replaced by \(\tilde{L}_p\) and \(\tilde{L}_q\), resulting in the SVD  
\begin{align} 
\hat{L}_q^* (\hat{W}^*\hat{V}) \hat{L}_p = \begin{bmatrix} \tilde{U}_1 & \tilde{U}_2 \end{bmatrix} \begin{bmatrix} \tilde{\Sigma}_r & 0 \\ 0 & \tilde{\Sigma}_{n-r} \end{bmatrix} \begin{bmatrix} \tilde{V}_1^* \\ \tilde{V}_2^* \end{bmatrix}. \label{bsa_svd} 
\end{align} From this decomposition, the projection matrices are computed as:
\begin{align}
W_r = \hat{L}_q \tilde{U}_1 \tilde{\Sigma}_r^{-1/2} \quad \text{and} \quad V_r = \hat{L}_p \tilde{V}_1 \tilde{\Sigma}_r^{-1/2}.  \label{bsa_proj}
\end{align}
Finally, the ROM in QuadBT is obtained as:
\begin{align}  
\hat{A} &= W_r^* (\hat{W}^* A \hat{V}) V_r, & \hat{B} &= W_r^* (\hat{W}^* B), & \hat{C} &= (C \hat{V}) V_r. \label{bsa_rom}
\end{align}
Note that \(\hat{L}_p\) and \(\hat{L}_q\) depend solely on the quadrature weights. Furthermore, the quantities \(\hat{W}^*\hat{V}\), \(\hat{W}^* A \hat{V}\), \(\hat{W}^* B\), \(C \hat{V}\), and \(D\) can be constructed in a non-intrusive manner using frequency-domain samples of the transfer function \(G(s)\) within the Loewner framework \eqref{LF}. Thus, the ROM in \eqref{bsa_rom} can be constructed non-intrusively, without access to the state-space realization \((A, B, C, D)\).

In \citep{reiter2023generalizations}, the following observations were made about PRBT, BRBT, and BST:
\begin{enumerate}
\item In PRBT, the Gramians $P_{\text{PR}}$ and $Q_{\text{PR}}$ are respectively the controllability Gramian and observability Gramian associated with state-space realizations of the spectral factorizations of $G(s)+G^*(s)$.
\item In BRBT, the Gramians $P_{\text{BR}}$ and $Q_{\text{BR}}$ are respectively the controllability Gramian and observability Gramian associated with state-space realizations of the spectral factorizations of $I_m-G^*(s)G(s)$ and $I_p-G(s)G^*(s)$.
\item In BST, the observability Gramian $Q_\text{S}$ is the frequency-weighted observability Gramian with weight associated with the spectral factorization of $G(s)G^*(s)$.
\item The QuadBT framework could be extended to PRBT, BRBT, and BST if samples were available for the spectral factorizations of $G(s)+G^*(s)$, $I_m-G^*(s)G(s)$, $I_p-G(s)G^*(s)$, and $G(s)G^*(s)$.
\item Practical methods to acquire these samples are not yet available and are left for future work.
\end{enumerate}
We do not discuss these generalizations in detail here because we focus on methods requiring only samples of $G(s)$, for which practical measurement methods exist, unlike for the spectral factorizations of $G(s)+G^*(s)$, $I_m-G^*(s)G(s)$, $I_p-G(s)G^*(s)$, and $G(s)G^*(s)$.
\subsection{Non-intrusive ADI-based Approximation of BT \citep{zulfiqar2025non}}
In \citep{wolf2016adi}, it is demonstrated that the PORK algorithm and the low-rank Cholesky factor ADI (LRCF-ADI) method \citep{benner2013efficient} yield identical low-rank approximations of the solutions to Lyapunov equations, provided the ADI shift parameters are chosen as the mirror images of the interpolation points used in PORK. Specifically, when the ADI shifts are set to \((-\sigma_1, \dots, -\sigma_v)\), the resulting approximation of the controllability Gramian \(P\) coincides with the I-PORK output and can be written as
\[
P \approx \tilde{P} = \hat{V} Q_v^{-1} \hat{V}^*.
\]
Likewise, for the observability Gramian \(Q\), using ADI shifts \((-\mu_1, \dots, -\mu_w)\) leads to the O-PORK approximation
\[
Q \approx \tilde{Q} = \hat{W} P_w^{-1} \hat{W}^*.
\]
By computing Cholesky-like factorizations \(Q_v^{-1} = \hat{L}_p \hat{L}_p^*\) and \(P_w^{-1} = \hat{L}_q \hat{L}_q^*\), and defining the low-rank factors \(\tilde{L}_p = \hat{V} \hat{L}_p\) and \(\tilde{L}_q = \hat{W} \hat{L}_q\), the approximations admit the factorized representations \(\tilde{P} = \tilde{L}_p \tilde{L}_p^*\) and \(\tilde{Q} = \tilde{L}_q \tilde{L}_q^*\). Note that, unlike QuadBT, \(\hat{L}_p\) and \(\hat{L}_q\) here are not diagonal matrices. As observed in \citep{zulfiqar2025non}, substituting the exact Gramian factors \(L_p, L_q\) in BSA with these approximations \(\tilde{L}_p, \tilde{L}_q\) yields a non-intrusive ADI-based approximation of BT (NI-ADI-BT). This is because \(\hat{L}_p\) and \(\hat{L}_q\) depend solely on the interpolation points, and all required projected system quantities—namely \(\hat{W}^* \hat{V}\), \(\hat{W}^* A \hat{V}\), \(\hat{W}^* B\), \(C \hat{V}\), and \(D\)—can be assembled directly from transfer function evaluations using the Loewner framework \eqref{LF}, without access to the original state-space matrices $(A,B,C,D)$. Consequently, NI-ADI-BT can be implemented using transfer function samples in the right half of the $s$-plane, in contrast to QuadBT, which relies on samples along the $j\omega$ axis.
\section{Non-intrusive ADI-based Approximations of Various Generalizations of BT}\label{sec3}
Recall that the interpolant $\hat{G}(s)=C\hat{V}(sI-S_v+\zeta L_v)^{-1}\zeta+D$ interpolates $G(s)$ at $(\sigma_1,\cdots,\sigma_v)$, where $\hat{V}$ is defined in (\ref{Kry_V}) and $\zeta$ is a free parameter. Similarly, the interpolant $\hat{G}(s) = \zeta(sI - S_w + L_w\zeta)^{-1} \hat{W}^*B + D$ interpolates $G(s)$ at $(\mu_1, \ldots, \mu_w)$, where $\hat{W}$ defined in (\ref{Kry_W}). This section derives specific choices of free parameter $\zeta$ so that rational interpolation produces approximations of various Lyapunov and Riccati equations defining Gramians in various generalizations of BT such that these approximations are identical to ones produced by ADI methods for Lyapunov equations \citep{benner2013efficient} and Riccati equations \citep{benner2018radi}. Furthermore, when these approximations are inserted into BSA, it is noted that the resulting ROM in various generalizations of BT can be obtained non-intrusively from the samples of $G(s)$ in the right-half of the $s$-plane. The parameter $\zeta$ can be computed by solving specific Lyapunov and Sylvester equations according to the specific generalization of BT.

The following assumptions are made on the $D$ matrix for PRBT, BRBT, SWBT, and BST. For PRBT, $D + D^T>0$. For BRBT, $I - DD^T>0$ and $I - D^T D>0$. For SWBT and BST, $DD^T>0$.
\subsection{Non-intrusive ADI-based Approximation of LQGBT}\label{sec3_1}
As demonstrated in \citep{wolf2016adi}, the LRCF-ADI method for computing the low-rank solution of $P$ with shifts $(-\sigma_1,\dots,-\sigma_v)$ corresponds to a Petrov-Galerkin approach. This method implicitly interpolates at $(\sigma_1,\dots,\sigma_v)$ using the projection matrix $\hat{V}$, while the projection matrix $\hat{W}$ is implicitly chosen to satisfy $\hat{W}^*\hat{V}=I$, ensuring that the poles of $\hat{A}$ are placed at $(-\sigma_1^*,\dots,-\sigma_v^*)$. Similarly, \citep{bertram2024family} shows that the low-rank solution of $P_{\text{LQG}}$ obtained via ADI method for Riccati equations (RADI) with shifts $(-\sigma_1,\dots,-\sigma_v)$ can be interpreted as a Petrov-Galerkin method. Here, interpolation is implicitly performed at $(\sigma_1,\dots,\sigma_v)$ using $\hat{V}$, and $\hat{W}$ is selected such that $\hat{W}^*\hat{V}=I$, placing the poles of $\hat{A}-\hat{P}_{\text{LQG}}\hat{C}^*\hat{C}$ at $(-\sigma_1^*,\dots,-\sigma_v^*)$. The matrix $\hat{P}_{\text{LQG}}$ solves the projected Riccati equation:  
\begin{align}  
\hat{A}\hat{P}_{\text{LQG}}+\hat{P}_{\text{LQG}}\hat{A}^*+\hat{B}\hat{B}^*-\hat{P}_{\text{LQG}}\hat{C}^*\hat{C}\hat{P}_{\text{LQG}}=0.\label{proj_ricc_lqg}
\end{align}  
In the following theorem, we specify a choice of $\zeta$ for the interpolant $\hat{G}(s)=C\hat{V}(sI-S_v+\zeta L_v)^{-1}\zeta+D$ such that it places the poles of $\hat{A}-\hat{P}_{\text{LQG}}\hat{C}^*\hat{C}$ at $(-\sigma_1^*,\dots,-\sigma_v^*)$, thereby yielding the same approximation as RADI.  
\begin{theorem}\label{Theorem1}
Let \(\hat{V}\) be as defined in (\ref{Kry_V}), with all interpolation points \((\sigma_1, \dots, \sigma_v)\) located in the right half of the \( s \)-plane. Assume further that the pair \((S_v, L_v)\) is observable and that \(Q_v > 0\) uniquely solves the Lyapunov equation:  
\begin{align}
-S_v^* Q_v - Q_v S_v + L_v^T L_v + \hat{C}^* \hat{C} = 0.\label{Qv_LQG}
\end{align} 
Then, the ROM \(\hat{H}(s)\), defined by  
\[
\hat{A} = S_v - \hat{B} L_v, \quad \hat{B} = Q_v^{-1} L_v^T, \quad \hat{C} = C \hat{V},
\]  
which interpolates \(H(s)\) at \((\sigma_1, \dots, \sigma_v)\), satisfies the following properties:
\begin{enumerate}
\item The matrix \(\hat{A}\) equals \(Q_v^{-1}(-S_v^* + \hat{C}^* \hat{C} Q_v^{-1}) Q_v\).  
\item The solution \(\hat{P}_{\text{LQG}}\) to the projected Riccati equation (\ref{proj_ricc_lqg}) is \(Q_v^{-1}\).  
\item The matrix \(\hat{A} - \hat{P}_{\text{LQG}} \hat{C}^* \hat{C}\) is Hurwitz, with eigenvalues at \((-\sigma_1^*, \dots, -\sigma_v^*)\).
\end{enumerate}
\end{theorem}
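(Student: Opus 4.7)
The plan is to mirror the I-PORK derivation around (\ref{pork_qs}), but augmented by the extra $\hat{C}^*\hat{C}$ term in (\ref{Qv_LQG}). The key structural insight is that this augmented Lyapunov equation is tailored precisely so that $Q_v^{-1}$ is the stabilizing solution of the projected Riccati equation (\ref{proj_ricc_lqg}); once this is established, all three claims drop out of a single algebraic identity.

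For claim~(1), I would start from $\hat{A}=S_v-\hat{B}L_v=S_v-Q_v^{-1}L_v^T L_v$, use (\ref{Qv_LQG}) to substitute $L_v^T L_v=S_v^* Q_v + Q_v S_v - \hat{C}^*\hat{C}$, cancel the $S_v$ contribution, and rearrange to obtain $\hat{A}=-Q_v^{-1}S_v^* Q_v + Q_v^{-1}\hat{C}^*\hat{C}=Q_v^{-1}(-S_v^*+\hat{C}^*\hat{C} Q_v^{-1})Q_v$. For claim~(2), I would substitute $\hat{P}_{\text{LQG}}=Q_v^{-1}$ into (\ref{proj_ricc_lqg}). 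Using the factored form from claim~(1) together with Hermiticity of $Q_v$, the sum $\hat{A}Q_v^{-1}+Q_v^{-1}\hat{A}^*$ contributes $-Q_v^{-1}S_v^*-S_v Q_v^{-1}+2Q_v^{-1}\hat{C}^*\hat{C}Q_v^{-1}$; the quadratic term $-Q_v^{-1}\hat{C}^*\hat{C}Q_v^{-1}$ absorbs one of the two cross terms, and $\hat{B}\hat{B}^*=Q_v^{-1}L_v^T L_v Q_v^{-1}$ supplies the missing piece, so the residual collapses to $Q_v^{-1}\bigl(-S_v^* Q_v - Q_v S_v + L_v^T L_v + \hat{C}^*\hat{C}\bigr)Q_v^{-1}$, which vanishes by (\ref{Qv_LQG}).

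For claim~(3), I would compute $\hat{A}-\hat{P}_{\text{LQG}}\hat{C}^*\hat{C}$ directly from the factored form in claim~(1): the $Q_v^{-1}\hat{C}^*\hat{C}$ summand of $\hat{A}$ cancels the subtraction, leaving the similarity $-Q_v^{-1}S_v^* Q_v$. Its spectrum is therefore $\{-\sigma_1^*,\dots,-\sigma_v^*\}$, which lies in the open left half-plane since the interpolation points lie in the open right half-plane; hence $\hat{A}-\hat{P}_{\text{LQG}}\hat{C}^*\hat{C}$ is Hurwitz. This simultaneously certifies that $Q_v^{-1}$ is the \emph{stabilizing} solution of (\ref{proj_ricc_lqg}), ruling out any ambiguity in claim~(2).

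The individual calculations are all routine; the only real subtlety is the bookkeeping in the Riccati substitution of claim~(2), which becomes transparent once the factored form from claim~(1) is in hand. Consequently, the main conceptual work of the theorem is not the verification at all but rather the identification of (\ref{Qv_LQG}) as the correct Lyapunov equation whose inverse solution encodes the RADI iterate — so the heart of the proof is really the design of $\zeta=Q_v^{-1}L_v^T$ rather than any single technical step.
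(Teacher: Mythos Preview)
Your proposal is correct and follows essentially the same approach as the paper: both arguments manipulate the Lyapunov equation~(\ref{Qv_LQG}) (you by substituting for $L_v^T L_v$, the paper by pre-multiplying with $Q_v^{-1}$) to obtain the factored form of $\hat{A}$, then verify the Riccati equation by direct substitution and read off the spectrum of $\hat{A}-\hat{P}_{\text{LQG}}\hat{C}^*\hat{C}$ from the resulting similarity to $-S_v^*$. Your explicit remark that this identifies $Q_v^{-1}$ as the \emph{stabilizing} solution is a welcome addition that the paper leaves implicit.
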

\begin{proof}
\begin{enumerate}
  \item\label{item1} Pre-multiplying (\ref{Qv_LQG}) by $Q_v^{-1}$ yields:
\begin{align}
-Q_v^{-1}S_v^*Q_v-S_v&+Q_v^{-1}L_v^TL_v+Q_v^{-1}\hat{C}^*\hat{C}=0\nonumber\\
S_v-\hat{B}L_v&=-Q_v^{-1}S_v^*Q_v+Q_v^{-1}\hat{C}^*\hat{C}\nonumber\\
\hat{A}&=Q_v^{-1}(-S_v^*+\hat{C}^*\hat{C}Q_v^{-1})Q_v.\nonumber
\end{align}
\item\label{item2} Consider the expression:
    \begin{align}
&\hat{A}Q_v^{-1}+Q_v^{-1}\hat{A}^*+\hat{B}\hat{B}^*-Q_v^{-1}\hat{C}^*\hat{C}Q_v^{-1}\nonumber\\
&=-Q_v^{-1}S_v^*+Q_v^{-1}\hat{C}^*\hat{C}Q_v^{-1}-S_vQ_v^{-1}+Q_v^{-1}\hat{C}^*\hat{C}Q_v^{-1}\nonumber\\
&\hspace*{4.2cm}+Q_v^{-1}L_v^TL_vQ_v^{-1}-Q_v^{-1}\hat{C}^*\hat{C}Q_v^{-1}\nonumber\\
&=-Q_v^{-1}S_v^*-S_vQ_v^{-1}+Q_v^{-1}L_v^TL_vQ_v^{-1}+Q_v^{-1}\hat{C}^*\hat{C}Q_v^{-1}\nonumber\\
&=Q_v^{-1}\big(-S_v^*Q_v-Q_vS_v+L_v^TL_v+\hat{C}^*\hat{C}\big)Q_v^{-1}\nonumber\\
&=0.\nonumber
    \end{align}This shows that $Q_v^{-1}$ satisfies the projected Riccati equation (\ref{proj_ricc_lqg}).
    \item Given $\hat{A}= -\hat{P}_{\text{LQG}}S_v^*\hat{P}_{\text{LQG}}^{-1}+\hat{P}_{\text{LQG}} \hat{C}^* \hat{C}$, the eigenvalues of $\hat{A}$ are equal to those of $-S_v^*+\hat{P}_{\text{LQG}}\hat{C}^*\hat{C}$. As $-S_v^*$ is Hurwitz, the matrix $\hat{A}-\hat{P}_{\text{LQG}}\hat{C}^*\hat{C}$ is Hurwitz with eigenvalues at $(-\sigma_1^*,\cdots,-\sigma_v^*)$.
\end{enumerate}
\end{proof}It is clear from Theorem \ref{Theorem1} that the RADI-based approximation of \(P_{\text{LQG}}\) is given by \(P_{\text{LQG}} \approx \hat{V} Q_v^{-1} \hat{V}^*\). 

Note that $Q_{\text{LQG}}$ is the dual of $P_{\text{LQG}}$. For completeness, we present the dual of Theorem \ref{Theorem1} below, which produces the same approximation of $Q_{\text{LQG}}$ as RADI when the shifts are chosen as $(-\mu_1,\cdots,-\mu_w)$.
\begin{theorem}\label{Theorem2}
Let \(\hat{W}\) be as defined in (\ref{Kry_W}), with all interpolation points \((\mu_1, \dots, \mu_w)\) located in the right half of the \( s \)-plane. Assume further that the pair \((S_w, L_w)\) is controllable and that \(P_w > 0\) uniquely solves the Lyapunov equation:  
\begin{align}
-S_wP_w-P_wS_w^*+L_wL_w^T+\hat{B}\hat{B}^*=0.\label{Pw_LQG}
\end{align} 
Then, the ROM \(\hat{H}(s)\), defined by  
\[
\hat{A}=S_w-L_w\hat{C}, \quad \hat{B}=\hat{W}^*B,\quad \hat{C}=L_w^TP_w^{-1},
\]  
which interpolates \(H(s)\) at \((\mu_1, \dots, \mu_w)\), satisfies the following properties:
\begin{enumerate}
\item The matrix \(\hat{A}\) equals \(P_w(-S_w^*+P_w^{-1}\hat{B}\hat{B}^*)P_w^{-1}\).  
\item The solution \(\hat{Q}_{\text{LQG}}\) to the following projected Riccati equation \begin{align}
    \hat{A}^*\hat{Q}_{\text{LQG}}+\hat{Q}_{\text{LQG}}\hat{A}+\hat{C}^*\hat{C}-\hat{Q}_{\text{LQG}}\hat{B}\hat{B}^*\hat{Q}_{\text{LQG}}=0\label{proj_ricc_lqg_Q}
    \end{align}
    is \(P_w^{-1}\).
\item The matrix \(\hat{A}-\hat{B}\hat{B}^*\hat{Q}_{\text{LQG}}\) is Hurwitz, with eigenvalues at \((-\mu_1^*, \dots, -\mu_w^*)\).
\end{enumerate}
\end{theorem}
\begin{proof}
The proof is similar to that of Theorem \ref{Theorem1} and hence omitted for brevity.
\end{proof}
Again, it is clear from Theorem \ref{Theorem2} that the RADI-based approximation of \(Q_{\text{LQG}}\) is given by \(Q_{\text{LQG}} \approx \hat{W} P_w^{-1} \hat{W}^*\). 

Similar to NI-ADI-BT \citep{zulfiqar2025non}, the ADI-based non-intrusive approximation of LQGBT (NI-ADI-LQGBT) can be directly obtained using the results of Theorems \ref{Theorem1} and \ref{Theorem2}. Specifically, by computing Cholesky-like factorizations \(Q_v^{-1} = \hat{L}_p \hat{L}_p^*\) and \(P_w^{-1} = \hat{L}_q \hat{L}_q^*\), and defining the low-rank factors \(\tilde{L}_p = \hat{V} \hat{L}_p\) and \(\tilde{L}_q = \hat{W} \hat{L}_q\), the approximations admit the factorized representations \(P_{\mathrm{LQG}}\approx \tilde{L}_p \tilde{L}_p^*\) and \(Q_{\mathrm{LQG}}\approx\tilde{L}_q \tilde{L}_q^*\). Substituting the exact Gramian factors \(L_p, L_q\) in BSA with these approximations \(\tilde{L}_p, \tilde{L}_q\) yields NI-ADI-LQGBT, which can be implemented using transfer function samples in the right half of the $s$-plane.
\begin{remark}
The ADI-based non-intrusive approximation of $\mathcal{H}_\infty$BT (NI-ADI-$\mathcal{H}_\infty$BT) is similar to NI-ADI-LQGBT with the only difference being in the computation of $Q_v$ and $P_w$. In NI-ADI-$\mathcal{H}_\infty$BT, $Q_v$ and $P_w$ are computed by solving the following Lyapunov equations:
\begin{align}
-S_v^* Q_v - Q_v S_v + L_v^T L_v + (1-\gamma^{-2})\hat{C}^* \hat{C} &= 0,\\
-S_wP_w-P_wS_w^*+L_wL_w^T+(1-\gamma^{-2})\hat{B}\hat{B}^*&=0.
\end{align}
\end{remark}
\subsection{Non-intrusive ADI-based Approximation of PRBT}\label{sec3_2}
Let us define
\begin{align}G_{\text{PR}}(s)=C_{\text{PR}}(sI-A_{\text{PR}})^{-1}B_{\text{PR}},\nonumber
\end{align} where
\begin{align}
A_{\text{PR}}&=A-B_{\text{PR}}C_{\text{PR}},& B_{\text{PR}}&=BR_{\text{PR}}^{-\frac{1}{2}},\nonumber\\
C_{\text{PR}}&=R_{\text{PR}}^{-\frac{1}{2}}C,& R_{\text{PR}}&=D+D^T.\nonumber
\end{align}
Further, define the ROM
\begin{align}
\hat{G}_{\text{PR}}(s)=\hat{C}_{\text{PR}}(sI-\hat{A}_{\text{PR}})^{-1}\hat{B}_{\text{PR}}\nonumber
\end{align}obtained as follows:
\begin{align}
\hat{A}_{\text{PR}}&=W_{\text{PR}}^*(A_{\text{PR}})V_{\text{PR}}=W_{\text{PR}}^*(A-BR_{\text{PR}}^{-1}C)V_{\text{PR}}=\hat{A}-\hat{B}R_{\text{PR}}^{-1}\hat{C},\nonumber\\
\hat{B}_{\text{PR}}&=W_{\text{PR}}^*B_{\text{PR}}=W_{\text{PR}}^*BR_{\text{PR}}^{-\frac{1}{2}}=\hat{B} R_{\text{PR}}^{-\frac{1}{2}},\nonumber\\
\hat{C}_{\text{PR}}&=C_{\text{PR}}V_{\text{PR}}=R_{\text{PR}}^{-\frac{1}{2}}CV_{\text{PR}}=R_{\text{PR}}^{-\frac{1}{2}}\hat{C},\nonumber
\end{align}where $W_{\text{PR}}^*V_{\text{PR}}=I$.

The Riccati equations (\ref{P_PR}) and (\ref{Q_PR}) can be rewritten as follows:
\begin{align}
A_{\text{PR}}P_{\text{PR}}+P_{\text{PR}}A_{\text{PR}}^T+B_{\text{PR}}B_{\text{PR}}^T+P_{\text{PR}}C_{\text{PR}}^TC_{\text{PR}}P_{\text{PR}}=0,\\
A_{\text{PR}}^TQ_{\text{PR}}+Q_{\text{PR}}A_{\text{PR}}+C_{\text{PR}}^TC_{\text{PR}}+Q_{\text{PR}}B_{\text{PR}}B_{\text{PR}}^TQ_{\text{PR}}=0.
\end{align}
From Theorem \ref{Theorem1}, the RADI-based low-rank approximation of $P_{\text{PR}}$ can be obtained as follows. The projection matrix
\begin{align}
V_{\text{PR}}=\begin{bmatrix}(\sigma_1I-A_{\text{PR}})^{-1}B_{\text{PR}}&\cdots&(\sigma_vI-A_{\text{PR}})^{-1}B_{\text{PR}}\end{bmatrix}
\end{align} solves the following Sylvester equation:
\begin{align}
A_{\text{PR}}V_{\text{PR}}-V_{\text{PR}}S_v+B_{\text{PR}}L_v=0.\label{PR_Sylv}
\end{align}
Then $\hat{C}_{\text{PR}}=C_{\text{PR}}V_{\text{PR}}$ can be computed as follows:
\begin{align}
\hat{C}_{\text{PR}}=\begin{bmatrix}G_{\text{PR}}(\sigma_1)&\cdots&G_{\text{PR}}(\sigma_v)\end{bmatrix}.
\end{align}Thereafter, $Q_v$ is computed by solving the Lyapunov equation:
\begin{align}
-S_v^*Q_v-Q_vS_v+L_v^TL_v-\hat{C}_{\text{PR}}^*\hat{C}_{\text{PR}}=0.\label{th3_sylv}
\end{align}
The RADI-based approximation of $P_{\text{PR}}$ is given by $V_{\text{PR}} Q_v^{-1} V_{\text{PR}}^*$.

According to Theorem \ref{Theorem1}, the state-space realization of the ROM $\hat{G}_{\text{PR}}(s)=\hat{C}_{\text{PR}}(sI-\hat{A}_{\text{PR}})^{-1}\hat{B}_{\text{PR}}$ (where $\hat{A}_{\text{PR}}=S_v-Q_v^{-1}L_v^TL_v$, $\hat{B}_{\text{PR}}=Q_v^{-1}L_v^T$, and $\hat{C}_{\text{PR}}=C_{\text{PR}}V_{\text{PR}}$) solves the following projected Riccati equations with $\hat{P}_{\text{PR}}=Q_v^{-1}>0$:
\begin{align}
\hat{A}_{\text{PR}}\hat{P}_{\text{PR}}+\hat{P}_{\text{PR}}\hat{A}_{\text{PR}}^*+\hat{B}_{\text{PR}}\hat{B}_{\text{PR}}^*+\hat{P}_{\text{PR}}\hat{C}_{\text{PR}}^*\hat{C}_{\text{PR}}\hat{P}_{\text{PR}}=0.\label{proj_ricc_Ppr}
\end{align}Since
\begin{align}
\hat{A}&=S_v-Q_v^{-1}L_v^TL_v+Q_v^{-1}L_v^T\hat{C}_{\text{PR}},\nonumber\\
\hat{B}&=Q_v^{-1}L_v^TR_{\text{PR}}^{\frac{1}{2}},\quad \hat{C}=R_{\text{PR}}^{\frac{1}{2}}\hat{C}_{\text{PR}},\label{ss_PR}
\end{align} the Riccati equation (\ref{proj_ricc_Ppr}) can be rewritten as follows:
\begin{align}
\hat{A}\hat{P}_{\text{PR}}+\hat{P}_{\text{PR}}\hat{A}^*+(\hat{B}-\hat{P}_{\text{PR}}\hat{C}^*)R_{\text{PR}}^{-1}(\hat{B}-\hat{P}_{\text{PR}}\hat{C}^*)^*=0.
\end{align}
\begin{theorem}\label{Theorem3}
Let $G(s)$ be a positive-real transfer function. Define $\tilde{L}_v=R_{\text{PR}}^{-\frac{1}{2}}(L_v-\hat{C}_{\text{PR}})=\begin{bmatrix}l_{v,1}&\cdots&l_{v,v}\end{bmatrix}$ and $T_{v}=\text{blkdiag}(l_{v,1},\cdots,l_{v,v})$. Assume that:
\begin{enumerate}
\item The interpolation points \((\sigma_1, \dots, \sigma_v)\) are located in the right half of the \( s \)-plane. 
  \item The pair $(S_v,\tilde{L}_v)$ is observable.
  \item The matrix $T_{v}$ is invertible.
  \item The realization $(\hat{A},\hat{B},\hat{C})$ defined in (\ref{ss_PR}) is minimal.
\end{enumerate}Then the following statements hold:
\begin{enumerate}
  \item The projection matrix $V_\text{PR}$ satisfies $V_\text{PR}=\hat{V}T_{v}$, where $\hat{V}$ is as in (\ref{Kry_V}).
  \item The ROM $\hat{G}(s)=\hat{C}(sI-\hat{A})^{-1}\hat{B}+D$ interpolates $G(s)$ at $(\sigma_1,\cdots,\sigma_v)$.
  \item The ROM $\hat{G}(s)$ is positive-real. 
\end{enumerate}
\end{theorem}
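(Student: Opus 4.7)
My plan is to handle the three claims in order, leveraging the state-space identities between the PR system and the original system that are set up in the paragraph preceding the theorem.

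For claim 1, I would show each block column of $V_{\text{PR}}$ matches the corresponding block column of $\hat{V} T_v$ by a direct Woodbury-type manipulation. Let $v_i = (\sigma_i I - A_{\text{PR}})^{-1} B_{\text{PR}}$. Using $A_{\text{PR}} = A - B R_{\text{PR}}^{-1} C$ and $B_{\text{PR}} = B R_{\text{PR}}^{-1/2}$, the defining relation $(\sigma_i I - A_{\text{PR}}) v_i = B_{\text{PR}}$ rearranges to $(\sigma_i I - A) v_i = B R_{\text{PR}}^{-1/2} (I - R_{\text{PR}}^{-1/2} C v_i)$. Recognizing $R_{\text{PR}}^{-1/2} C v_i = C_{\text{PR}} v_i = G_{\text{PR}}(\sigma_i)$, I obtain $v_i = (\sigma_i I - A)^{-1} B \cdot R_{\text{PR}}^{-1/2}(I - G_{\text{PR}}(\sigma_i))$. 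Since $l_{v,i} = R_{\text{PR}}^{-1/2}(I - G_{\text{PR}}(\sigma_i))$ by definition of $\tilde{L}_v$, this is precisely the $i$-th block column of $\hat{V} T_v$. Invertibility of $T_v$ then gives $\operatorname{range}(V_{\text{PR}}) = \operatorname{range}(\hat{V})$.

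For claim 2, I would first observe that the ROM matrices in (\ref{ss_PR}) arise from a Petrov--Galerkin projection of the original system with $V_{\text{PR}}$ and $W_{\text{PR}}$. Indeed, $\hat{A} = \hat{A}_{\text{PR}} + \hat{B}_{\text{PR}} \hat{C}_{\text{PR}} = W_{\text{PR}}^* (A_{\text{PR}} + B_{\text{PR}} C_{\text{PR}}) V_{\text{PR}} = W_{\text{PR}}^* A V_{\text{PR}}$, while $\hat{B} = \hat{B}_{\text{PR}} R_{\text{PR}}^{1/2} = W_{\text{PR}}^* B$ and $\hat{C} = R_{\text{PR}}^{1/2} \hat{C}_{\text{PR}} = C V_{\text{PR}}$, with $W_{\text{PR}}^* V_{\text{PR}} = I$. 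By claim 1, $V_{\text{PR}}$ spans the rational Krylov subspace at $(\sigma_1, \dots, \sigma_v)$, so standard interpolation theory (the result cited in Section~1.1) yields $\hat{G}(\sigma_i) = G(\sigma_i)$.

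For claim 3, I would invoke the positive-real (KYP) lemma applied to the ROM. The paper has already shown that $\hat{P}_{\text{PR}} = Q_v^{-1}$ solves the projected Riccati equation (\ref{proj_ricc_Ppr}); substituting (\ref{ss_PR}) rewrites this as $\hat{A}\hat{P}_{\text{PR}} + \hat{P}_{\text{PR}}\hat{A}^* + (\hat{B} - \hat{P}_{\text{PR}} \hat{C}^*) R_{\text{PR}}^{-1} (\hat{B} - \hat{P}_{\text{PR}} \hat{C}^*)^* = 0$, which is exactly the PR Riccati (\ref{P_PR}) for the ROM. Because $Q_v > 0$ by hypothesis we have $\hat{P}_{\text{PR}} > 0$, and $R_{\text{PR}} = D + D^T > 0$ by the positive-realness of $G$; combined with the minimality assumption on $(\hat{A},\hat{B},\hat{C})$, the KYP lemma gives positive-realness of $\hat{G}(s)$.

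The main obstacle is claim 1: carefully carrying out the Woodbury-style reduction of $(\sigma_i I - A_{\text{PR}})^{-1} B_{\text{PR}}$ and pinning down the identification $C_{\text{PR}} v_i = G_{\text{PR}}(\sigma_i)$ so that the block structure of $T_v$ emerges naturally. Once that equivalence is in hand, claim 2 is a routine corollary of standard interpolation theory and claim 3 is a direct application of the positive-real lemma to an already-constructed Riccati certificate.
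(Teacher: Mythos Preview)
Your proposal is correct and follows essentially the same approach as the paper. The only cosmetic difference is in claim~1: the paper obtains $V_{\text{PR}}=\hat{V}T_v$ by rewriting the Sylvester equation $A_{\text{PR}}V_{\text{PR}}-V_{\text{PR}}S_v+B_{\text{PR}}L_v=0$ as $AV_{\text{PR}}-V_{\text{PR}}S_v+B\tilde{L}_v=0$ and reading off the block structure, whereas you carry out the equivalent computation block column by block column via a Woodbury-type identity; claims~2 and~3 are handled identically (column-space equality plus interpolation theory, then the positive-real lemma with the Riccati certificate $\hat{P}_{\text{PR}}=Q_v^{-1}>0$).
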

\begin{proof}
By rearranging variables, the Sylvester equations (\ref{PR_Sylv}) can be rewritten as follows:
\begin{align}
AV_{\text{PR}}-V_{\text{PR}}S_v+B\tilde{L}_v=0.
\end{align} It can then readily be noted that \( V_{\text{PR}} = \hat{V}T_{v} \), where \(\hat{V}\) is as in (\ref{Kry_V}). Thus, \( V_{\text{PR}} \) and \(\hat{V}\) enforce the same interpolation conditions, i.e., \( G(\sigma_i) = \hat{G}(\sigma_i) \), since the columns of \( V_{\text{PR}} \) and \(\hat{V}\) span the same subspace and produce the same ROM with different state-space realizations \citep{gallivan2004sylvester}.  

By defining \( K_{\text{PR}} = (\hat{B} - \hat{P}_{\text{PR}}\hat{C}^*)R_{\text{PR}}^{-\frac{1}{2}} \), the following holds:  
\begin{align}
\hat{A}\hat{P}_{\text{PR}}+\hat{P}_{\text{PR}}\hat{A}^*&=-K_{\text{PR}}K_{\text{PR}}^*\nonumber\\
\hat{P}_{\text{PR}}\hat{C}^*-\hat{B}&=-K_{\text{PR}}R_{\text{PR}}^{\frac{1}{2}}\nonumber\\
R_{\text{PR}}^{\frac{1}{2}}(R_{\text{PR}}^{\frac{1}{2}})^T&=D+D^T.
\end{align} Since \(\hat{P}_{\text{PR}} > 0\), \(\hat{G}(s)\) satisfies the positive-real lemma, and thus \(\hat{G}(s)\) is a positive-real transfer function \citep{phillips2003guaranteed}.
\end{proof}
Now, assume that the matrices of the ROM $\hat{G}_{\text{PR}}(s)=\hat{C}_{\text{PR}}(sI-\hat{A}_{\text{PR}})^{-1}\hat{B}_{\text{PR}}$ are obtained as follows:
\begin{align}
\hat{A}_{\text{PR}}&=\hat{W}^*(A_{\text{PR}})\hat{V}=\hat{W}^*(A-BR_{\text{PR}}^{-1}C)\hat{V}=S_v-\zeta L_v-\zeta R_{\text{PR}}^{-1}C\hat{V},\nonumber\\
\hat{B}_{\text{PR}}&=\hat{W}^*B_{\text{PR}}=\hat{W}^*BR_{\text{PR}}^{-\frac{1}{2}}=\zeta R_{\text{PR}}^{-\frac{1}{2}},\nonumber\\
\hat{C}_{\text{PR}}&=C_{\text{PR}}\hat{V}=R_{\text{PR}}^{-\frac{1}{2}}C\hat{V},\label{th3_rom}
\end{align}where $\hat{V}$ is as in (\ref{Kry_V}) and $\hat{W}$ is an unknown projection matrix satisfying $\hat{W}^* \hat{V} = I$. According to Theorem \ref{Theorem3}, the ROM $\hat{G}_{\text{PR}}(s)$ in (\ref{th3_rom}) interpolates $G_{\text{PR}}(s)$ at the interpolation points $(\sigma_1,\cdots,\sigma_v)$ since $V_{\text{PR}}=\hat{V}T_v$. An immediate consequence is that a ROM $\hat{G}_{\text{PR}}(s)$ interpolating $G_{\text{PR}}(s)$ can be constructed directly from the ROM $\hat{G}(s)=C\hat{V}(sI-S_v+\zeta L_v)^{-1}\zeta+D$ using the formula (\ref{th3_rom}). We now use this observation to derive an expression for $T_v$.

Note that
\begin{align}
\hat{C}_{\text{PR}}&=C_{\text{PR}}V_{\text{PR}}=\begin{bmatrix}G_{\text{PR}}(\sigma_1)&\cdots&G_{\text{PR}}(\sigma_v)\end{bmatrix}=\begin{bmatrix}\hat{G}_{\text{PR}}(\sigma_1)&\cdots&\hat{G}_{\text{PR}}(\sigma_v)\end{bmatrix}\nonumber\\
&=C_{\text{PR}}\hat{V}\begin{bsmallmatrix}(\sigma_1I-S_v+\zeta L_v+\zeta R_{\text{PR}}^{-1}C\hat{V})^{-1}\zeta R_{\text{PR}}^{-\frac{1}{2}}&\cdots&(\sigma_vI-S_v+\zeta L_v+\zeta R_{\text{PR}}^{-1}C\hat{V})^{-1}\zeta R_{\text{PR}}^{-\frac{1}{2}}\end{bsmallmatrix}\nonumber\\
&=C_{\text{PR}}\hat{V}T_v.\nonumber
\end{align}
It follows directly that the matrix $T_v$ can be computed from the interpolant $\hat{G}(s)=C\hat{V}(sI-S_v+\zeta L_v)^{-1}\zeta+D$ of $G(s)$. Such an interpolant can be generated by I-PORK, with guaranteed stability. Subsequently,
\begin{align}T_v=\begin{bsmallmatrix}(\sigma_1I-S_v+\zeta L_v+\zeta R_{\text{PR}}^{-1}C\hat{V})^{-1}\zeta R_{\text{PR}}^{-\frac{1}{2}}&\cdots&(\sigma_vI-S_v+\zeta L_v+\zeta R_{\text{PR}}^{-1}C\hat{V})^{-1}\zeta R_{\text{PR}}^{-\frac{1}{2}}\end{bsmallmatrix}\nonumber
\end{align}can be obtained by solving the Sylvester equation:
\begin{align}
(S_v-\zeta L_v-\zeta R_{\text{PR}}^{-1}C\hat{V})T_{v}-T_{v}S_v+\zeta R_{\text{PR}}^{-\frac{1}{2}}L_v=0.\label{Tv_pr}
\end{align}
An immediate consequence is that positive-realness can be enforced non-intrusively on the ROM produced by I-PORK, using only samples of $G(s)$ at $\sigma_i$ and $G(\infty)$. 

Next, the RADI-based approximation of $Q_{\text{PR}}$ can be obtained using Theorem \ref{Theorem2} as follows. The projection matrix
\begin{align}
W_{\text{PR}}^*=\begin{bmatrix}C_{\text{PR}}(\mu_1I-A_{\text{PR}})^{-1}\\\vdots\\C_{\text{PR}}(\mu_wI-A_{\text{PR}})^{-1}\end{bmatrix}
\end{align} solves the Sylvester equation:
\begin{align}
A_{\text{PR}}^TW_{\text{PR}}-W_{\text{PR}}S_w^*+C_{\text{PR}}^TL_w^T=0.\label{PR_Sylv_W}
\end{align}
Then, $\hat{B}_{\text{PR}}=W_{\text{PR}}^*B_{\text{PR}}$ can be computed as:
\begin{align}
\hat{B}_{\text{PR}}=\begin{bmatrix}G_{\text{PR}}(\mu_1)\\\vdots\\G_{\text{PR}}(\mu_w)\end{bmatrix}.
\end{align}Subsequently, $P_w$ is obtained by solving the Lyapunov equation:
\begin{align}
-S_wP_w-P_wS_w^*+L_wL_w^T-\hat{B}_{\text{PR}}\hat{B}_{\text{PR}}^*=0.\label{th4_sylv}
\end{align}
The RADI-based approximation of $Q_{\text{PR}}$ is then given by $W_{\text{PR}}P_w^{-1}W_{\text{PR}}^*$.

According to Theorem \ref{Theorem2}, the state-space realization of the ROM $\hat{G}_{\text{PR}}(s)=\hat{C}_{\text{PR}}(sI-\hat{A}_{\text{PR}})^{-1}\hat{B}_{\text{PR}}$ (where $\hat{A}_{\text{PR}}=S_w-L_wL_w^TP_w^{-1}$, $\hat{B}_{\text{PR}}=W_{\text{PR}}^*B_{\text{PR}}$, and $\hat{C}_{\text{PR}}=L_w^TP_w^{-1}$) satisfies the projected Riccati equations with $\hat{Q}_{\text{PR}}=P_w^{-1}>0$
\begin{align}
\hat{A}_{\text{PR}}^*\hat{Q}_{\text{PR}}+\hat{Q}_{\text{PR}}\hat{A}_{\text{PR}}+\hat{C}_{\text{PR}}^*\hat{C}_{\text{PR}}+\hat{Q}_{\text{PR}}\hat{B}_{\text{PR}}\hat{B}_{\text{PR}}^*\hat{Q}_{\text{PR}}=0.\label{proj_ricc_Qpr}
\end{align}Since
\begin{align}
\hat{A}&=S_w-L_wL_w^TP_w^{-1}+\hat{B}_{\text{PR}}L_w^TP_w^{-1},\nonumber\\
\hat{B}&=\hat{B}_{\text{PR}}R_{\text{PR}}^{\frac{1}{2}},\quad \hat{C}=R_{\text{PR}}^{\frac{1}{2}}L_w^TP_w^{-1},\label{ss_PR_2}
\end{align}the Riccati equation (\ref{proj_ricc_Qpr}) can be rewritten as
\begin{align}
\hat{A}^*\hat{Q}_{\text{PR}}+\hat{Q}_{\text{PR}}\hat{A}+(\hat{C}-\hat{B}^*\hat{Q}_{\text{PR}})^*R_{\text{PR}}^{-1}(\hat{C}-\hat{B}^*\hat{Q}_{\text{PR}})=0.
\end{align}
\begin{theorem}\label{Theorem4}
Let $G(s)$ be a positive-real transfer function. Define $\tilde{L}_w^*=R_{\text{PR}}^{-\frac{1}{2}}(L_w^T-\hat{B}_{\text{PR}}^*)=\begin{bmatrix}l_{w,1}^*&\cdots&l_{w,w}^*\end{bmatrix}$ and $T_{w}=\text{blkdiag}(l_{w,1}^*,\cdots,l_{w,w}^*)$. Assume that:
\begin{enumerate}
\item The interpolation points $(\mu_1,\cdots,\mu_w)$ are located in the right half of the $s$-plane.
  \item The pair $(S_w,\tilde{L}_w)$ is controllable.
  \item The matrix $T_{w}$ is invertible.
  \item The realization $(\hat{A},\hat{B},\hat{C})$ defined in (\ref{ss_PR_2}) is minimal.
\end{enumerate}Then the following statements hold:
\begin{enumerate}
  \item The projection matrix $W_\text{PR}$ satisfies $W_\text{PR}=\hat{W}T_{w}$, where $\hat{W}$ is as in (\ref{Kry_W}).
  \item The ROM $\hat{G}(s)=\hat{C}(sI-\hat{A})^{-1}\hat{B}+D$ interpolates $G(s)$ at $(\mu_1,\cdots,\mu_w)$.
  \item The ROM $\hat{G}(s)$ is positive-real. 
\end{enumerate}
\end{theorem}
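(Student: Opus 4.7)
The plan is to dualize the proof of Theorem~\ref{Theorem3} step by step, swapping input and output roles: replace $\hat{V}$ with $\hat{W}$, the output matrix $\hat{C}_{\text{PR}}$ with the input matrix $\hat{B}_{\text{PR}}$, $Q_v$ with $P_w$, and the right-side Sylvester equation with its left-side analogue. The three claims will then follow from three successive reductions that each mirror a step in the proof of Theorem~\ref{Theorem3}.

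For claim~1, I would start from the Sylvester equation $A_{\text{PR}}^T W_{\text{PR}} - W_{\text{PR}} S_w^* + C_{\text{PR}}^T L_w^T = 0$ and substitute $A_{\text{PR}} = A - B R_{\text{PR}}^{-1} C$ and $C_{\text{PR}} = R_{\text{PR}}^{-\frac{1}{2}} C$. Using the identity $B^* W_{\text{PR}} = R_{\text{PR}}^{\frac{1}{2}} \hat{B}_{\text{PR}}^*$, which follows at once from $\hat{B}_{\text{PR}} = W_{\text{PR}}^* B_{\text{PR}}$, the cross term $-C^T R_{\text{PR}}^{-1} B^* W_{\text{PR}}$ combines with $C^T R_{\text{PR}}^{-\frac{1}{2}} L_w^T$ to give $A^T W_{\text{PR}} - W_{\text{PR}} S_w^* + C^T \tilde{L}_w^* = 0$, which is the Sylvester equation with pencil $(A^T, S_w^*)$ and inhomogeneity $C^T \tilde{L}_w^*$. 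Exploiting the block-diagonal structure of $S_w$ together with the partition $\tilde{L}_w^* = [\,l_{w,1}^*\ \cdots\ l_{w,w}^*\,]$, the $i$th block column of $W_{\text{PR}}$ is $(\mu_i^* I - A^T)^{-1} C^T l_{w,i}^*$, which is precisely the $i$th block column of $\hat{W} T_w$ with $T_w = \text{blkdiag}(l_{w,1}^*,\ldots,l_{w,w}^*)$; hence $W_{\text{PR}} = \hat{W} T_w$. Observability of $(S_w, \tilde{L}_w)$ and invertibility of $T_w$ guarantee that $W_{\text{PR}}$ is full column rank and that it has the same column span as $\hat{W}$.

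For claim~2, since $W_{\text{PR}}$ and $\hat{W}$ span the same subspace, the Petrov-Galerkin projection they induce yields the same transfer function up to a state-space similarity \cite{gallivan2004sylvester}; the rational Krylov property of $\hat{W}$ then gives $G(\mu_i) = \hat{G}(\mu_i)$ for $i = 1, \ldots, w$. For claim~3, I would verify the positive-real lemma for $\hat{G}(s)$ using the certificate $\hat{Q}_{\text{PR}} = P_w^{-1} > 0$. Defining $K_{\text{PR}} = R_{\text{PR}}^{-\frac{1}{2}} (\hat{C} - \hat{B}^* \hat{Q}_{\text{PR}})$ and inserting the identities (\ref{ss_PR_2}) into the projected Riccati equation (\ref{proj_ricc_Qpr}) yields
$\hat{A}^* \hat{Q}_{\text{PR}} + \hat{Q}_{\text{PR}} \hat{A} = -K_{\text{PR}}^* K_{\text{PR}}$, $\hat{C}^* - \hat{Q}_{\text{PR}} \hat{B} = -K_{\text{PR}}^* R_{\text{PR}}^{\frac{1}{2}}$, and $D + D^T = R_{\text{PR}}^{\frac{1}{2}} (R_{\text{PR}}^{\frac{1}{2}})^T$, which together with minimality of $(\hat{A},\hat{B},\hat{C})$ and $\hat{Q}_{\text{PR}} > 0$ certify that $\hat{G}(s)$ is positive-real via the positive-real lemma \cite{phillips2003guaranteed}.

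The main obstacle is the bookkeeping in claim~1: one must carefully substitute $A_{\text{PR}}$ and $C_{\text{PR}}$ into the Sylvester equation and recognize that the term $B^* W_{\text{PR}}$ is \emph{linear} in $W_{\text{PR}}$ through the definition of $\hat{B}_{\text{PR}}$, so the apparently nonlinear combination reduces to a linear Sylvester equation whose solution has the block-diagonal factorization $\hat{W} T_w$. Once this factorization is established, the remaining two claims are essentially a verbatim transcription of the corresponding arguments in the proof of Theorem~\ref{Theorem3}, with all input-side objects replaced by their output-side duals.
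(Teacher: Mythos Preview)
Your proposal is correct and takes exactly the approach the paper intends: the paper's own proof simply states that it is ``similar to the proof of Theorem~\ref{Theorem3} and hence omitted for brevity,'' and you have carried out that dualization explicitly and correctly. One small slip: in the last sentence of your first paragraph you write ``Observability of $(S_w,\tilde{L}_w)$,'' but the hypothesis (and the correct dual of the $(S_v,\tilde{L}_v)$ observability assumption in Theorem~\ref{Theorem3}) is \emph{controllability} of $(S_w,\tilde{L}_w)$.
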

\begin{proof}
The proof is similar to the proof of Theorem \ref{Theorem3} and hence omitted for brevity.
\end{proof}
We now derive an expression for \( T_{w} \) when \(\hat{W}\) is used to enforce interpolation conditions instead of \( W_{\text{PR}} \). Since \( W_{\text{PR}} \) and \(\hat{W}\) enforce identical interpolation conditions, the following holds:
\begin{align}
\hat{B}_{\text{PR}}&=W_{\text{PR}}^*B_{\text{PR}}=\begin{bmatrix}G_{\text{PR}}(\mu_1)\\\vdots\\G_{\text{PR}}(\mu_w)\end{bmatrix}=\begin{bmatrix}\hat{G}_{\text{PR}}(\mu_1)\\\vdots\\\hat{G}_{\text{PR}}(\mu_w)\end{bmatrix}\nonumber\\
&=\begin{bmatrix}R_{\text{PR}}^{-\frac{1}{2}}\zeta(\mu_1I-S_w+L_w\zeta+\hat{W}^*BR_{\text{PR}}^{-1}\zeta)^{-1}\\\vdots\\R_{\text{PR}}^{-\frac{1}{2}}\zeta(\mu_wI-S_w+L_w\zeta+\hat{W}^*BR_{\text{PR}}^{-1}\zeta)^{-1}\end{bmatrix}\hat{W}^*B_{\text{PR}},\nonumber\\
&=T_{w}^*\hat{W}^*BR_{\text{PR}}^{-\frac{1}{2}},\nonumber
\end{align}
where \(\hat{G}(s) = \zeta (s I - S_w + L_w \zeta)^{-1} \hat{W}^* B + D\) is an interpolant of \( G(s) \). Such an interpolant can be generated by O-PORK, which guarantees stability, and \( T_{w} \) can be constructed from this interpolant by solving the Sylvester equation:
\begin{align}
(S_w-L_w\zeta-\hat{W}^*BR_{\text{PR}}^{-1}\zeta)^*T_{w}-T_{w}S_w^*+\zeta^*R_{\text{PR}}^{-\frac{1}{2}}L_w^T=0.\label{Tw_pr}
\end{align}
An immediate consequence is that positive-realness can be enforced non-intrusively on the ROM produced by O-PORK, using only samples of \( G(s) \) at \( \mu_i \) and \( G(\infty) \).

Similar to NI-ADI-BT \citep{zulfiqar2025non}, the ADI-based non-intrusive approximation of PRBT (NI-ADI-PRBT) can be directly obtained using the results of Theorems \ref{Theorem3} and \ref{Theorem4}. Specifically, by computing Cholesky-like factorizations \(Q_v^{-1} = \hat{L}_p \hat{L}_p^*\) and \(P_w^{-1} = \hat{L}_q \hat{L}_q^*\), and defining the low-rank factors \(\tilde{L}_p = \hat{V} T_v\hat{L}_p\) and \(\tilde{L}_q = \hat{W} T_w\hat{L}_q\), the approximations admit the factorized representations \(P_{\mathrm{PR}}\approx \tilde{L}_p \tilde{L}_p^*\) and \(Q_{\mathrm{PR}}\approx\tilde{L}_q \tilde{L}_q^*\). Substituting the exact Gramian factors \(L_p, L_q\) in BSA with these approximations \(\tilde{L}_p, \tilde{L}_q\) yields NI-ADI-PRBT, which can be implemented using transfer function samples in the right half of the $s$-plane.
\subsection{Non-intrusive ADI-based Approximations of BRBT}\label{sec3_3}
By noting that $D^T(I - DD^T)^{-1} = (I - D^TD)^{-1}D^T$, define the following matrices: $R_p=I-DD^T$, $R_q=I-D^TD$, $R_b=I+D^TR_p^{-1}D$, and $R_c=I+DR_q^{-1}D^T$.
Further, define the transfer functions
\begin{align}
G_{\text{BR}}^{\text{P}}(s)&=C_{\text{BR}}^{\text{P}}(sI-A_{\text{BR}})^{-1}B_{\text{BR}}^{\text{P}},\nonumber\\
G_{\text{BR}}^{\text{Q}}(s)&=C_{\text{BR}}^{\text{Q}}(sI-A_{\text{BR}})^{-1}B_{\text{BR}}^{\text{Q}},\nonumber
\end{align}where
\begin{align}
A_{\text{BR}}&=A+BD^TR_p^{-1}C=A+BR_q^{-1}D^TC,\nonumber\\
B_{\text{BR}}^{\text{P}}&=BR_b^{\frac{1}{2}},\quad C_{\text{BR}}^{\text{P}}=R_p^{-\frac{1}{2}}C,\nonumber\\
B_{\text{BR}}^{\text{Q}}&=BR_q^{-\frac{1}{2}},\quad C_{\text{BR}}^{\text{Q}}=R_c^{\frac{1}{2}}C.\nonumber
\end{align}
Next, define the ROMs  
\begin{align}
\hat{G}_{\text{BR}}^{\text{P}}(s)&=\hat{C}_{\text{BR}}^{\text{P}}(sI-\hat{A}_{\text{BR}})^{-1}\hat{B}_{\text{BR}}^{\text{P}},\nonumber\\
\hat{G}_{\text{BR}}^{\text{Q}}(s)&=\hat{C}_{\text{BR}}^{\text{Q}}(sI-\hat{A}_{\text{BR}})^{-1}\hat{B}_{\text{BR}}^{\text{Q}},\nonumber
\end{align}obtained as
\begin{align}
\hat{A}_{\text{BR}}&=W_{\text{BR}}^*A_{\text{BR}}V_{\text{BR}}=\hat{A}+\hat{B}D^TR_p^{-1}\hat{C}=\hat{A}+\hat{B}R_q^{-1}D^T\hat{C},\nonumber\\
\hat{B}_{\text{BR}}^{\text{P}}&=W_{\text{BR}}^*B_{\text{BR}}^{\text{P}}=\hat{B}R_b^{\frac{1}{2}},\quad \hat{C}_{\text{BR}}^{\text{P}}=C_{\text{BR}}^{\text{P}}V_{\text{BR}}=R_p^{-\frac{1}{2}}\hat{C},\nonumber\\
\hat{B}_{\text{BR}}^{\text{Q}}&=W_{\text{BR}}^*B_{\text{BR}}^{\text{Q}}=\hat{B}R_q^{-\frac{1}{2}},\quad \hat{C}_{\text{BR}}^{\text{Q}}=C_{\text{BR}}^{\text{Q}}V_{\text{BR}}=R_c^{\frac{1}{2}}\hat{C},\nonumber
\end{align}where $W_{\text{BR}}^*V_{\text{BR}}=I$.

The Riccati equations (\ref{P_BR}) and (\ref{Q_BR}) can be rewritten as follows:  
\begin{align}
A_{\text{BR}}P_{\text{BR}}+P_{\text{BR}}A_{\text{BR}}^T+B_{\text{BR}}^{\text{P}}(B_{\text{BR}}^{\text{P}})^*+P_{\text{BR}}(C_{\text{BR}}^{\text{P}})^*C_{\text{BR}}^{\text{P}}P_{\text{BR}}=0,\\
A_{\text{BR}}^TQ_{\text{BR}}+Q_{\text{BR}}A_{\text{BR}}+(C_{\text{BR}}^{\text{Q}})^*C_{\text{BR}}^{\text{Q}}+Q_{\text{BR}}B_{\text{BR}}^{\text{Q}}(B_{\text{BR}}^{\text{Q}})^*Q_{\text{BR}}=0.
\end{align}
By applying Theorem \ref{Theorem1}, the RADI-based approximation of $P_{\text{BR}}$ can be obtained as follows. The projection matrix  
\begin{align}
V_{\text{BR}}=\begin{bmatrix}(\sigma_1I-A_{\text{BR}})^{-1}B_{\text{BR}}^{\text{P}}&\cdots&(\sigma_vI-A_{\text{BR}})^{-1}B_{\text{BR}}^{\text{P}}\end{bmatrix}
\end{align} solves the following Sylvester equation:  
\begin{align}
A_{\text{BR}}V_{\text{BR}}-V_{\text{BR}}S_v+B_{\text{BR}}^{\text{P}}L_v=0.
\end{align}
Then, $\hat{C}_{\text{BR}}^{\text{P}}=C_{\text{BR}}^{\text{P}}V_{\text{BR}}$ can be computed as follows:
\begin{align}
\hat{C}_{\text{BR}}^{\text{P}}=\begin{bmatrix}G_{\text{BR}}^{\text{P}}(\sigma_1)&\cdots&G_{\text{BR}}^{\text{P}}(\sigma_v)\end{bmatrix}.
\end{align}Thereafter, $Q_v$ is computed by solving the Lyapunov equation:
\begin{align}
-S_v^*Q_v-Q_vS_v+L_v^TL_v-(\hat{C}_{\text{BR}}^{\text{P}})^*\hat{C}_{\text{BR}}^{\text{P}}=0.\label{th5_sylv}
\end{align}
The RADI-based approximation of $P_{\text{BR}}$ is given by $V_{\text{BR}} Q_v^{-1} V_{\text{BR}}^*$.

According to Theorem \ref{Theorem1}, the ROM $(\hat{A}_{\text{BR}}=S_v-\hat{B}_{\text{BR}}^{\text{P}}L_v,\hat{B}_{\text{BR}}^{\text{P}}=Q_v^{-1}L_v^T,\hat{C}_{\text{BR}}^{\text{P}}=C_{\text{BR}}^{\text{P}}V_{\text{BR}})$ solves the following projected Riccati equation with $\hat{P}_{\text{BR}} = Q_v^{-1} > 0$:  
\begin{align}
\hat{A}_{\text{BR}}\hat{P}_{\text{BR}}+\hat{P}_{\text{BR}}\hat{A}_{\text{BR}}^*+\hat{B}_{\text{BR}}^{\text{P}}(\hat{B}_{\text{BR}}^{\text{P}})^*+\hat{P}_{\text{BR}}(\hat{C}_{\text{BR}}^{\text{P}})^*\hat{C}_{\text{BR}}^{\text{P}}\hat{P}_{\text{BR}}=0.\label{proj_ricc_Pbr}
\end{align}
Since
\begin{align}
\hat{A}&=S_v-Q_v^{-1}L_v^TL_v-\hat{B}D^TR_p^{-1}\hat{C},\nonumber\\
\hat{B}&=Q_v^{-1}L_v^TR_b^{-\frac{1}{2}},\quad\hat{C}=R_p^{\frac{1}{2}}C_{\text{BR}}^{\text{P}}V_{\text{BR}},\label{ss_BR}
\end{align}
the Riccati equation (\ref{proj_ricc_Pbr}) can be rewritten as  
\begin{align}
\hat{A}\hat{P}_{\text{BR}}+\hat{P}_{\text{BR}}\hat{A}^*+\hat{B}\hat{B}^*+(\hat{P}_{\text{BR}}\hat{C}^*+\hat{B}D^T)R_p^{-1}(\hat{P}_{\text{BR}}\hat{C}^*+\hat{B}D^T)^*=0.\nonumber
\end{align}
\begin{theorem}\label{Theorem5}
Let $G(s)$ be a bounded-real transfer function. Define $\tilde{L}_v=D^TR_p^{-\frac{1}{2}}\hat{C}_{\text{BR}}^{\text{P}}+R_b^{\frac{1}{2}}L_v=\begin{bmatrix}l_{v,1}&\cdots&l_{v,v}\end{bmatrix}$ and $T_{v}=\text{blkdiag}(l_{v,1},\cdots,l_{v,v})$. Assume that:
\begin{enumerate}
\item The interpolation points \((\sigma_1, \dots, \sigma_v)\) are located in the right half of the \( s \)-plane. 
  \item The pair $(S_v,\tilde{L}_v)$ is observable.
  \item The matrix $T_{v}$ is invertible.
  \item The realization $(\hat{A},\hat{B},\hat{C})$ defined in (\ref{ss_BR}) is minimal.
\end{enumerate}Then the following statements hold:
\begin{enumerate}
  \item\label{BRitem1} The projection matrix $V_\text{BR}$ is equal to $\hat{V}T_{v}$ where $\hat{V}$ is as in (\ref{Kry_V}).
  \item\label{BRitem2} The ROM $\hat{G}(s)=\hat{C}(sI-\hat{A})^{-1}\hat{B}+D$ interpolates $G(s)$ at $(\sigma_1,\cdots,\sigma_v)$.
  \item The ROM $\hat{G}(s)$ is bounded-real. 
\end{enumerate}
\end{theorem}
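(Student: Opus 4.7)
The plan is to mirror the proof of Theorem~\ref{Theorem3} step by step, substituting the positive-real quantities $R_{\text{PR}}$ and $\hat{C}_{\text{PR}}$ with the bounded-real quantities $R_p$, $R_b$, and $\hat{C}_{\text{BR}}^{\text{P}}$, and verifying the bounded-real (rather than positive-real) lemma at the end.

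First, to establish claim~\ref{BRitem1}, I would rewrite the Sylvester equation $A_{\text{BR}} V_{\text{BR}} - V_{\text{BR}} S_v + B_{\text{BR}}^{\text{P}} L_v = 0$ in terms of the unshifted matrices $A$ and $B$. Substituting $A_{\text{BR}} = A + B D^T R_p^{-1} C$ and $B_{\text{BR}}^{\text{P}} = B R_b^{1/2}$, and using the identity $C V_{\text{BR}} = R_p^{1/2} \hat{C}_{\text{BR}}^{\text{P}}$ (which follows directly from $\hat{C}_{\text{BR}}^{\text{P}} = R_p^{-1/2} C V_{\text{BR}}$), the mixed term $D^T R_p^{-1} C V_{\text{BR}}$ collapses to $D^T R_p^{-1/2} \hat{C}_{\text{BR}}^{\text{P}}$. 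This yields
\[
A V_{\text{BR}} - V_{\text{BR}} S_v + B \tilde{L}_v = 0
\]
with exactly the $\tilde{L}_v$ defined in the theorem. Since $\hat{V}$ solves $A\hat{V} - \hat{V} S_v + B L_v = 0$ and the block structure of $T_v$ together with the commutativity $S_v T_v = T_v S_v$ gives $\tilde{L}_v = L_v T_v$, uniqueness of the Sylvester solution~\cite{gallivan2004sylvester} forces $V_{\text{BR}} = \hat{V} T_v$. Claim~\ref{BRitem2} then follows for free: the columns of $V_{\text{BR}}$ and $\hat{V}$ span the same rational Krylov subspace, so the interpolation property at $(\sigma_1,\ldots,\sigma_v)$ transfers directly to $\hat{G}(s)$.

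For claim~3 the strategy is to verify the bounded-real lemma. I would start from the projected Riccati equation guaranteed by Theorem~\ref{Theorem1} applied to $(\hat{A}_{\text{BR}}, \hat{B}_{\text{BR}}^{\text{P}}, \hat{C}_{\text{BR}}^{\text{P}})$ with certificate $\hat{P}_{\text{BR}} = Q_v^{-1} > 0$, and then undo the bounded-real change of variables~(\ref{ss_BR}). Expanding $R_b = I + D^T R_p^{-1} D$ to split $\hat{B} R_b \hat{B}^* = \hat{B}\hat{B}^* + \hat{B} D^T R_p^{-1} D \hat{B}^*$, and completing the square on the four $R_p^{-1}$-coupled cross terms produced from $\hat{A}_{\text{BR}} \hat{P}_{\text{BR}}$, $\hat{P}_{\text{BR}} \hat{A}_{\text{BR}}^*$, and $\hat{P}_{\text{BR}} (\hat{C}_{\text{BR}}^{\text{P}})^* \hat{C}_{\text{BR}}^{\text{P}} \hat{P}_{\text{BR}}$, should recover
\[
\hat{A} \hat{P}_{\text{BR}} + \hat{P}_{\text{BR}} \hat{A}^* + \hat{B}\hat{B}^* + (\hat{P}_{\text{BR}} \hat{C}^* + \hat{B} D^T) R_p^{-1} (\hat{P}_{\text{BR}} \hat{C}^* + \hat{B} D^T)^* = 0.
\]
Because $G(s)$ is bounded-real, $R_p = I - DD^T > 0$, so the bounded-real lemma~\cite{phillips2003guaranteed} applied with the positive certificate $\hat{P}_{\text{BR}}$ delivers bounded-realness of $\hat{G}(s)$.

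The main obstacle I anticipate is precisely this completion-of-squares step. Unlike Theorem~\ref{Theorem3}, where the single factor $R_{\text{PR}}$ makes the algebra a one-line identity, here one must reconcile the distinct quantities $R_p$ and $R_b$ (with the side identity $D^T R_p^{-1} = R_q^{-1} D^T$), track signs carefully, and verify that the cross terms generated by the shifted Sylvester equation combine exactly into the Schur-complement-style quadratic form required by the bounded-real lemma. Everything else is routine bookkeeping transported from the positive-real argument.
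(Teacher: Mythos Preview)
Your proposal is correct and follows essentially the same route as the paper: items~\ref{BRitem1} and~\ref{BRitem2} are handled by rewriting the Sylvester equation for $V_{\text{BR}}$ in terms of $(A,B,\tilde{L}_v)$ and invoking the span-equivalence argument of Theorem~\ref{Theorem3}, and item~3 is proved by recasting the projected Riccati equation~(\ref{proj_ricc_Pbr}) in the $(\hat{A},\hat{B},\hat{C})$ coordinates and reading off the bounded-real lemma certificate $\hat{P}_{\text{BR}}=Q_v^{-1}>0$. The completion-of-squares you flag as the main obstacle is in fact stated (without derivation) in the paper immediately \emph{before} Theorem~\ref{Theorem5}, so the paper's own proof simply defines $K_{\text{BR}}=(\hat{P}_{\text{BR}}\hat{C}^*+\hat{B}D^T)R_p^{-1/2}$ and quotes the three bounded-real-lemma identities directly; your plan to carry out that algebra explicitly is a more detailed version of the same step.
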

\begin{proof}
The proofs of \ref{BRitem1} and \ref{BRitem2} are similar to Theorem \ref{Theorem3}; hence, they are omitted for brevity.

By defining $K_{\text{BR}}=(\hat{P}_{\text{BR}}\hat{C}^*+\hat{B}D^T)R_p^{-\frac{1}{2}}$, the following relations hold:
\begin{align}
\hat{A}\hat{P}_{\text{BR}}+\hat{P}_{\text{BR}}\hat{A}^*=-\hat{B}\hat{B}^*-K_{\text{BR}}K_{\text{BR}}^*,\nonumber\\
\hat{P}_{\text{BR}}\hat{C}^*+\hat{B}D^T=-K_{\text{BR}}R_p^{\frac{1}{2}},\nonumber\\
R_p^{\frac{1}{2}}(R_p^{\frac{1}{2}})^*=I-DD^T.\nonumber
\end{align}
Since $\hat{P}_{\text{BR}} > 0$, $\hat{G}(s)$ satisfies the bounded-real lemma \citep{phillips2003guaranteed}.
\end{proof}
We now derive an expression for \( T_{v} \) when \( \hat{V} \) is used to enforce interpolation conditions instead of \( V_{\text{BR}} \). Since \( V_{\text{BR}} \) and \( \hat{V} \) enforce the same interpolation conditions, the following holds:
\begin{align}
\hat{C}_{\text{BR}}^{\text{P}}&=C_{\text{BR}}^{\text{P}}V_{\text{BR}}=\begin{bmatrix}G_{\text{BR}}^{\text{P}}(\sigma_1)&\cdots&G_{\text{BR}}^{\text{P}}(\sigma_v)\end{bmatrix}=\begin{bmatrix}\hat{G}_{\text{BR}}^{\text{P}}(\sigma_1)&\cdots&\hat{G}_{\text{BR}}^{\text{P}}(\sigma_v)\end{bmatrix}\nonumber\\
&=C_{\text{BR}}^{\text{P}}\hat{V}\begin{bsmallmatrix}(\sigma_1 I -S_v+\zeta L_v-\zeta D^TR_p^{-1}C\hat{V})^{-1}\zeta R_b^{\frac{1}{2}}&\cdots&(\sigma_v I -S_v+\zeta L_v-\zeta D^TR_p^{-1}C\hat{V})^{-1}\zeta R_b^{\frac{1}{2}}\end{bsmallmatrix}\nonumber\\
&=C_{\text{BR}}^{\text{P}}\hat{V}T_{v}=R_p^{-\frac{1}{2}}C\hat{V}T_{v},\nonumber
\end{align}where \( \hat{G}(s) = C \hat{V} (sI - S_v + \xi L_v)^{-1} \xi + D \) is an interpolant of \( G(s) \). Such an interpolant can be produced by I-PORK, which is guaranteed to be stable, and \( T_{v} \) can be constructed from this interpolant. Specifically, \( T_v \) is the solution to the following Sylvester equation:
\begin{align}
(S_v-\zeta L_v+\zeta D^TR_p^{-1}C\hat{V})T_{v}-T_{v}S_v+\zeta R_b^{\frac{1}{2}}L_v=0.\label{Tv_br}
\end{align}
An immediate consequence of this observation is that bounded-realness can be enforced non-intrusively on the ROM produced by I-PORK using only samples of \( G(s) \) at \( \sigma_i \) and the sample \( G(\infty) \).

As per Theorem \ref{Theorem2}, the RADI-based approximation of \( Q_{\text{BR}} \) can be obtained as follows. The projection matrix
\begin{align}
W_{\text{BR}}^*=\begin{bmatrix}C_{\text{BR}}^{\text{Q}}(\mu_1I-A_{\text{BR}})^{-1}\\\vdots\\C_{\text{BR}}^{\text{Q}}(\mu_wI-A_{\text{BR}})^{-1}\end{bmatrix}
\end{align} solves the following Sylvester equation:
\begin{align}
A_{\text{BR}}^TW_{\text{BR}}-W_{\text{BR}}S_w^*+(C_{\text{BR}}^{\text{Q}})^*L_w^T=0.
\end{align}
Then, \( \hat{B}_{\text{BR}}^{\text{Q}} = W_{\text{BR}}^* B_{\text{BR}}^{\text{Q}} \) can be computed as:
\begin{align}
\hat{B}_{\text{BR}}^{\text{Q}}=\begin{bmatrix}G_{\text{BR}}^{\text{Q}}(\mu_1)\\\vdots\\G_{\text{BR}}^{\text{Q}}(\mu_w)\end{bmatrix}.
\end{align}Thereafter, \( P_w \) is computed by solving the Lyapunov equation:
\begin{align}
-S_wP_w-P_wS_w^*+L_wL_w^T-\hat{B}_{\text{BR}}^{\text{Q}}(\hat{B}_{\text{BR}}^{\text{Q}})^*=0.\label{th6_sylv}
\end{align}
The RADI-based approximation of $Q_{\text{BR}}$ is given by $W_{\text{BR}} P_w^{-1} W_{\text{BR}}^*$.

According to Theorem \ref{Theorem2}, the ROM \( (\hat{A}_{\text{BR}} = S_w - L_w\hat{C}_{\text{BR}}^{\text{Q}}, \hat{B}_{\text{BR}}^{\text{Q}} = W_{\text{BR}}^*B_{\text{BR}}^{\text{Q}}, \hat{C}_{\text{BR}}^{\text{Q}} = L_w^TP_{w}^{-1}) \) solves the following projected Riccati equation with \( \hat{Q}_{\text{BR}} = P_w^{-1} > 0 \):
\begin{align}
\hat{A}_{\text{BR}}^*\hat{Q}_{\text{BR}}+\hat{Q}_{\text{BR}}\hat{A}_{\text{BR}}+(\hat{C}_{\text{BR}}^{\text{Q}})^*\hat{C}_{\text{BR}}^{\text{Q}}+\hat{Q}_{\text{BR}}\hat{B}_{\text{BR}}^{\text{Q}}(\hat{B}_{\text{BR}}^{\text{Q}})^*\hat{Q}_{\text{BR}}=0.\label{proj_ricc_Qbr}
\end{align}
Since
\begin{align}
\hat{A}&=S_w-L_wL_w^TP_w^{-1}-\hat{B}R_q^{-1}D^T\hat{C},\nonumber\\
\hat{B}&=\hat{W}_{\text{BR}}^*B_{\text{BR}}^{\text{Q}}R_q^{\frac{1}{2}},\quad\hat{C}=R_c^{-\frac{1}{2}}L_w^TP_w^{-1},\label{ss_BR_2}
\end{align}the Riccati equation (\ref{proj_ricc_Qbr}) can be rewritten as:
\begin{align}
\hat{A}^*\hat{Q}_{\text{BR}}+\hat{Q}_{\text{BR}}\hat{A}+\hat{C}^*\hat{C}+(\hat{B}^*\hat{Q}_{\text{BR}}+D^T\hat{C})^*R_q^{-1}(\hat{B}^*\hat{Q}_{\text{BR}}+D^T\hat{C})=0.
\end{align}
\begin{theorem}\label{Theorem6}
Let $G(s)$ be a bounded-real transfer function. Define $\tilde{L}_w^*=D^TR_p^{-\frac{1}{2}}\hat{B}_{\text{BR}}^{\text{Q}}+R_c^{\frac{1}{2}}L_w^T=\begin{bmatrix}l_{w,1}^*&\cdots&l_{w,w}^*\end{bmatrix}$ and $T_{w}=\text{blkdiag}(l_{w,1}^*,\cdots,l_{w,w}^*)$. Assume that:
\begin{enumerate}
\item The interpolation points $(\mu_1,\cdots,\mu_w)$ are located in the right half of the $s$-plane.
  \item The pair $(S_w,\tilde{L}_w)$ is controllable.
  \item The matrix $T_{w}$ is invertible.
  \item The realization $(\hat{A},\hat{B},\hat{C})$ defined in (\ref{ss_BR_2}) is minimal.
\end{enumerate}Then the following statements hold:
\begin{enumerate}
  \item The projection matrix $W_\text{BR}$ is equal to $\hat{W}T_{w}$ where $\hat{W}$ is as in (\ref{Kry_W}).
  \item The ROM $\hat{G}(s)=\hat{C}(sI-\hat{A})^{-1}\hat{B}+D$ interpolates $G(s)$ at $(\mu_1,\cdots,\mu_w)$.
  \item The ROM $\hat{G}(s)$ is bounded-real. 
\end{enumerate}
\end{theorem}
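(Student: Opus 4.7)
The plan is to dualize the argument used for Theorem \ref{Theorem5}, in exactly the same way that Theorem \ref{Theorem4} dualizes Theorem \ref{Theorem3}. First, I would rewrite the Sylvester equation defining $W_{\text{BR}}$,
\[
A_{\text{BR}}^T W_{\text{BR}} - W_{\text{BR}} S_w^* + (C_{\text{BR}}^{\text{Q}})^* L_w^T = 0,
\]
by substituting $A_{\text{BR}} = A + B R_q^{-1} D^T C$ and $C_{\text{BR}}^{\text{Q}} = R_c^{\frac{1}{2}} C$. Collecting the $C^T$ terms on the left yields an equation of the form $A^T W_{\text{BR}} - W_{\text{BR}} S_w^* + C^T \tilde{L}_w = 0$. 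Using the identity $D R_q^{-1/2} = R_p^{-1/2} D$ (which follows from the SVD of $D$) together with the definition $\hat{B}_{\text{BR}}^{\text{Q}} = W_{\text{BR}}^* B R_q^{-1/2}$, the resulting $\tilde{L}_w$ must agree, after taking the conjugate transpose, with the matrix $\tilde{L}_w^* = D^T R_p^{-1/2} \hat{B}_{\text{BR}}^{\text{Q}} + R_c^{\frac{1}{2}} L_w^T$ stated in the theorem. Comparing the rearranged equation with (\ref{sylv_W}), which defines $\hat{W}$, then forces $W_{\text{BR}} = \hat{W} T_w$ with $T_w$ the block-diagonal matrix built from the columns of $\tilde{L}_w^*$; controllability of $(S_w,\tilde{L}_w)$ and invertibility of $T_w$ guarantee that the solution has this structure. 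This establishes item (1).

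With (1) in hand, item (2) is immediate: the columns of $W_{\text{BR}}$ and $\hat{W}$ span the same subspace because $T_w$ is nonsingular, so the two projection matrices enforce identical interpolation conditions at $(\mu_1, \dots, \mu_w)$ by \cite{gallivan2004sylvester}. Hence $\hat{G}(s) = \hat{C}(sI-\hat{A})^{-1}\hat{B} + D$ with $(\hat{A},\hat{B},\hat{C})$ as in (\ref{ss_BR_2}) interpolates $G(s)$ at those points.

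For item (3), I would invoke the bounded-real lemma. By Theorem \ref{Theorem2}, $\hat{Q}_{\text{BR}} = P_w^{-1} > 0$ solves the projected Riccati equation (\ref{proj_ricc_Qbr}); substituting the expressions (\ref{ss_BR_2}) rewrites it as the algebraic Riccati equation already displayed just before the theorem, namely
\[
\hat{A}^* \hat{Q}_{\text{BR}} + \hat{Q}_{\text{BR}} \hat{A} + \hat{C}^*\hat{C} + (\hat{B}^*\hat{Q}_{\text{BR}} + D^T \hat{C})^* R_q^{-1} (\hat{B}^*\hat{Q}_{\text{BR}} + D^T \hat{C}) = 0.
\]
Setting $K_{\text{BR}} = R_q^{-\frac{1}{2}}(\hat{B}^*\hat{Q}_{\text{BR}} + D^T\hat{C})$ and using $R_q^{\frac{1}{2}}(R_q^{\frac{1}{2}})^* = I - D^T D$ yields exactly the three conditions of the (dual) bounded-real lemma \cite{phillips2003guaranteed}; combined with $\hat{Q}_{\text{BR}} > 0$, this gives bounded-realness of $\hat{G}(s)$.

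The main obstacle I expect is the first step: carefully bookkeeping the Sylvester rearrangement so that $\tilde{L}_w$ matches the theorem statement verbatim, which hinges on the commutation identity $D R_q^{-\alpha} = R_p^{-\alpha} D$ for $\alpha \in \{\tfrac12, 1\}$ and on correctly recognizing $B^T W_{\text{BR}}$ in terms of $\hat{B}_{\text{BR}}^{\text{Q}}$. Once that bookkeeping is done, the remaining arguments are direct duals of those in Theorem \ref{Theorem5} and require only mechanical checking.
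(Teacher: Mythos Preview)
Your proposal is correct and matches the paper's own approach: the paper simply states that the proof is similar to that of Theorem~\ref{Theorem5} and omits it, which is precisely the dualization you outline. Your identification of the Sylvester rearrangement and the commutation identity $D R_q^{-1/2}=R_p^{-1/2}D$ as the only nontrivial bookkeeping step is accurate, and the bounded-real lemma argument via $K_{\text{BR}}$ and $\hat{Q}_{\text{BR}}=P_w^{-1}>0$ is the exact dual of the paper's proof of item~(3) in Theorem~\ref{Theorem5}.
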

\begin{proof}
The proof is similar to that of Theorem \ref{Theorem5} and hence omitted for brevity.
\end{proof}
We now derive an expression for \( T_{w} \) when \(\hat{W}\) is used to enforce interpolation conditions instead of \( W_{\text{BR}} \). Since \( W_{\text{BR}} \) and \(\hat{W}\) enforce identical interpolation conditions, the following holds:
\begin{align}
\hat{B}_{\text{BR}}^\text{Q}&=W_{\text{BR}}^*B_{\text{BR}}^\text{Q}=\begin{bmatrix}G_{\text{BR}}^\text{Q}(\mu_1)\\\vdots\\G_{\text{BR}}^\text{Q}(\mu_w)\end{bmatrix}=\begin{bmatrix}\hat{G}_{\text{BR}}^\text{Q}(\mu_1)\\\vdots\\\hat{G}_{\text{BR}}^\text{Q}(\mu_w)\end{bmatrix}\nonumber\\
&=\begin{bmatrix}R_c^{\frac{1}{2}}\zeta(\mu_1I-S_w+L_w\zeta-\hat{W}^*BR_q^{-1}D^T\zeta)^{-1}\\\vdots\\R_c^{\frac{1}{2}}\zeta(\mu_wI-S_w+L_w\zeta-\hat{W}^*BR_q^{-1}D^T\zeta)^{-1}\end{bmatrix}\hat{W}^*B_{\text{BR}}^\text{Q},\nonumber\\
&=T_{w}^*\hat{W}^*BR_{q}^{-\frac{1}{2}},\nonumber
\end{align}
where \(\hat{G}(s) = \zeta (sI - S_w + L_w \zeta)^{-1} \hat{W}^* B + D\) is an interpolant of \( G(s) \). This interpolant can be produced by O-PORK, which is guaranteed to be stable, and \( T_{w} \) can be constructed from it. Specifically, \( T_w \) is obtained by solving the Sylvester equation:
\begin{align}
(S_w-L_w\zeta+\hat{W}^*BR_q^{-1}D^T\zeta)^*T_{w}-T_{w}S_w^*+\zeta^*R_c^{\frac{1}{2}}L_w^T=0.\label{Tw_br}
\end{align}
An immediate consequence is that bounded-realness can be enforced non-intrusively on the ROM produced by O-PORK, using only samples of \( G(s) \) at \( \mu_i \) and \( G(\infty) \).

Similar to NI-ADI-BT \citep{zulfiqar2025non}, the ADI-based non-intrusive approximation of BRBT (NI-ADI-BRBT) can be directly obtained using the results of Theorems \ref{Theorem5} and \ref{Theorem6}. Specifically, by computing Cholesky-like factorizations \(Q_v^{-1} = \hat{L}_p \hat{L}_p^*\) and \(P_w^{-1} = \hat{L}_q \hat{L}_q^*\), and defining the low-rank factors \(\tilde{L}_p = \hat{V} T_v\hat{L}_p\) and \(\tilde{L}_q = \hat{W} T_w\hat{L}_q\), the approximations admit the factorized representations \(P_{\mathrm{BR}}\approx \tilde{L}_p \tilde{L}_p^*\) and \(Q_{\mathrm{BR}}\approx\tilde{L}_q \tilde{L}_q^*\). Substituting the exact Gramian factors \(L_p, L_q\) in BSA with these approximations \(\tilde{L}_p, \tilde{L}_q\) yields NI-ADI-BRBT, which can be implemented using transfer function samples in the right half of the $s$-plane.
\subsection{Non-intrusive ADI-based Approximation of SWBT}\label{sec3_4}
Let us define the transfer function
\begin{align}
G_{\text{SW}}(s)=C_{\text{SW}}(sI-A_{\text{SW}})^{-1}B,\nonumber
\end{align} where $A_{\text{SW}}=A-BC_{\text{SW}}$ and $C_{\text{SW}}=D^{-1}C$. Next, define the ROM $\hat{G}_{\text{SW}}(s)$
\begin{align}
\hat{G}_{\text{SW}}(s)=\hat{C}_{\text{SW}}(sI-\hat{A}_{\text{SW}})^{-1}\hat{B},\nonumber
\end{align} obtained as
\begin{align}
\hat{A}_{\text{SW}}&=W_{\text{SW}}^*A_{\text{SW}}V_{\text{SW}}=\hat{A}-\hat{B}D^{-1}\hat{C},\nonumber\\
\hat{B}&=W_{\text{SW}}^*B,\quad \hat{C}_{\text{SW}}=C_{\text{SW}}V_{\text{SW}}=D^{-1}\hat{C},\nonumber
\end{align}where $W_{\text{SW}}^*V_{\text{SW}}=I$.

The Lyapunov equation (\ref{Q_SW}) can be rewritten as:
\begin{align}
A_{\text{SW}}^TQ_{\text{SW}}+Q_{\text{SW}}A_{\text{SW}}+C_{\text{SW}}^TC_{\text{SW}}=0.
\end{align}
The O-PORK-based approximation of $Q_{\text{SW}}$ (equivalent to the ADI method) is computed as follows. First, solve the Sylvester equation for the projection matrix $W_{\text{SW}}$:
\begin{align}
A_{\text{SW}}^TW_{\text{SW}}-W_{\text{SW}}S_w^*+C_{\text{SW}}^TL_w^T=0.
\end{align}
Then, compute $P_w$ by solving the Lyapunov equation (\ref{pork_ps}). The ADI-based approximation of $Q_{\text{SW}}$ is obtained as: $Q_{\text{S}}\approx W_{\text{SW}}P_w^{-1}W_{\text{SW}}^*$. The following ROM $\hat{G}(s)$ can be recovered from the ROM $\hat{G}_{\text{SW}}(s)=L_w^TP_{w}^{-1}(sI-S_w+L_wL_w^TP_w^{-1})^{-1}W_{\text{SW}}^*B$ produced by O-PORK:
\begin{align}
\hat{A}&=S_w+L_wL_w^TP_w^{-1}+W_{\text{S}}^*BL_w^TP_w^{-1},\nonumber\\
\hat{B}&=W_{\text{SW}}^*B,\quad \hat{C}=DL_w^TP_w^{-1}.\label{ss_SW}
\end{align}
\begin{theorem}\label{Theorem7}
Let $G(s)$ be a square stable minimum phase transfer function. Define $\tilde{L}_w^*=D^{-T}(L_w^T-\hat{B}^*)=\begin{bmatrix}l_{w,1}^*&\cdots&l_{w,w}^*\end{bmatrix}$ and $T_{w}=\text{blkdiag}(l_{w,1}^*,\cdots,l_{w,w}^*)$. Assume that:
\begin{enumerate}
\item The interpolation points $(\mu_1,\cdots,\mu_w)$ are located in the right half of the $s$-plane.
  \item The pair $(S_w,\tilde{L}_w)$ is controllable.
  \item The matrix $T_{w}$ is invertible.
  \item The realization $(\hat{A},\hat{B},\hat{C})$ defined in (\ref{ss_SW}) is minimal.
\end{enumerate}Then the following statements hold:
\begin{enumerate}
  \item\label{SWitem1} The projection matrix $W_\text{SW}$ is equal to $\hat{W}T_{w}$ where $\hat{W}$ is as in (\ref{Kry_W}).
  \item\label{SWitem2} The ROM $\hat{G}(s)=\hat{C}(sI-\hat{A})^{-1}\hat{B}+D$ interpolates $G(s)$ at $(\mu_1,\cdots,\mu_w)$.
  \item The ROM $\hat{G}(s)$ is minimum phase. 
\end{enumerate}
\end{theorem}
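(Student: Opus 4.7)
The plan is to follow the same three-part template that was used for Theorems \ref{Theorem3}--\ref{Theorem6}, specialized to SW-BT, where only the observability-side matrix is weighted. The adjustments are that the weighting is via $C_{\text{SW}} = D^{-1}C$ in a Lyapunov (rather than Riccati) equation, and that the structural property to certify is minimum-phaseness rather than positive- or bounded-realness.

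First I would rewrite the Sylvester equation $A_{\text{SW}}^T W_{\text{SW}} - W_{\text{SW}} S_w^* + C_{\text{SW}}^T L_w^T = 0$ in the original coordinates. Substituting $A_{\text{SW}} = A - B D^{-1} C$ and $C_{\text{SW}} = D^{-1} C$, and absorbing the $B^T W_{\text{SW}}$ contribution via $\hat{B} = W_{\text{SW}}^* B$, one obtains
\begin{align}
A^T W_{\text{SW}} - W_{\text{SW}} S_w^* + C^T \tilde{L}_w^* = 0,\nonumber
\end{align}
with $\tilde{L}_w^* = D^{-T}(L_w^T - \hat{B}^*)$ as in the statement. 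Next I would verify that $\hat{W} T_w$ solves this same Sylvester equation by right-multiplying (\ref{sylv_W}) by $T_w$. The two identities to check are (i) $S_w^* T_w = T_w S_w^*$, which is immediate because $S_w^*$ is block-diagonal with scalar blocks $\mu_i^* I$ while $T_w$ is block-diagonal with matching block pattern; and (ii) $L_w^T T_w = \tilde{L}_w^*$, which holds by construction since $L_w^T = [I,\dots,I]$ concatenates the blocks $l_{w,i}^*$ of $T_w$ into exactly $\tilde{L}_w^*$. Under the assumed observability of $(S_w, \tilde{L}_w)$ and invertibility of $T_w$, the Sylvester solution is unique, giving claim~\ref{SWitem1}. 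Claim~\ref{SWitem2} is then an immediate corollary: $W_{\text{SW}}$ and $\hat{W}$ span the same subspace, so by \cite{gallivan2004sylvester} they induce the same interpolant, yielding $G(\mu_i) = \hat{G}(\mu_i)$.

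For the minimum-phase claim, I would argue via the transmission zeros. Since $D$ is square and invertible, $\hat{G}(s)$ is minimum phase if and only if $\hat{A} - \hat{B} D^{-1} \hat{C} = \hat{A}_{\text{SW}}$ is Hurwitz. Substituting the O-PORK expressions into (\ref{ss_SW}) gives $\hat{A}_{\text{SW}} = S_w - L_w L_w^T P_w^{-1}$, which is exactly the O-PORK reduced matrix associated with the Lyapunov pair $(S_w, L_w)$ and with shifts $(-\mu_1,\dots,-\mu_w)$. By the standard PORK spectrum identity already invoked in Theorem \ref{Theorem2}, its eigenvalues are $(-\mu_1^*,\dots,-\mu_w^*)$, which lie in the open left half-plane by assumption~1. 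Hence $\hat{G}(s)$ is minimum phase.

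The main obstacle is the algebraic bookkeeping in the first step: cleanly reshuffling the $B^T W_{\text{SW}}$ contribution into the $\hat{B}^*$ term and verifying the compatibility of the block structures of $T_w$, $S_w^*$, and $L_w^T$ so that the Sylvester equations for $W_{\text{SW}}$ and $\hat{W} T_w$ coincide exactly. Once that identification is in place, the interpolation statement is a one-line subspace argument and the minimum-phase statement follows from the PORK eigenvalue-placement property that was already used for the LQG-BT, PR-BT, and BR-BT cases.
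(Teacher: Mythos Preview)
Your proposal is correct and follows essentially the same approach as the paper: rewrite the Sylvester equation for $W_{\text{SW}}$ in the original $(A,C)$ coordinates to identify $W_{\text{SW}}=\hat{W}T_w$, invoke the subspace argument of \cite{gallivan2004sylvester} for interpolation, and then observe that $\hat{A}-\hat{B}D^{-1}\hat{C}=S_w-L_wL_w^TP_w^{-1}$ has the O-PORK spectrum $(-\mu_1^*,\dots,-\mu_w^*)$. One small correction: uniqueness of the Sylvester solution comes from the spectral separation of $A^T$ (stable) and $S_w^*$ (anti-stable), not from observability of $(S_w,\tilde{L}_w)$; the observability assumption is used instead to guarantee that $W_{\text{SW}}$ has full column rank.
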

\begin{proof}
The proof of \ref{SWitem1} and \ref{SWitem2} is similar to that of Theorem \ref{Theorem4} and hence omitted for brevity.

The zeros of $\hat{G}(s)$, i.e., the eigenvalues of $\hat{A}_{\text{SW}}=\hat{A}-\hat{B}D^{-1}\hat{C}$, are $(-\mu_1^*,\cdots,-\mu_w^*)$ in O-PORK. Since the interpolation points in O-PORK are in the right half of the $s$-plane, $\hat{A}_{\text{SW}}$ is Hurwitz and $\hat{G}(s)$ is minimum phase.  
\end{proof}
We now derive an expression for $T_{w}$ when $\hat{W}$ is used for enforcing interpolation conditions instead of $W_{\text{SW}}$. Since $W_{\text{SW}}$ and $\hat{W}$ enforce the same interpolation conditions, the following holds:  
\begin{align}
\hat{B}&=W_{\text{SW}}^*B=\begin{bmatrix}G_{\text{SW}}(\mu_1)\\\vdots\\G_{\text{SW}}(\mu_w)\end{bmatrix}=\begin{bmatrix}\hat{G}_{\text{SW}}(\mu_1)\\\vdots\\\hat{G}_{\text{SW}}(\mu_w)\end{bmatrix}\nonumber\\
&=\begin{bmatrix}D^{-1}\zeta(\mu_1I-S_w+L_w\zeta+\hat{W}^*BD^{-1}\zeta)^{-1}\\\vdots\\D^{-1}\zeta(\mu_wI-S_w+L_w\zeta+\hat{W}^*BD^{-1}\zeta)^{-1}\end{bmatrix}\hat{W}^*B,\nonumber\\
&=T_{w}^*\hat{W}^*B,\nonumber
\end{align}
where $\hat{G}(s)=\zeta(sI-S_w+L_w\zeta)^{-1}\hat{W}^*B+D$ is the interpolant of $G(s)$. Such an interpolant can be produced by O-PORK, which is guaranteed to be stable, and $T_{w}$ can be constructed from this interpolant. $T_w$ can be computed by solving the following Sylvester equation:  
\begin{align}
\big(S_w-L_w\zeta-\hat{W}^*BD^{-1}\zeta\big)^*T_{w}-T_{w}S_w^*+\zeta^*D^{-T}L_w^T=0.\label{Tw_sw}
\end{align}
An immediate consequence of this observation is that the minimum-phase property can be enforced on the ROM nonintrusively produced by O-PORK using only samples of $G(s)$ at $\mu_i$ and the sample $G(\infty)$.

Similar to NI-ADI-BT \citep{zulfiqar2025non}, the ADI-based non-intrusive approximation of SWBT (NI-ADI-SWBT) can be directly obtained using the results of Theorem \ref{Theorem7}. Specifically, by computing Cholesky-like factorizations \(Q_v^{-1} = \hat{L}_p \hat{L}_p^*\) and \(P_w^{-1} = \hat{L}_q \hat{L}_q^*\), and defining the low-rank factors \(\tilde{L}_p = \hat{V}\hat{L}_p\) and \(\tilde{L}_q = \hat{W} T_w\hat{L}_q\), the approximations admit the factorized representations \(P\approx\tilde{L}_p \tilde{L}_p^*\) and \(Q_{\mathrm{SW}}\approx\tilde{L}_q \tilde{L}_q^*\). Substituting the exact Gramian factors \(L_p, L_q\) in BSA with these approximations \(\tilde{L}_p, \tilde{L}_q\) yields NI-ADI-SWBT, which can be implemented using transfer function samples in the right half of the $s$-plane.
\subsection{Non-intrusive ADI-based Approximation of BST}\label{sec3_5}
Let us define the transfer function
\begin{align}
G_{\text{S}}(s)=C_{\text{S}}(sI-A_{\text{S}})^{-1}B_{\text{S}},\nonumber
\end{align}where 
\begin{align}
A_{\text{S}}&=A-B_{\text{S}}C_{\text{S}},& B_{\text{S}}&=(PC^T+BD^T)R_\text{S}^{-\frac{1}{2}},\nonumber\\
C_{\text{S}}&=R_\text{S}^{-\frac{1}{2}}C,& R_\text{S}&=DD^T.\nonumber
\end{align}
The Riccati equation (\ref{Q_S}) can be rewritten as follows:
\begin{align}
A_{\text{S}}^TQ_{\text{S}}+Q_{\text{S}}A_{\text{S}}+C_{\text{S}}^TC_{\text{S}}+Q_{\text{S}}B_{\text{S}}B_{\text{S}}^TQ_{\text{S}}=0.
\end{align}
Unlike the Riccati equations considered up until now, the RADI-based approximation of this Riccati equation is a bit more involved. We first need to replace $B_{\text{S}}$ with its approximation. Let us replace $P$ in $B_{\text{S}}$ with $\hat{V}Q_v^{-1}\hat{V}^*$, which is an ADI-based approximation of $P$ obtained via I-PORK. Then, $B_{\text{S}}\approx\tilde{B}_{\text{S}}=(\hat{V}Q_v^{-1}\hat{V}^*C^T+BD^T)R_\text{S}^{-\frac{1}{2}}$ and $A_{\text{S}}\approx\tilde{A}_{\text{S}}=A-\tilde{B}_{\text{S}}C_{\text{S}}$. Thereafter, we proceed with the RADI-based approximation of the following Riccati equation:
\begin{align}
\tilde{A}_{\text{S}}^*\tilde{Q}_{\text{S}}+\tilde{Q}_{\text{S}}\tilde{A}_{\text{S}}+C_{\text{S}}^*C_{\text{S}}+\tilde{Q}_{\text{S}}\tilde{B}_{\text{S}}\tilde{B}_{\text{S}}^*\tilde{Q}_{\text{S}}=0.\label{QtS}
\end{align}The RADI-based approximation of $\tilde{Q}_{\text{S}}$ can be obtained using Theorem \ref{Theorem2} as follows. The projection matrix $W_\text{S}$ is computed by solving the following Sylvester equation:  
\begin{align}
\tilde{A}_{\text{S}}^*W_\text{S}-W_\text{S}S_w^*+C_{\text{S}}^TL_w^T=0.
\end{align}
The matrix $P_w$ is computed by solving the following Sylvester equation:  
\begin{align}
-S_wP_w-P_wS_w^*+L_wL_w^T-\hat{B}_\text{S}\hat{B}_\text{S}^*=0,\label{bst_sylv}
\end{align}where $\hat{B}_\text{S}=W_\text{S}^*\tilde{B}_\text{S}$. The RADI-based approximation of $\tilde{Q}_{\text{S}}$ is given by $W_\text{S} P_w^{-1} W_\text{S}^*$.

Once again, it can be shown that $W_\text{S}$ and $\hat{W}$ in (\ref{Kry_W}) enforce the same interpolation conditions and $W_\text{S} = \hat{W} T_{w}$, where $T_{w}$ is invertible. The proof is similar and hence omitted here for brevity.

Let us define the ROM
\begin{align}
\hat{G}_{\text{S}}(s)=\hat{C}_{\text{S}}(sI-\hat{A}_{\text{S}})^{-1}\hat{B}_{\text{S}},\nonumber
\end{align} obtained as
\begin{align}
\hat{A}_{\text{S}}=W_{\text{S}}^*\tilde{A}_{\text{S}}V_{\text{S}},\quad \hat{B}_{\text{S}}=W_{\text{S}}^*\tilde{B}_{\text{S}},\quad \hat{C}_{\text{S}}=C_{\text{S}}V_{\text{S}},\nonumber
\end{align}where $W_{\text{S}}^*V_{\text{S}}=I$.

We now derive an expression for $T_{w}$ when $\hat{W}$ is used for enforcing interpolation conditions instead of $W_{\text{S}}$. Since $W_{\text{S}}$ and $\hat{W}$ enforce the same interpolation conditions, the following holds: 
\begin{align}
\hat{B}_{\text{S}}&=W_{\text{S}}^*\tilde{B}_{\text{S}}=\begin{bmatrix}G_{\text{S}}(\mu_1)\\\vdots\\G_{\text{S}}(\mu_w)\end{bmatrix}=\begin{bmatrix}\hat{G}_{\text{S}}(\mu_1)\\\vdots\\\hat{G}_{\text{S}}(\mu_w)\end{bmatrix}\nonumber\\
&=\begin{bmatrix}R_{\text{S}}^{-\frac{1}{2}}\zeta\big(\mu_1I-S_w+L_w\zeta+(\hat{W}^*\hat{V}Q_v^{-1}\hat{V}^*C^T+\hat{W}^*BD^T)R_{\text{S}}^{-1}\zeta\big)^{-1}\\\vdots\\R_{\text{S}}^{-\frac{1}{2}}\zeta\big(\mu_wI-S_w+L_w\zeta+(\hat{W}^*\hat{V}Q_v^{-1}\hat{V}^*C^T+\hat{W}^*BD^T)R_{\text{S}}^{-1}\zeta\big)^{-1}\end{bmatrix}\hat{W}^*\tilde{B}_{\text{S}},\nonumber\\
&=T_{w}^*\hat{W}^*\tilde{B}_{\text{S}},\nonumber\\
&=T_{w}^*\big((\hat{W}^*\hat{V})Q_v^{-1}(C\hat{V})^*+(\hat{W}^*B)D^T\big)R_{\text{S}}^{-\frac{1}{2}},\nonumber
\end{align}
where $\hat{G}(s) = \zeta (sI - S_w + L_w \zeta)^{-1} \hat{W}^* B + D$ is the interpolant of $G(s)$. Such an interpolant can be produced by O-PORK, which is guaranteed to be stable. $T_{w}$ can be constructed from this interpolant by solving the following Sylvester equation:  
\begin{align}
\big(S_w-L_w\zeta-[(\hat{W}^*\hat{V})Q_v^{-1}(C\hat{V})^*+(\hat{W}^*B)D^T]&R_{\text{S}}^{-1}\zeta\big)^*T_{w}\nonumber\\
&-T_{w}S_w^*+\zeta^*R_{\text{S}}^{-\frac{1}{2}}L_w^T=0.\label{Tw_bst}
\end{align}

Similar to NI-ADI-BT \citep{zulfiqar2025non}, the ADI-based non-intrusive approximation of BST (NI-ADI-BST) can be obtained readily. Specifically, by computing Cholesky-like factorizations \(Q_v^{-1} = \hat{L}_p \hat{L}_p^*\) and \(P_w^{-1} = \hat{L}_q \hat{L}_q^*\), and defining the low-rank factors \(\tilde{L}_p = \hat{V} \hat{L}_p\) and \(\tilde{L}_q = \hat{W} T_w\hat{L}_q\), the approximations admit the factorized representations \(P\approx\tilde{L}_p \tilde{L}_p^*\) and \(Q_{\mathrm{S}}\approx\tilde{L}_q \tilde{L}_q^*\). Substituting the exact Gramian factors \(L_p, L_q\) in BSA with these approximations \(\tilde{L}_p, \tilde{L}_q\) yields NI-ADI-BST, which can be implemented using transfer function samples in the right half of the $s$-plane.
\subsection{Numerical Issues and Their Solutions}\label{sec3_6}
The ADI-based non-intrusive approximations presented in this section involve the inverses \( Q_v^{-1} \) and \( P_w^{-1} \), which can cause numerical problems as the number of samples used for implementation increases. Specifically, as the number of samples increases, the matrices \( Q_v \) and \( P_w \) start to lose numerical rank. Note that these matrices do not appear explicitly in the original low-rank ADI algorithms. This section addresses this issue by providing an appropriate choice of interpolation points that guarantee the invertibility of these matrices.

Proposition \ref{prop_A_diag} establishes that when $\sigma_i=\epsilon+j\omega_i$, the free parameter $\zeta$ in the interpolant $\hat{G}(s)=C\hat{V}(sI-S_v+\zeta L_v)^{-1}\zeta+D$ can be set to enforce a modal structure on $\hat{A}=S_v-\zeta L_v$ such that it has poles at $-\epsilon+j\omega_i$. The motivation for enforcing this structure is to benefit from the interesting properties of modal state-space realizations with lightly damped poles, wherein the modes are effectively decoupled from each other leading to block-diagonally dominant solutions to several matrix equations appearing in MOR and control design; see \citep{jonckheere1984principal,gawronski2004dynamics,gawronski2006balanced} for details.
\begin{proposition}\label{prop_A_diag}
Let $\epsilon > 0$ and let distinct frequencies $\omega_1, \dots, \omega_v \in \mathbb{R}$ be given such that $\omega_i \neq 0$ for all $i=1,\dots,v$. Define the minimum frequency magnitude as $\omega_{\min} := \min_{i} |\omega_i| > 0$. Consider the matrices
\begin{align*}
    S_v &= \operatorname{diag}(\epsilon+j\omega_1, \dots, \epsilon+j\omega_v) \otimes I_m,& L_v &= \mathbf{1}_v^T \otimes I_m, \\
    \zeta &= (2\epsilon \mathbf{1}_v) \otimes I_m,& \hat{A}_d &= \operatorname{diag}(- \epsilon+j\omega_1, \dots, - \epsilon+j\omega_v) \otimes I_m,
\end{align*}
where $\mathbf{1}_v \in \mathbb{R}^v$ is the column vector of ones. Let us decompose $\hat{A}=S_v - \zeta L_v$ into its block-diagonal part $\hat{A}_d$ and off-diagonal remainder $\hat{E}$ such that $\hat{A} = \hat{A}_d + \hat{E}$. Then, the following properties hold:
\begin{enumerate}
    \item For any $v \ge 1$, the spectral norm of the off-diagonal approximation error is exactly
    \begin{equation}
        \|\hat{A} - \hat{A}_d\|_2 = 2\epsilon(v-1).
    \end{equation}
    \item For any $v \ge 2$, $\hat{A}$ is strictly block diagonally dominant with respect to the spectral norm provided that
    \begin{equation}
        \epsilon < \frac{\omega_{\min}}{\sqrt{4(v-1)^2 - 1}}.
    \end{equation}
    \item For any desired error tolerance $\delta > 0$ and $v \ge 2$, choosing
    \begin{equation}
        \label{eq:epsilon_bound}
        \epsilon \le \min \left( \frac{\delta}{2(v-1)}, \; \frac{\omega_{\min}}{\sqrt{4(v-1)^2 - 1}} \right)
    \end{equation}
    guarantees that $\|\hat{A} - \hat{A}_d\|_2 \le \delta$ and that $\hat{A}$ is strictly block diagonally dominant.
\end{enumerate}
\end{proposition}
\begin{proof}
The proof is given in Appendix A.
\end{proof}
The next proposition establishes the block diagonal dominance of $Q_v$ in NI-ADI-LQGBT, which solves the Lyapunov equation \eqref{Qv_LQG}.
\begin{proposition}
\label{prop:Qv_dominance}
Let Theorem \ref{Theorem1} and Proposition \ref{prop_A_diag} hold. Define the matrix $M := L_v^* L_v + \hat{C}^* \hat{C}$ and denote its $(i,j)$-th block entry by $M_{ij} \in \mathbb{C}^{m \times m}$. Then:
\begin{enumerate}
    \item The block entries $[Q_v]_{ij}$ of $Q_v$ are given by
    \begin{equation}
        \label{eq:Q_block_sol}
        [Q_v]_{ij} = \frac{M_{ij}}{2\epsilon + j(\omega_j - \omega_i)}.
    \end{equation}
    \item Since $L_v^* L_v$ has identity blocks on the diagonal, $\|M_{ii}\|_2 \ge 1$ for all $i$. There exists an $\epsilon^* > 0$ such that for all $0 < \epsilon < \epsilon^*$, $Q_v$ is strictly block diagonally dominant with respect to the spectral norm. Specifically, dominance holds if
    \begin{equation}
        \label{eq:epsilon_dominance_Q}
        \epsilon < \min_{i} \frac{\|M_{ii}\|_2}{2 \sum_{k \neq i} \frac{\|M_{ik}\|_2}{|\omega_k - \omega_i|}}.
    \end{equation}
    \item As $\epsilon \to 0$, the off-diagonal blocks converge to finite values determined by the frequency separation:
    \begin{equation}
        \label{eq:Q_offdiag_limit}
        \lim_{\epsilon \to 0} [Q_v]_{ij} = \frac{M_{ij}}{j(\omega_j - \omega_i)}, \quad \text{for } i \neq j.
    \end{equation}
    Conversely, the diagonal blocks diverge as $\| [Q_v]_{ii} \|_2 \sim O(\epsilon^{-1})$.
\end{enumerate}
\end{proposition}
\begin{proof}
The proof is given in Appendix B.
\end{proof}
The next proposition establishes the block-diagonal dominance of the matrix $T_v$ appearing in NI-ADI-PRBT and NI-ADI-BRBT.
\begin{proposition}
\label{thm:Tv_structure}
Let Proposition \ref{prop_A_diag} hold. Let $\mathcal{P} \in \mathbb{C}^{m \times vm}$ and $\mathcal{Q} \in \mathbb{C}^{m \times m}$ be matrices independent of $\epsilon$. Decompose $\mathcal{P}$ into block columns $\mathcal{P} = [\mathcal{P}_1, \dots, \mathcal{P}_v]$ with $\mathcal{P}_k \in \mathbb{C}^{m \times m}$. Assume that $(I_m + \mathcal{P}_i)$ is invertible for all $i=1,\dots,v$ and that $\mathcal{Q} \neq 0$. Let $T_v \in \mathbb{C}^{vm \times vm}$ be the unique solution to the Sylvester equation:
\begin{equation}
    \label{eq:Tv_sylvester}
    (S_v - \zeta L_v - \zeta \mathcal{P}) T_v - T_v S_v + \zeta \mathcal{Q} L_v = 0.
\end{equation}
Denote the $(i,j)$-th block of $T_v$ by $[T_v]_{ij} \in \mathbb{C}^{m \times m}$. Then:
\begin{enumerate}
    \item As $\epsilon \to 0$, the diagonal blocks converge to a finite limit, while the off-diagonal blocks vanish linearly with $\epsilon$:
    \begin{align}
        \lim_{\epsilon \to 0} [T_v]_{ii} &= (I_m + \mathcal{P}_i)^{-1} \mathcal{Q}, \label{eq:Tv_diag_limit} \\
        \| [T_v]_{ij} \|_2 &= O(\epsilon), \quad \text{for } i \neq j. \label{eq:Tv_offdiag_limit}
    \end{align}
    \item There exists an $\epsilon^* > 0$ such that for all $0 < \epsilon < \epsilon^*$, $T_v$ is strictly block diagonally dominant with respect to the spectral norm. A sufficient condition for dominance is
    \begin{equation}
        \label{eq:epsilon_dominance_T}
        \epsilon < \min_{i} \frac{\| (I_m + \mathcal{P}_i)^{-1} \mathcal{Q} \|_2}{2 \|\mathcal{Q}\|_2 \sum_{k \neq i} \frac{1}{|\omega_k - \omega_i|}}.
    \end{equation}
    \item The error between $T_v$ and its block-diagonal part $\operatorname{blkdiag}([T_v]_{11}, \dots, [T_v]_{vv})$ satisfies
    \begin{equation}
        \| T_v - \operatorname{blkdiag}(T_v) \|_2 \le K \epsilon,
    \end{equation}
    for some constant $K > 0$ independent of $\epsilon$.
\end{enumerate}
\end{proposition}
\begin{proof}
The proof is given in Appendix C.
\end{proof}
Let $\sigma_i=\epsilon+j\omega_i$ and $\mu_i=\epsilon+j\nu_i$, where $\epsilon$ is determined by Propositions \ref{prop_A_diag}, \ref{prop:Qv_dominance}, and \ref{thm:Tv_structure} to guarantee block diagonal dominance. This selection of interpolation points effectively resolves the invertibility challenges in the proposed ADI-based non-intrusive implementations, ensuring that the matrices $Q_v$, $P_w$, $T_v$, and $T_w$ remain invertible.

If the number of samples of the transfer function to be processed is significantly high such that the Lyapunov and Sylvester equations involved in the proposed non-intrusive implementations become computationally expensive to solve, then one can exploit the block diagonal dominance properties and use the block diagonal approximations instead. As shown in \citep{jonckheere1984principal,gawronski2004dynamics,gawronski2006balanced}, the decoupling of modes in lightly damped modal state-space realizations makes the block diagonal approximations quite effective. By exploiting the block diagonal dominance properties, the non-intrusive ADI-based implementations presented in this section can be implemented as follows.

For NI-ADI-LQGBT:
\begin{align}
Q_v^{-1}&\approx\text{blkdiag}\big(2\epsilon[I_m+H^*(\epsilon+j\omega_1)H(\epsilon+j\omega_1)]^{-1},\cdots,\nonumber\\
&\hspace*{4cm}2\epsilon[I_m+H^*(\epsilon+j\omega_v)H(\epsilon+j\omega_v)]^{-1}\big),\\
P_w^{-1}&\approx\text{blkdiag}\big(2\epsilon[I_p+H(\epsilon+j\nu_1)H^*(\epsilon+j\nu_1)]^{-1},\cdots,\nonumber\\
&\hspace*{4cm}2\epsilon[I_p+H(\epsilon+j\nu_w)H^*(\epsilon+j\nu_w)]^{-1}\big).
\end{align}

For NI-ADI-$\mathcal{H}_\infty$BT:
\begin{align}
Q_v^{-1}&\approx\text{blkdiag}\big(2\epsilon[I_m+(1-\gamma^{-2})H^*(\epsilon+j\omega_1)H(\epsilon+j\omega_1)]^{-1},\cdots,\nonumber\\
&\hspace*{3cm}2\epsilon[I_m+(1-\gamma^{-2})H^*(\epsilon+j\omega_v)H(\epsilon+j\omega_v)]^{-1}\big),\\
P_w^{-1}&\approx\text{blkdiag}\big(2\epsilon[I_p+(1-\gamma^{-2})H(\epsilon+j\nu_1)H^*(\epsilon+j\nu_1)]^{-1},\cdots,\nonumber\\
&\hspace*{3cm}2\epsilon[I_p+(1-\gamma^{-2})H(\epsilon+j\nu_w)H^*(\epsilon+j\nu_w)]^{-1}\big).
\end{align}

For NI-ADI-PRBT:
\begin{align}
Q_v^{-1}&\approx\text{blkdiag}\big(2\epsilon[I_m-t_{v,1}^*H^*(\epsilon+j\omega_1)R_{\text{PR}}^{-1}H(\epsilon+j\omega_1)t_{v,1}]^{-1},\cdots,\nonumber\\
&\hspace*{2cm}2\epsilon[I_m-t_{v,v}^*H^*(\epsilon+j\omega_v)R_{\text{PR}}^{-1}H(\epsilon+j\omega_v)t_{v,v}]^{-1}\big),\\
P_w^{-1}&\approx\text{blkdiag}\big(2\epsilon[I_p-t_{w,1}^*H(\epsilon+j\nu_1)R_{\text{PR}}^{-1}H^*(\epsilon+j\nu_1)t_{w,1}]^{-1},\cdots,\nonumber\\
&\hspace*{2cm}2\epsilon[I_p-t_{w,w}^*H(\epsilon+j\nu_w)R_{\text{PR}}^{-1}H^*(\epsilon+j\nu_w)t_{w,w}]^{-1}\big),\\
T_v&\approx\text{blkdiag}\big(t_{v,1},\cdots,t_{v,v}\big),\\
T_w&\approx\text{blkdiag}\big(t_{w,1},\cdots,t_{w,w}\big),
\end{align}where
\begin{align}
t_{v,i}&=[I_m+R_{\text{PR}}^{-1}H(\epsilon+j\omega_i)]^{-1}R_{\text{PR}}^{-\frac{1}{2}},\\
t_{w,i}&=[I_p+R_{\text{PR}}^{-1}H^*(\epsilon+j\nu_i)]^{-1}R_{\text{PR}}^{-\frac{1}{2}}.
\end{align}

For NI-ADI-BRBT:
\begin{align}
Q_v^{-1}&\approx\text{blkdiag}\big(2\epsilon[I_m-t_{v,1}^*H^*(\epsilon+j\omega_1)R_p^{-1}H(\epsilon+j\omega_1)t_{v,1}]^{-1},\cdots,\nonumber\\
&\hspace*{2cm}2\epsilon[I_m-t_{v,v}^*H^*(\epsilon+j\omega_v)R_p^{-1}H(\epsilon+j\omega_v)t_{v,v}]^{-1}\big),\\
P_w^{-1}&\approx\text{blkdiag}\big(2\epsilon[I_p-t_{w,1}^*H(\epsilon+j\nu_1)R_q^{-1}H^*(\epsilon+j\nu_1)t_{w,1}]^{-1},\cdots,\nonumber\\
&\hspace*{2cm}2\epsilon[I_p-t_{w,w}^*H(\epsilon+j\nu_w)R_q^{-1}H^*(\epsilon+j\nu_w)t_{w,w}]^{-1}\big),\\
T_v&\approx\text{blkdiag}\big(t_{v,1},\cdots,t_{v,v}\big),\\
T_w&\approx\text{blkdiag}\big(t_{w,1},\cdots,t_{w,w}\big),
\end{align}where
\begin{align}
t_{v,i}&=[I_m-D^TR_p^{-1}H(\epsilon+j\omega_i)]^{-1}R_b^{\frac{1}{2}},\\
t_{w,i}&=[I_p-DR_q^{-1}H^*(\epsilon+j\nu_i)]^{-1}R_c^{\frac{1}{2}}.
\end{align}

For NI-ADI-SWBT:
\begin{align}
Q_v^{-1}&\approx\text{blkdiag}\big(2\epsilon I_m,\cdots,2\epsilon I_m\big),\\
P_w^{-1}&\approx\text{blkdiag}\big(2\epsilon I_p,\cdots,2\epsilon I_p\big),\\
T_w&\approx\text{blkdiag}\big([G(\epsilon+j\nu_1)]^{-*},\cdots,[G(\epsilon+j\nu_w)]^{-*}\big).
\end{align}

For NI-ADI-BST:
\begin{align}
Q_v^{-1}&\approx\text{blkdiag}\big(2\epsilon I_m,\cdots,2\epsilon I_m\big),\\
P_w^{-1}&\approx\text{blkdiag}\big(2\epsilon[I_p-t_{w,1}^*\mathscr{H}(\epsilon+j\nu_1)R_{\text{S}}^{-1}\mathscr{H}^*(\epsilon+j\nu_1)t_{w,1}]^{-1},\cdots,\nonumber\\
&\hspace*{2cm}2\epsilon[I_p-t_{w,w}^*\mathscr{H}(\epsilon+j\nu_w)R_{\text{S}}^{-1}\mathscr{H}^*(\epsilon+j\nu_w)t_{w,w}]^{-1}\big),\\
T_w&\approx\text{blkdiag}\big(t_{w,1},\cdots,t_{w,w}\big),
\end{align}
where
\begin{align}
\mathscr{H}(\epsilon+j\nu_i) &= \begin{cases}
\frac{2\epsilon}{j\nu_i-j\omega_i}\big[H(\epsilon+j\omega_i)-H(\epsilon+j\nu_i)\big]H^*(\epsilon+j\omega_i)\nonumber\\
\hspace*{2cm}+H(\epsilon+j\nu_i)D^T & \text{if } \omega_i\neq\nu_i \\
2\epsilon H^{\prime}(\epsilon+j\omega_i)H^*(\epsilon+j\omega_i)+H(\epsilon+j\nu_i)D^T  & \text{if } \omega_i=\nu_i
\end{cases},\\
t_{w,i}&=\Big[R_\text{S}+\mathscr{H}^*(\epsilon+j\nu_i)\Big]^{-1}R_{\text{S}}^\frac{1}{2}.
\end{align}
Note that if these block diagonally dominant approximations are used, the theoretical guarantees on positive-realness, bounded-realness, and the minimum-phase property proved in Theorems \ref{Theorem3}–\ref{Theorem7} are lost, even if, in practice, the interpolant retains these properties.
\begin{remark}
Accuracy in data-driven interpolation depends on factors like the choice of interpolation points, placement of the poles of the interpolant, and noisy data \citep{aumann2025practical,antoulas2017tutorial,karachalios2021loewner}. The approximate non-intrusive BT algorithms proposed here and in the next section inherit all these practical challenges from the Loewner framework. Although the final ROM in a non-intrusive BT variant does not have interpolatory properties in general, it is essentially a compressed version of the Loewner quadruplet $(\hat{W}^*\hat{V},\hat{W}^*A\hat{V},\hat{W}^*B,C\hat{V})$. Thus, it can face all the problems and practical limitations that rational interpolation in the Loewner framework faces.
\end{remark}
\section{Data-driven Projection-based Approximations of Various Generalizations of BT}\label{sec4}
ADI-based approximations of various generalizations of BT are non-intrusive but not data-driven, as samples of $G(s)$ in the right-half $s$-plane cannot be measured directly in an experimental setting. In ADI methods, the shifts must have a non-zero real part; this effectively makes it theoretically impossible to implement ADI-based approximations of various generalizations of BT using measurable data (i.e., samples of $G(s)$ on the imaginary axis). This theoretical limitation is reflected in the computation of the free parameter $\zeta$ within the interpolants $\hat{G}(s)=C\hat{V}(sI-S_v+\zeta L_v)^{-1}\zeta+D$ and $\hat{G}(s) = \zeta(sI - S_w + L_w\zeta)^{-1} \hat{W}^*B + D$, because computing $\zeta$ requires the solution of Lyapunov equations, which only have unique solutions when the interpolation points are in the right-half of the $s$-plane.

Let \(\mathcal{P}\) and \(\mathcal{Q}\) denote the generalized controllability and observability Gramians that arise in various generalizations of BT. The key observation in \citep{goseaQuad} is that if
\[
\mathcal{P}\approx \hat{V}\hat{L}_p\hat{L}_p^*\hat{V}^*, \qquad
\mathcal{Q}\approx \hat{W}\hat{L}_q\hat{L}_q^*\hat{W}^*,
\]
where \(\hat{V}\) and \(\hat{W}\) are defined in \eqref{Kry_V} and \eqref{Kry_W} with all interpolation points \(\sigma_i\) and \(\mu_i\) on the imaginary axis, and the factors \(\hat{L}_p\) and \(\hat{L}_q\) can be computed non-intrusively, then a data-driven implementation of any such generalization of BT follows directly. We note that if \(\mathcal{P}\) and \(\mathcal{Q}\) are approximated via Krylov subspace–based Petrov–Galerkin projection, approximations of the form \(\mathcal{P}\approx \hat{V}\hat{L}_p\hat{L}_p^*\hat{V}^*\) and \(\mathcal{Q}\approx \hat{W}\hat{L}_q\hat{L}_q^*\hat{W}^*\) arise naturally. The spectral factorizations appearing in \citep{reiter2023generalizations} result from rewriting the Riccati equations in these BT variants as integral expressions so that they can be approximated by numerical integration. This manipulation leads to the need for samples of certain spectral factorizations, which cannot be measured in an experimental setting. Since ADI methods are special cases of Krylov subspace–based methods, as noted in the previous section, the corresponding non-intrusive implementations require only samples of \(G(s)\). The requirement to sample in the right half of the \(s\)-plane stems from the theoretical conditions on the ADI shifts; it is not inherent to the Krylov subspace–based projection framework itself. We now present a proof-of-concept example to illustrate that the Gramians \(\mathcal{P}\) and \(\mathcal{Q}\) in the BT generalizations considered here can be approximated using interpolation points \(\sigma_i\) and \(\mu_i\) on the imaginary axis, while the factors \(\hat{L}_p\) and \(\hat{L}_q\) are still computed non-intrusively.\\
 
\noindent\textbf{Illustrative Example}\\
Consider an \(8^{\text{th}}\)-order LTI model with the following state-space realization:
\begin{align}
A &=\begin{bsmallmatrix}
-22.1414&  -14.1915&  -35.8543&   -7.8301&   54.2479&   -1.6149&    4.5713&  -47.9895\\
  -1.4098&    6.0485&   -2.0663&   36.2832&   88.6974&   15.7929&   74.5229&  -30.3651\\
  -20.9974&   -3.8320&   -7.5951&  -40.6679&  -71.4159&  -45.2401&  -39.4774&   -4.9186\\
 -107.6001&  -67.2020& -108.3961&  -33.5432&   43.9644&  -77.3467&   21.1433& -109.2904\\
  143.7150&   66.7452&  146.1617&   81.1110&  -20.7467&  135.0017&   12.6135&  158.4851\\
 -101.2836&  -48.3008&  -93.7796&  -47.9424&  -20.5617&  -89.5448&  -28.7377&  -78.5117\\
  129.4943&   22.2846&   56.8624&  112.4968&   85.9715&  148.5772&   61.3542&  106.4938\\
   65.1568&   63.9156&  113.2290&  -42.8674& -170.4269&  -12.2243&  -97.6582&   98.1684
\end{bsmallmatrix},\nonumber\\
B &=\begin{bmatrix}
0.6007 &1.6263&-0.4206&2.5576&-2.1955&1.2682&-0.4758&-3.1936\end{bmatrix}^T,\nonumber\\
C &=\begin{bmatrix}1.9237&1.2498&1.3247&1.5407&1.1059&1.3546&0.9754&1.2928\end{bmatrix},\nonumber\\
D &=0.2378.\nonumber
\end{align}
This model is stable, minimum phase, square, and passive, with \(\|G(s)\|_{\mathcal{H}_\infty}=0.4999\). Hence all generalizations of BT considered in this paper are applicable to this example.

For \(\mathcal{H}_\infty\)BT, set \(\gamma=2\). Let the right interpolation points be \(\sigma_i=(j9.99,-j9.99,j19.99,-j19.99,j29.99,-j29.99)\). Next, construct the interpolant $\hat{G}(s)=C\hat{V}(sI-S_v+\zeta L_v)^{-1}\zeta+D$ from the samples \(\big(G(j9.99),G(j19.99),G(j29.99),G(\infty)\big)\) by choosing the free parameter \(\zeta\) as
\[
\zeta=\begin{bmatrix}1.0075+j0.0417\\1.0075-j0.0417\\1.0080-j0.0792\\1.0080+j0.0792\\0.9845-j0.2113\\0.9845+j0.2113\end{bmatrix}.
\]
Then solve the projected matrix equations by replacing $(A,B,C)$ with $(S_v-\zeta L_v,\zeta,C\hat{V})$ in \eqref{lyap_P}, \eqref{Ricc_LQG_P}, \eqref{Ricc_H_inf_P}, \eqref{P_PR}, and \eqref{P_BR} to compute $\hat{P}$, $\hat{P}_{\mathrm{LQG}}$, $\hat{P}_{\mathcal{H}_\infty}$, $\hat{P}_{\mathrm{PR}}$, and $\hat{P}_{\mathrm{BR}}$, respectively.

Dually, construct the interpolant $\hat{G}(s) = \zeta(sI - S_w + L_w\zeta)^{-1} \hat{W}^*B + D$
from the samples \(\big(G(j10),G(j20),G(j30),G(\infty)\big)\) by choosing
\[
\zeta=\begin{bmatrix}1.0075+j0.0417\\1.0075-j0.0417\\1.0080-j0.0792\\1.0080+j0.0792\\0.9845-j0.2113\\0.9845+j0.2113\end{bmatrix}^T.
\]
Next, solve the projected matrix equations by replacing $(A,B,C)$ with $(S_w-L_w\zeta,\hat{W}^*B,\zeta)$ in \eqref{lyap_Q}, \eqref{Ricc_LQG_Q}, \eqref{Ricc_H_inf_Q}, \eqref{Q_PR}, \eqref{Q_BR}, and \eqref{Q_SW} to compute $\hat{Q}$, $\hat{Q}_{\mathrm{LQG}}$, $\hat{Q}_{\mathcal{H}_\infty}$, $\hat{Q}_{\mathrm{PR}}$, $\hat{Q}_{\mathrm{BR}}$, and $\hat{Q}_{\mathrm{SW}}$, respectively. 
Furthermore, compute $\hat{Q}_{\mathrm{S}}$ by replacing $(\tilde{A}_\mathrm{S},\tilde{B}_\mathrm{S},\tilde{C}_{\mathrm{S}})$ with $\big(S_w-L_w\zeta-\hat{W}^*BR_{\mathrm{S}}^{-1}\zeta,(\hat{W}^*\hat{V}\hat{P}\hat{V}^*C^T+\hat{W}^*BD^T)R_{\mathrm{S}}^{-\frac{1}{2}},R_{\mathrm{S}}^{-\frac{1}{2}}\zeta\big)$ in \eqref{QtS}.

These non-intrusively computed Gramians of the projected state-space realizations are factorized as \(\hat{L}_p \hat{L}_p^*\) and \(\hat{L}_q \hat{L}_q^*\). Thereafter, the respective Gramians \(\mathcal{P}\) and \(\mathcal{Q}\) are replaced with their Petrov–Galerkin projection–based approximations \(\mathcal{P}\approx\hat{V}\hat{L}_p \hat{L}_p^*\hat{V}^*\) and \(\mathcal{Q}\approx\hat{W}\hat{L}_q \hat{L}_q^*\hat{W}^*\), respectively, in BSA, leading to a non-intrusive, data-driven implementations of BT, LQGBT, \(\mathcal{H}_\infty\)BT, PRBT, BRBT, SWBT, and BST constructing the ROMs from the samples \(\big(G(j9.99),G(j10),G(j19.99),G(j20),G(j29.99),G(j30),G(\infty)\big)\). The relative error $\frac{\|G(s)-\hat{G}(s)\|_{\mathcal{H}_\infty}}{\|G(s)\|_{\mathcal{H}_\infty}}$ for order-3 ROMs produced by the intrusive and non-intrusive versions of these BT variants is given in Table \ref{tab1}. The results show that the intrusive and non-intrusive implementations perform indistinguishably.
\begin{table}[!h]
\centering
\caption{Relative Error Comparison for the ROMs}\label{tab1}
\begin{tabular}{|c|c|}
\hline
&\\
Method & $\frac{\|G(s)-\hat{G}(s)\|_{\mathcal{H}_\infty}}{\|G(s)\|_{\mathcal{H}_\infty}}$ \\
&\\ \hline
Intrusive BT     &  $0.4039$\\ 
Non-intrusive BT     &  $0.4039$\\ 
Intrusive LQGBT     &  $0.4037$\\ 
Non-intrusive LQGBT     &  $0.4037$\\ 
Intrusive $\mathcal{H}_\infty$BT     & $0.4038$\\ 
Non-intrusive $\mathcal{H}_\infty$BT     & $0.4037$\\ 
Intrusive PRBT     & $0.4014$\\ 
Non-intrusive PRBT     & $0.4013$\\ 
Intrusive BRBT    & $0.4045$\\ 
Non-intrusive BRBT    & $0.4045$\\ 
Intrusive SWBT     & $0.4014$\\ 
Non-intrusive SWBT     & $0.4014$\\ 
Intrusive BST&0.4014\\
Non-intrusive BST&0.4014\\\hline
\end{tabular}
\end{table}\\

The elegance of ADI methods lies in the fact that they provide a free parameter \(\zeta\) that guarantees the projected Lyapunov equation admits a unique positive-definite solution and the projected Riccati equation admits a unique stabilizing positive-definite solution. This is achieved through their pole placement properties. In particular, LRCF-ADI places the poles of \(\hat{A}\) at the conjugates of ADI shifts \(-\sigma_i\), while RADI places the poles of \(\hat{A}-\hat{P}_{\mathrm{LQG}}\hat{C}^*\hat{C}\) at the conjugates of ADI shifts \(-\sigma_i\). Since the pole locations and the ADI shifts are tied together, the shifts \(-\sigma_i\) must be Hurwitz in LRCF-ADI and RADI. In a general setting, however, the interpolation points \(\sigma_i\) and the poles of \(\hat{A}\) and \(\hat{A}-\hat{P}_{\mathrm{LQG}}\hat{C}^*\hat{C}\) need not be linked in Krylov subspace–based projection methods. Therefore, it is possible to choose the free parameter \(\zeta\) so that the projected Lyapunov equation admits a unique positive-definite solution and the projected Riccati equation admits a unique stabilizing positive-definite solution even when the interpolation points \(\sigma_i\) lie anywhere in the \(s\)-plane, including the \(j\omega\)-axis. In this section, we give specific choices of \(\zeta\) that ensure the projected Lyapunov equation admits a unique positive-definite solution and the projected Riccati equation admits a unique stabilizing positive-definite solution when the interpolation points lie on the imaginary axis instead of in the right half of the \(s\)-plane.
\subsection{Data-driven Projection-based Approximation of BT}\label{sec4_0}
Assume that all interpolation points $\sigma_i$ are located on the imaginary axis, i.e., $\sigma_i=j\omega_i$, so that the interpolant $\hat{G}(s)=\hat{C}(sI-S_v+\zeta L_v)^{-1}\zeta+D$ can be computed from measurable data $G(j\omega_i)$. Recall that the projected matrices satisfy the following:
\[
\hat{A}=S_v-\hat{B} L_v,\quad \hat{B}=\zeta, \quad \hat{C}=\begin{bmatrix}H(j\omega_1)&\cdots&H(j\omega_v)\end{bmatrix},\quad D=G(\infty),
\]
where $S_v = \mathrm{diag}(j\omega_1, \dots, j\omega_v) \otimes I_m$ and $L_v=\begin{bmatrix}1&\cdots&1\end{bmatrix}\otimes I_m$.

To ensure that the projected controllability Gramian $\hat{P}$ associated with the pair $(\hat{A},\hat{B})$ has a unique solution, the projected matrix $\hat{A}$ must have all the eigenvalues in the left half of the $s$-plane. Therefore, the free parameter $\zeta$ should place the poles of interpolant $\hat{G}(s)$ in left half of the $s$-plane. Proposition \ref{prop_stable} gives an analytical method to compute $\zeta$ that place the poles of the interpolant $\hat{G}(s)$ at the desired location in the $s$-plane.
\begin{proposition}\label{prop_stable}
Let \((\sigma_1, \dots, \sigma_{v})\) be \(v\) distinct interpolation points, and let \((\lambda_1, \dots, \lambda_{v})\) be \(v\) distinct desired poles such that the two sets have no elements in common. Define
\[
S_p = \mathrm{diag}(-\overline{\lambda_1}, \dots, -\overline{\lambda_{v}}) \otimes I_m.
\]
Further, assume that the pairs \((-S_v, L_v)\) and \((-S_p, L_v)\) are observable. Let \(X_p\) be the unique solution of the Sylvester equation
\begin{align}
-S_p^* X_p - X_p S_v + L_v^T L_v = 0. \label{Xp}
\end{align}
Then the interpolant
\begin{align}
\hat{A} = S_v - \hat{B} L_v, \quad
\hat{B} = X_p^{-1} L_v^T, \quad
\hat{C} = \begin{bmatrix}H(\sigma_1)&\cdots&H(\sigma_v)\end{bmatrix},\quad D=G(\infty) \label{ROM1}
\end{align}
satisfies the interpolation condition $G(\sigma_i)=\hat{G}(\sigma_i)$ and has poles at \((\lambda_1, \dots, \lambda_{v})\) with multiplicity \(m\).
\end{proposition}
\begin{proof}
The assumption that the sets \((\sigma_1, \dots, \sigma_{v})\) and \((\lambda_1, \dots, \lambda_{v})\) are disjoint ensures the uniqueness of the Sylvester equation \eqref{Xp}. Moreover, the observability of the pairs \((-S_v, L_v)\) and \((-S_p, L_v)\) guarantees that \(X_p\) is invertible. Premultiplying \eqref{Xp} by \(X_p^{-1}\) gives
\begin{align}
-X_p^{-1} S_p^* X_p - S_v + X_p^{-1} L_v^T L_v = 0, \nonumber\\
S_v - \hat{B} L_v = - X_p^{-1} S_p^* X_p, \nonumber\\
\hat{A} = - X_p^{-1} S_p^* X_p. \nonumber
\end{align}
Hence, the eigenvalues of \(\hat{A}\) coincide with those of \(-S_p^*\), and the ROM \eqref{ROM1} has the desired poles \((\lambda_1, \dots, \lambda_{v})\) with multiplicity \(m\).
\end{proof}
The next proposition focuses on ensuring the invertibility of the matrix $X_p$ when the interpolation points $\sigma_i$ are located on the imaginary axis.
\begin{proposition}\label{prop_BT_diag}
Assume that Proposition \ref{prop_stable} holds. Further, for \(\epsilon > 0\), let \(\sigma_i = j\omega_i\) and \(\lambda_i = -\epsilon + j\omega_i\), satisfying
\[\Delta_{\min} := \min_{\substack{i,j=1 \\ i \neq j}}^{v} |\omega_i - \omega_j| > 0.\]
Then \(X_p\) admits the explicit Kronecker form \(X_p = Y \otimes I_m\), where \(Y \in \mathbb{C}^{v \times v}\) has entries
\begin{equation}
Y_{ij} = \frac{1}{\epsilon + j(\omega_j - \omega_i)}, \qquad i,j = 1,\dots,v.
\end{equation}
The following properties hold:
\begin{enumerate}
  \item $X_p$ is asymptotically diagonal as $\epsilon \to 0$ with $X_p = \frac{1}{\epsilon} I_{vm} + \mathcal{O}(1)$.
  \item By defining the diagonal approximation $\widetilde{X}_p := \frac{1}{\epsilon} I_{vm}$, the relative error in Frobenius norm satisfies
\begin{equation}
\label{eq:relative_error}
\frac{\| X_p - \widetilde{X}_p \|_F}{\| \widetilde{X}_p \|_F} \leq \frac{\epsilon \sqrt{v - 1}}{\Delta_{\min}}.
\end{equation}
In particular, for any tolerance $\tau \in (0,1)$, if
\begin{equation}
\epsilon < \tau \, \frac{\Delta_{\min}}{\sqrt{v - 1}}, \label{small_epsilon}
\end{equation}
then $\| X_p - \widetilde{X}_p \|_F / \| \widetilde{X}_p \|_F < \tau$.
  \item  $X_p$ is strictly diagonally dominant whenever
\begin{equation}
\epsilon < \frac{\Delta_{\min}}{v - 1}.
\end{equation}
\end{enumerate}
\end{proposition}
\begin{proof}
The proof is given in Appendix D.
\end{proof}
By selecting a small positive scalar \(\epsilon\) as in Proposition \ref{prop_BT_diag}, we can effectively impose a modal structure with lightly damped stable poles on the ROM \eqref{ROM1}; that is, as \(\epsilon \to 0\),
\[
\hat{A}\approx \mathrm{diag}(-\epsilon+j\omega_1,\cdots,-\epsilon+j\omega_{v})\otimes I_m, \quad \hat{B}\approx \begin{bmatrix}\epsilon&\cdots&\epsilon\end{bmatrix}^T\otimes I_m.
\]
Thus Proposition \ref{prop_BT_diag} allows us to ensure that the controllability Gramian $\hat{P}$ of the pair $(\hat{A},\hat{B})$ has a unique solution. As discussed in the last section, the Gramian $\hat{P}$ is diagonally dominant since $\hat{A}$ has modal structure with lightly damped stable poles as proven in \citep{jonckheere1984principal,gawronski2004dynamics,gawronski2006balanced}. For completeness, Proposition \ref{prop_gramian} establishes the diagonal dominance and positive definiteness of the controllability Gramian $\hat{P}$ of the pair $(\hat{A},\hat{B})$.
\begin{proposition}\label{prop_gramian}
Assume that Proposition \ref{prop_BT_diag} holds. Let $\hat{A} \in \mathbb{C}^{vm \times vm}$ and $\hat{B} \in \mathbb{C}^{vm \times m}$ be defined as:
\[
\hat{A} = S_v - \hat{B} L_v, \quad \hat{B} = X_p^{-1} L_v^T,
\]
where $S_v = \mathrm{diag}(j\omega_1, \dots, j\omega_v) \otimes I_m$, $L_v = \mathbf{1}_v^T \otimes I_m$, and $X_p$ is the solution from Proposition \ref{prop_BT_diag}. Let $\hat{P} \in \mathbb{C}^{vm \times vm}$ denote the controllability Gramian of the pair $(\hat{A}, \hat{B})$, satisfying the Lyapunov equation:
\begin{equation}
\label{eq:lyapunov_P}
\hat{A} \hat{P} + \hat{P} \hat{A}^* + \hat{B} \hat{B}^* = 0.
\end{equation}
For any tolerance $\delta \in (0, 1)$, if $\epsilon$ satisfies
\begin{equation}
\label{eq:epsilon_bound_gramian}
0 < \epsilon \le \min \left( \frac{\Delta_{\min}}{4 v}, \quad \frac{\delta \cdot \Delta_{\min}}{8 v^2} \right),
\end{equation}
then the following properties hold:
\begin{enumerate}
    \item $\hat{A}$ is Hurwitz, and $\hat{P}$ is Hermitian positive definite.
    \item $\hat{P}$ is strictly block diagonally dominant. Specifically, let $\hat{P}_d = \mathrm{blkdiag}(\hat{P}_{11}, \dots, \hat{P}_{vv})$ denote the block diagonal part of $\hat{P}$. Then for every block row $i$:
    \[
    \sum_{j \neq i} \| \hat{P}_{ij} \| \le \delta \cdot \sigma_{\min}(\hat{P}_{ii}).
    \]
    \item The diagonal blocks satisfy the asymptotic limit:
    \[
    \lim_{\epsilon \to 0} \frac{1}{\epsilon} \hat{P}_{ii} = \frac{1}{2} I_m, \qquad i = 1, \dots, v.
    \]
    Consequently, $\hat{P} = \frac{\epsilon}{2} I_{vm} + \mathcal{O}(\epsilon^3)$.
    \item The block diagonal approximation error in the induced infinity norm satisfies:
    \[
    \| \hat{P} - \hat{P}_d \|_\infty \le \delta \cdot \min_i \sigma_{\min}(\hat{P}_{ii}).
    \]
\end{enumerate}
\end{proposition}
\begin{proof}
The proof is given in Appendix E.
\end{proof}
By choosing $\epsilon$ based on Proposition \ref{prop_gramian} and assigning the desired poles of $\hat{A}$ to $-\epsilon+j\omega_i$, the controllability Gramian $\hat{P}$ for the projected pair $(\hat{A},\hat{B})$ admits a positive-definite solution, enabling the factorization $\hat{P}=\hat{L}_p\hat{L}_p^*$. Consequently, the controllability Gramian $P$ admits the projection-based approximation $P\approx\tilde{P}=\hat{V}\hat{L}_p\hat{L}_p^*\hat{V}^*$. Similarly, applying the dual of Proposition \ref{prop_gramian} yields the observability Gramian approximation $Q\approx\tilde{Q}=\hat{W}\hat{L}_q\hat{L}_q^*\hat{W}^*$. Replacing the exact Gramian factors $L_p$ and $L_q$ in BSA with the approximations $\tilde{L}_p=\hat{V}\hat{L}_p$ and $\tilde{L}_q=\hat{W}\hat{L}_q$ produces a data-driven projection-based approximation of BT (DD-P-BT). This approach allows DD-P-BT implementation using only measurable data $G(j\omega_i)$, $G(j\nu_i)$, and $G(\infty)$. Furthermore, if the amount of data to be processed makes it computationally infeasible to compute $\hat{P}$ and $\hat{Q}$ directly, one can replace them with their diagonally-dominant approximations $\hat{P}\approx\frac{\epsilon}{2} I$ and $\hat{Q}\approx\frac{\epsilon}{2} I$. As shown in Proposition \ref{prop_gramian}, such diagonally-dominant approximations are admissible.
\subsection{Data-driven Projection-based Approximation of LQGBT}\label{sec4_1}
In this subsection, similar to NI-ADI-LQG, $P_{\mathrm{LQG}}$ and $Q_{\mathrm{LQG}}$ are approximated via Petrov–Galerkin projection, i.e.,
$P_{\mathrm{LQG}}\approx \hat{V}\hat{P}_{\mathrm{LQG}}\hat{V}^*$ and $Q_{\mathrm{LQG}}\approx \hat{W}\hat{Q}_{\mathrm{LQG}}\hat{W}^*$. To this end, we need to choose the free parameter $\zeta$ such that the projected Riccati equations \eqref{proj_ricc_lqg} and \eqref{proj_ricc_lqg_Q} admit positive-definite stabilizing solutions. Theorems \ref{Theorem1} and \ref{Theorem2} provide specific choices of $\zeta$ that achieve this when the interpolation points lie in the right half of the $s$-plane. The focus here is to develop a method to compute $\zeta$ that ensures this property when the interpolation points lie on the imaginary axis.
\begin{theorem}\label{Theorem10}
Consider the algebraic Riccati equation (ARE)
\begin{align}
\hat{A}(\epsilon)\hat{P}_{\mathrm{LQG}}+\hat{P}_{\mathrm{LQG}}\hat{A}(\epsilon)^*+\hat{B}(\epsilon)\hat{B}(\epsilon)^*-\hat{P}_{\mathrm{LQG}}\hat{C}^*\hat{C}\hat{P}_{\mathrm{LQG}}=0, \label{eq:are_main}
\end{align}
where $\epsilon > 0$, and the system matrices are defined as:
\begin{align*}
\hat{A}(\epsilon) &= \mathrm{diag}(-\epsilon+j\omega_1,\cdots,-\epsilon+j\omega_v)\otimes I_m, \\
\hat{B}(\epsilon) &= \begin{bmatrix}\epsilon I_m\\\vdots\\\epsilon I_m\end{bmatrix}, \\
\hat{C} &= \begin{bmatrix}H(j\omega_1)&\cdots&H(j\omega_v)\end{bmatrix}.
\end{align*}
Assume the non-zero frequencies $\omega_1, \dots, \omega_v \in \mathbb{R}$ are distinct, and let $\Delta_{\mathrm{min}} := \min_{\substack{i,j=1 \\ i \neq j}}^{v} |\omega_i - \omega_j| > 0$. Assume further that each $H(j\omega_i)$ has full column rank. Then, there exist constants $\epsilon_0 > 0$ and $K > 0$ such that for all $0 < \epsilon \le \epsilon_0$, the unique stabilizing solution $\hat{P}_{\mathrm{LQG}}(\epsilon)$ exists and admits the decomposition:
\begin{align}
\hat{P}_{\mathrm{LQG}}(\epsilon) = \tilde{P}_{\mathrm{LQG}}(\epsilon) + E(\epsilon),
\end{align}
where $\tilde{P}_{\mathrm{LQG}}(\epsilon) = \mathrm{blkdiag}(p_1(\epsilon), \dots, p_v(\epsilon))$ with
\begin{align}
p_i(\epsilon) = \epsilon\big(H^*(j\omega_i)H(j\omega_i)\big)^{-1}\Big(\big(I_m+H^*(j\omega_i)H(j\omega_i)\big)^{\frac{1}{2}}-I_m\Big),
\end{align}
and the error term satisfies $\|E(\epsilon)\| \le K \epsilon^2$.
\end{theorem}
\begin{proof}
The proof is given in Appendix F.
\end{proof}
Theorem \ref{Theorem10} indicates that the projected Riccati equation \eqref{proj_ricc_lqg} has a stabilizing solution when the projected matrix $\hat{A}$ is Hurwitz and has a modal form with lightly damped poles, which is consistent with the observation made in \citep{gawronski2004dynamics,gawronski2006balanced}. Next, it is shown how to enforce this structure on $\hat{A}$ using Proposition \ref{prop_stable}.
\begin{proposition}\label{Prop12}
Let $\epsilon > 0$ and let $\omega_1, \dots, \omega_v \in \mathbb{R}$ be distinct non-zero frequencies. Define the minimum frequency magnitude $\omega_{\min} := \min_{i} |\omega_i|$ and the minimum frequency separation $\Delta_{\mathrm{min}} := \min_{\substack{i,j=1 \\ i \neq j}}^{v} |\omega_i - \omega_j| > 0$. Consider the matrix $\hat{A} \in \mathbb{C}^{vm \times vm}$ defined by
\[
\hat{A} = S_v - \hat{B}L_v,
\]
where $S_v = \mathrm{diag}(j\omega_1, \dots, j\omega_v) \otimes I_m$, $L_v = (\mathbf{1}_v^T \otimes I_m)$, and $\hat{B} = (\epsilon \mathbf{1}_v \otimes I_m)$, with $\mathbf{1}_v \in \mathbb{R}^v$ denoting the vector of all ones. Assume $v \ge 2$. For any admissible relative error tolerance $\delta \in (0, 1)$, if $\epsilon$ satisfies
\begin{equation}
\label{eq:epsilon_bound}
0 < \epsilon \le \min \left( \frac{\delta \cdot \omega_{\min}}{v - 1}, \quad \frac{\Delta_{\mathrm{min}}}{2(v - 1)} \right),
\end{equation}
then the following properties hold:
\begin{enumerate}
    \item $\hat{A}$ is strictly diagonally dominant by rows.
    \item Let $\hat{A}_d = \mathrm{diag}(\hat{A})$. The approximation error in the induced infinity norm satisfies
    \[
    \| \hat{A} - \hat{A}_d \|_\infty \le \delta \cdot \omega_{\min}.
    \]
    \item The eigenvalues of $\hat{A}$ are distinct and lie within $v$ disjoint Gershgorin disks in the complex plane, each centered at $-\epsilon+j\omega_i $ with radius $(v-1)\epsilon$.
\end{enumerate}
\end{proposition}
\begin{proof}
The proof is given in Appendix G.
\end{proof}
Thus, by selecting \(\epsilon\) according to Theorem \ref{Theorem10} and Proposition \ref{Prop12}, one can obtain the free parameter \(\hat{B}\) that ensures the projected Riccati equation \eqref{proj_ricc_lqg} admits a positive-definite stabilizing solution. Dually, the free parameter \(\hat{C}\) can be obtained to ensure that the projected Riccati equation \eqref{proj_ricc_lqg_Q} admits a positive-definite stabilizing solution. Consequently, \(P_{\mathrm{LQG}}\) and \(Q_{\mathrm{LQG}}\) admit projection-based approximations
\[
P_{\mathrm{LQG}}\approx \hat{V}\hat{P}_{\mathrm{LQG}}\hat{V}^*=\hat{V}\hat{L}_p\hat{L}_p^*\hat{V}^*, \quad
Q_{\mathrm{LQG}}\approx \hat{W}\hat{Q}_{\mathrm{LQG}}\hat{W}^*=\hat{W}\hat{L}_q\hat{L}_q^*\hat{W}^*.
\]
Replacing the exact Gramian factors \(L_p\) and \(L_q\) in BSA with the approximations \(\tilde{L}_p=\hat{V}\hat{L}_p\) and \(\tilde{L}_q=\hat{W}\hat{L}_q\) yields a data-driven projection-based approximation of LQGBT (DD-P-LQGBT). This approach enables the implementation of DD-P-LQGBT using only measurable data \(G(j\omega_i)\), \(G(j\nu_i)\), and \(G(\infty)\).

Furthermore, if the amount of data makes it computationally infeasible to solve the projected Riccati equations \eqref{proj_ricc_lqg} and \eqref{proj_ricc_lqg_Q}, one can use diagonally dominant approximations of \(\hat{P}_{\mathrm{LQG}}\) and \(\hat{Q}_{\mathrm{LQG}}\) from Theorem \ref{Theorem10} as follows:
\begin{align}
\hat{P}_{\mathrm{LQG}} &\approx\tilde{P}_{\mathrm{LQG}}= \mathrm{blkdiag}(p_1, \dots, p_v),\label{p_lqg_analy}\\
\hat{Q}_{\mathrm{LQG}}&\approx\tilde{Q}_{\mathrm{LQG}}=\mathrm{blkdiag}\big(q_1,\cdots,q_w\big),\label{q_lqg_analy}
\end{align}
where
\begin{align}
p_i&= \epsilon\big(H^*(j\omega_i)H(j\omega_i)\big)^{-1}\Big(\big(I_m+H^*(j\omega_i)H(j\omega_i)\big)^{\frac{1}{2}}-I_m\Big),\nonumber\\
q_i&=\epsilon\big(H(j\nu_i)H^*(j\nu_i)\big)^{-1}\Big(\big(I_p+H(j\nu_i)H^*(j\nu_i)\big)^{\frac{1}{2}}-I_p\Big).\nonumber
\end{align}
\begin{remark}
A data-driven projection-based approximation of \(\mathcal{H}_\infty\)BT (DD-P-\(\mathcal{H}_\infty\)BT) can be obtained similarly using Theorem \ref{Theorem10} and Proposition \ref{Prop12}. Furthermore, if the amount of data makes it computationally infeasible to solve the projected Riccati equations in DD-P-\(\mathcal{H}_\infty\)BT, one can use diagonally dominant approximations from Theorem \ref{Theorem10} as follows:
\begin{align}
\hat{P}_{\mathcal{H}_\infty}\approx\mathrm{blkdiag}(p_1,\cdots,p_v)\quad \text{and}\quad\hat{Q}_{\mathcal{H}_\infty}\approx\mathrm{blkdiag}(q_1,\cdots,q_w),\nonumber
\end{align}
where
\begin{align}
p_i&=\epsilon\big((1-\gamma^2)H^*(j\omega_i)H(j\omega_i)\big)^{-1}\Big(\big(I_m+(1-\gamma^2)H^*(j\omega_i)H(j\omega_i)\big)^{\frac{1}{2}}-I_m\Big),\nonumber\\
q_i&=\epsilon\big((1-\gamma^2)H(j\nu_i)H^*(j\nu_i)\big)^{-1}\Big(\big(I_p+(1-\gamma^2)H(j\nu_i)H^*(j\nu_i)\big)^{\frac{1}{2}}-I_p\Big).\nonumber
\end{align}
\end{remark}
\subsection{Data-driven Projection-based Approximation of PRBT}\label{sec4_2}
In this subsection, similar to RADI-based approximations, Petrov-Galerkin projection-based approximations of $P_{\mathrm{PR}}$ and $Q_{\mathrm{PR}}$ are proposed, i.e., $P_{\mathrm{PR}}\approx V_{\mathrm{PR}}\hat{P}_{\mathrm{PR}}V_{\mathrm{PR}}^*=\hat{V}T_v\hat{P}_{\mathrm{PR}}T_v^*\hat{V}^*$ and $Q_{\mathrm{PR}}\approx W_{\mathrm{PR}}\hat{Q}_{\mathrm{PR}}W_{\mathrm{PR}}^*=\hat{W}T_w\hat{Q}_{\mathrm{PR}}T_w^*W^*$. Unlike RADI-based approximations, the projection matrices $V_{\mathrm{PR}}$ and $W_{\mathrm{PR}}$ are computed with interpolation points on the imaginary axis, i.e., $\sigma_i=j\omega_i$ and $\mu_i=j\nu_i$. We use Theorem \ref{Theorem10} to guarantee that the projected Riccati equation \eqref{proj_ricc_Ppr} admits a stabilizing solution. Setting
\[
\hat{B}_{\mathrm{PR}}=\begin{bmatrix}\epsilon I_m\\\vdots\\\epsilon I_m\end{bmatrix},
\]
with \(\epsilon\) chosen according to Theorem \ref{Theorem10}, yields
\[
\hat{P}_{\mathrm{PR}}(\epsilon)=\mathrm{blkdiag}(p_1,\cdots,p_v)+E(\epsilon)
\]
where
\[
p_i=\epsilon\big[G_{\mathrm{PR}}^*(j\omega_i)G_{\mathrm{PR}}(j\omega_i)\big]^{-1}\Big(I_m-\big(I_m-G_{\mathrm{PR}}^*(j\omega_i)G_{\mathrm{PR}}(j\omega_i)\big)^{\frac{1}{2}}\Big).\nonumber
\]
Recall that
\begin{align}
\begin{bmatrix}G_{\mathrm{PR}}(j\omega_1)&\cdots&G_{\mathrm{PR}}(j\omega_v)\end{bmatrix}=R_{\text{PR}}^{-\frac{1}{2}}\begin{bmatrix}H(j\omega_1)&\cdots&H(j\omega_v)\end{bmatrix}T_v,\nonumber
\end{align}
where $T_v$ is the solution to the Sylvester equation \eqref{Tv_pr}.

Resultantly, the projected Riccati equation \eqref{proj_ricc_Ppr} can be computed non-intrusively from the samples $G(\infty)$ and $G(j\omega_i)$. Proposition \ref{prop13} gives the appropriate selection of the free parameter $\zeta$ in the Sylvester equation \eqref{Tv_pr}, which ensures the invertibility and diagonal dominance of $T_v$.
\begin{proposition}\label{prop13}
Let Proposition \ref{Prop12} hold. Let $\epsilon > 0$ and let $\omega_1, \dots, \omega_v \in \mathbb{R}$ be distinct non-zero frequencies with minimum separation $\Delta_{\mathrm{min}} := \min_{i \neq j} |\omega_i - \omega_j|$. Let $\mathcal{P} \in \mathbb{C}^{m \times vm}$ and $\mathcal{Q} \in \mathbb{C}^{m \times m}$ be matrices independent of $\epsilon$. Partition $\mathcal{P} = [\mathcal{P}_1, \dots, \mathcal{P}_v]$ where $\mathcal{P}_i \in \mathbb{C}^{m \times m}$.

Assume the following conditions hold:
\begin{enumerate}
    \item[(A1)] $(I_m + \mathcal{P}_i)$ is invertible for all $i=1,\dots,v$.
    \item[(A2)] $\mathcal{Q}$ is invertible.
\end{enumerate}

Define the constants:
\[
\gamma := \max_{i} \| (I_m + \mathcal{P}_i)^{-1} \|, \quad K_{\mathcal{P}} := \max_{i} \| I_m + \mathcal{P}_i \|, \quad \sigma_{\mathcal{Q}} := \sigma_{\min}(\mathcal{Q}),
\]
where $\|\cdot\|$ denotes the spectral norm. Let $T_v \in \mathbb{C}^{vm \times vm}$ be the solution to the Sylvester equation
\begin{equation}
\label{eq:sylvester_Tv}
    (S_v - \zeta L_v - \zeta \mathcal{P}) T_v - T_v S_v + \zeta \mathcal{Q} L_v = 0.
\end{equation}
Assume $v \ge 2$. For any tolerance $\delta \in (0, 1)$, if $\epsilon$ satisfies
\begin{equation}
\label{eq:epsilon_bound_Tv_final}
0 < \epsilon \le \min \left( \frac{\Delta_{\mathrm{min}}}{2 v K_{\mathcal{P}}}, \quad \frac{\delta \cdot \sigma_{\mathcal{Q}} \cdot \Delta_{\mathrm{min}}}{4 v K_{\mathcal{P}}^2 (v-1) \gamma (1 + \gamma K_{\mathcal{P}}) } \right),
\end{equation}
then the following properties hold:
\begin{enumerate}
    \item The spectra of $(S_v - \zeta L_v - \zeta \mathcal{P})$ and $S_v$ are disjoint, ensuring $T_v$ exists and is unique.
    \item As $\epsilon \to 0$, the blocks $T_{ij}$ of $T_v$ satisfy:
    \[
    \lim_{\epsilon \to 0} T_{ii} = (I_m + \mathcal{P}_i)^{-1} \mathcal{Q}, \quad \lim_{\epsilon \to 0} T_{ij} = 0 \quad (i \neq j).
    \]
    \item Each diagonal block $T_{ii}$ is invertible, with $\| T_{ii}^{-1} \| \le 2 K_{\mathcal{P}} \sigma_{\mathcal{Q}}^{-1}$.
    \item $T_v$ is block diagonally dominant in the sense that for every block row $i$:
    \[
    \sum_{j \neq i} \| T_{ij} \| \le \delta \cdot \sigma_{\min}(T_{ii}).
    \]
    Consequently, $T_v$ is invertible.
    \item The block diagonal approximation $\tilde{T}_v = \mathrm{blkdiag}(T_{11}, \dots, T_{vv})$ satisfies the relative error bound:
    \[
    \| T_v - \tilde{T}_v \|_\infty \le \delta \cdot \min_i \sigma_{\min}(T_{ii}).
    \]
\end{enumerate}
\end{proposition}
\begin{proof}
The proof is given in Appendix H.
\end{proof}
Petrov-Galerkin projection-based approximation of $Q_{\mathrm{PR}}$ can be obtained dually by selecting the free parameter $\hat{C}_{\mathrm{PR}}$ in the projected Riccati equation \eqref{proj_ricc_Qpr} according to the dual of Theorem \ref{Theorem10}. Furthermore, the free parameter $\zeta$ in the Sylvester equation \eqref{Tw_pr} can be set according to the dual of Proposition \ref{prop13} to ensure the invertibility of $T_w$. Consequently, $P_{\mathrm{PR}}$ and $Q_{\mathrm{PR}}$ admit projection-based approximations $P_{\mathrm{PR}}\approx \hat{V}T_v\hat{P}_{\mathrm{PR}}T_v^*\hat{V}^*=\hat{V}T_v\hat{L}_p\hat{L}_p^*T_v^*\hat{V}^*$, and $Q_{\mathrm{PR}}\approx \hat{W}T_w\hat{Q}_{\mathrm{PR}}T_w^*\hat{W}^*=\hat{W}T_w\hat{L}_q\hat{L}_q^*T_w^*\hat{W}^*$. Replacing the exact Gramian factors $L_p$ and $L_q$ in BSA with the approximations $\tilde{L}_p=\hat{V}T_v\hat{L}_p$ and $\tilde{L}_q=\hat{W}T_w\hat{L}_q$ yields a data-driven projection-based approximation of PRBT (DD-P-PRBT). This approach enables the implementation of DD-P-PRBT using only measurable data $G(j\omega_i)$, $G(j\nu_i)$, and $G(\infty)$.

If the amount of data to be processed makes it computationally infeasible to solve the projected Riccati equations \eqref{proj_ricc_Ppr} and \eqref{proj_ricc_Qpr} and Sylvester equations \eqref{Tv_pr} and \eqref{Tw_pr}, these can be replaced with their diagonally dominant approximations as follows:
\begin{align}
\hat{P}_{\text{PR}}&\approx\mathrm{blkdiag}(p_1,\cdots,p_v),\nonumber\\
T_v&\approx\mathrm{blkdiag}(t_{v,1},\cdots,t_{v,v}),\nonumber\\
\hat{Q}_{\text{PR}}&\approx\mathrm{blkdiag}(q_1,\cdots,q_w),\nonumber\\
T_w&\approx= \mathrm{blkdiag}(t_{w,1},\cdots,t_{w,w}),\nonumber
\end{align}
where
\begin{align}
p_i&=\epsilon\big[t_{v,i}^*H^*(j\omega_i)R_{\text{PR}}^{-1}H(j\omega_i)t_{v,i}\big]^{-1}\Big(I_m-\big(I_m-t_{v,i}^*H^*(j\omega_i)R_{\text{PR}}^{-1}H(j\omega_i)t_{v,i}\big)^{\frac{1}{2}}\Big),\nonumber\\
t_{v,i}&=\big(I_m+R_{\text{PR}}^{-1}H(j\omega_i)\big)^{-1}R_{\text{PR}}^{-\frac{1}{2}},\nonumber\\
q_i&=\epsilon\big[t_{w,i}^*H(j\nu_i)R_{\text{PR}}^{-1}H^*(j\nu_i)t_{w,i}\big]^{-1}\Big(I_p-\big(I_p-t_{w,i}^*H(j\nu_i)R_{\text{PR}}^{-1}H^*(j\nu_i)t_{w,i}\big)^{\frac{1}{2}}\Big),\nonumber\\
t_{w,i}&=\big(I_p+R_{\text{PR}}^{-1}H^*(j\nu_i)\big)^{-1}R_{\text{PR}}^{-\frac{1}{2}}.\nonumber
\end{align}
\subsection{Data-driven Projection-based Approximation of BRBT}\label{sec4_3}
The data-driven projection-based approximation of BRBT (DD-P-BRBT) can be obtained similarly using Theorem \ref{Theorem10} and Proposition \ref{prop13} to obtain Petrov-Galerkin projection-based approximations of $P_{\mathrm{BR}}$ and $Q_{\mathrm{BR}}$, i.e., $P_{\mathrm{BR}}\approx V_{\mathrm{BR}}\hat{P}_{\mathrm{BR}}V_{\mathrm{BR}}^*=\hat{V}T_v\hat{P}_{\mathrm{BR}}T_v^*\hat{V}^*$ and $Q_{\mathrm{BR}}\approx W_{\mathrm{BR}}\hat{Q}_{\mathrm{BR}}W_{\mathrm{BR}}^*=\hat{W}T_w\hat{Q}_{\mathrm{BR}}T_w^*\hat{W}^*$. Unlike RADI-based approximations, the projection matrices $V_{\mathrm{BR}}$ and $W_{\mathrm{BR}}$ are computed with interpolation points on the imaginary axis, i.e., $\sigma_i=j\omega_i$ and $\mu_i=j\nu_i$. Furthermore, $\hat{P}_{\mathrm{BR}}(\epsilon)$ and $\hat{Q}_{\mathrm{BR}}(\epsilon)$ are given as
\begin{align}
\hat{P}_{\mathrm{BR}}(\epsilon)&=\mathrm{blkdiag}(p_1,\cdots,p_v)+E_p(\epsilon),\nonumber\\
\hat{Q}_{\mathrm{BR}}(\epsilon)&=\mathrm{blkdiag}(q_1,\cdots,q_w)+E_q(\epsilon),\nonumber
\end{align}
where
\begin{align}
p_i&=\epsilon\big[\big(G_{\mathrm{BR}}^{\mathrm{P}}(j\omega_i)\big)^*G_{\mathrm{PR}}^{\mathrm{P}}(j\omega_i)\big]^{-1}\Big(I_m-\Big(I_m-\big(G_{\mathrm{PR}}^{\mathrm{P}}(j\omega_i)\big)^*G_{\mathrm{PR}}^{\mathrm{P}}(j\omega_i)\Big)^{\frac{1}{2}}\Big),\nonumber\\
q_i&=\epsilon\big[G_{\mathrm{BR}}^{\mathrm{Q}}(j\nu_i)\big(G_{\mathrm{BR}}^{\mathrm{Q}}(j\nu_i)\big)^*\big]^{-1}\Big(I_p-\Big(I_p-G_{\mathrm{BR}}^{\mathrm{Q}}(j\nu_i)\big(G_{\mathrm{BR}}^{\mathrm{Q}}(j\nu_i)\big)^*\Big)^{\frac{1}{2}}\Big).\nonumber
\end{align}
Recall that
\begin{align}
\begin{bmatrix}G_{\mathrm{BR}}^{\mathrm{P}}(j\omega_1)&\cdots&G_{\mathrm{BR}}^{\mathrm{P}}(j\omega_v)\end{bmatrix}&=R_{p}^{-\frac{1}{2}}\begin{bmatrix}H(j\omega_1)&\cdots&H(j\omega_v)\end{bmatrix}T_v,\nonumber\\
\begin{bmatrix}G_{\mathrm{BR}}^{\mathrm{Q}}(j\nu_1)\\\vdots\\G_{\mathrm{BR}}^{\mathrm{Q}}(j\nu_w)\end{bmatrix}&=T_w^*\begin{bmatrix}H(j\nu_1)\\\vdots\\H(j\nu_w)\end{bmatrix}R_{q}^{-\frac{1}{2}},\nonumber
\end{align}
where $T_v$ and $T_w$ are the solutions to the Sylvester equations \eqref{Tv_br} and \eqref{Tw_br}, respectively.

Consequently, the projected Riccati equations \eqref{proj_ricc_Pbr} and \eqref{proj_ricc_Qbr} can be computed non-intrusively from the samples $G(\infty)$, $G(j\omega_i)$, and $G(j\nu_i)$. If the amount of data to be processed makes it computationally infeasible to solve the projected Riccati equations \eqref{proj_ricc_Pbr} and \eqref{proj_ricc_Qbr} and Sylvester equations \eqref{Tv_br} and \eqref{Tw_br}, these can be replaced with their diagonally-dominant approximations as follows:
\begin{align}
\hat{P}_{\text{BR}}&\approx\mathrm{blkdiag}(p_1,\cdots,p_v),\nonumber\\
T_v&\approx\mathrm{blkdiag}(t_{v,1},\cdots,t_{v,v}),\nonumber\\
\hat{Q}_{\text{BR}}&\approx\mathrm{blkdiag}(q_1,\cdots,q_w),\nonumber\\
T_w&\approx \mathrm{blkdiag}(t_{w,1},\cdots,t_{w,w}),\nonumber
\end{align}
where
\begin{align}
p_i&=\epsilon\big[t_{v,i}^*H^*(j\omega_i)R_{p}^{-1}H(j\omega_i)t_{v,i}\big]^{-1}\Big(I_m-\big(I_m-t_{v,i}^*H^*(j\omega_i)R_{p}^{-1}H(j\omega_i)t_{v,i}\big)^{\frac{1}{2}}\Big),\nonumber\\
t_{v,i}&=\big(I_m+D^TR_{p}^{-1}H(j\omega_i)\big)^{-1}R_{b}^{\frac{1}{2}},\nonumber\\
q_i&=\epsilon\big[t_{w,i}^*H(j\nu_i)R_{q}^{-1}H^*(j\nu_i)t_{w,i}\big]^{-1}\Big(I_p-\big(I_p-t_{w,i}^*H(j\nu_i)R_{q}^{-1}H^*(j\nu_i)t_{w,i}\big)^{\frac{1}{2}}\Big),\nonumber\\
t_{w,i}&=\big(I_p+DR_{q}^{-1}H^*(j\nu_i)\big)^{-1}R_{c}^{\frac{1}{2}}.\nonumber
\end{align}
\subsection{Data-driven Projection-based Approximation of SWBT}\label{sec4_4}
The data-driven projection-based approximation of SWBT (DD-P-SWBT) can be obtained using Propositions \ref{prop_gramian} and \ref{prop13}. Similar to ADI-based approximations, Petrov-Galerkin projection-based approximations of $P$ and $Q_{\mathrm{SW}}$, i.e., $P\approx \hat{V}\hat{P}\hat{V}^*$ and $Q_{\mathrm{SW}}\approx W_{\mathrm{SW}}\hat{Q}_{\mathrm{SW}}W_{\mathrm{SW}}^*=\hat{W}T_w\hat{Q}_{\mathrm{SW}}T_w^*\hat{W}^*$, can be obtained. Unlike ADI-based approximations, the projection matrices $\hat{V}$ and $W_{\mathrm{SW}}$ are computed with interpolation points on the imaginary axis, i.e., $\sigma_i=j\omega_i$ and $\mu_i=j\nu_i$.

$\hat{P}$ and $\hat{Q}_{\mathrm{SW}}$ solve the following Lyapunov equations:
\begin{align}
(S_v-\hat{B}L_v)\hat{P}+\hat{P}(S_v-\hat{B}L_v)^*+\hat{B}\hat{B}^*&=0,\label{proj_p}\\
(S_w-L_w\hat{C}_{\text{SW}})^*\hat{Q}_{\text{SW}}+\hat{Q}_{\text{SW}}(S_w-L_w\hat{C}_{\text{SW}})+\hat{C}_{\text{SW}}^*\hat{C}_{\text{SW}}&=0.\label{proj_qsw}
\end{align} The free parameters $\hat{B}$ and $\hat{C}_{\text{SW}}$ in the projected Lyapunov equations \eqref{proj_p} and \eqref{proj_qsw}, respectively, can be chosen according to Proposition \ref{prop_gramian} to ensure that $\hat{P}$ and $\hat{Q}_{\text{SW}}$ have full rank. Furthermore, Proposition \ref{prop13} can be used to set $\zeta$ in the Sylvester equation \eqref{Tw_sw} to ensure that $T_w$ is invertible.

If the amount of data to be processed makes it computationally infeasible to solve the projected Lyapunov equations \eqref{proj_p} and \eqref{proj_qsw} and Sylvester equation \eqref{Tw_sw}, these can be replaced with their diagonally-dominant approximations as follows:
\begin{align}
\hat{P}&\approx\text{blkdiag}\big(\tfrac{\epsilon}{2} I_m,\cdots,\tfrac{\epsilon}{2} I_m\big),\\
\hat{Q}_{\text{SW}}&\approx\text{blkdiag}\big(\tfrac{\epsilon}{2} I_p,\cdots,\tfrac{\epsilon}{2} I_p\big),\\
T_w&\approx\text{blkdiag}\big((G(j\nu_1))^{-*},\cdots,(G(j\nu_w))^{-*}\big).
\end{align}
\subsection{Data-driven Projection-based Approximation of BST}\label{sec4_5}
The data-driven projection-based approximation of BST (DD-P-BST) can be obtained similarly using Proposition \ref{prop_gramian}, Theorem \ref{Theorem10}, and Proposition \ref{prop13} to obtain Petrov-Galerkin projection-based approximations of $P$ and $Q_{\mathrm{S}}$, i.e., $P\approx \hat{V}\hat{P}\hat{V}^*$ and $Q_{\mathrm{S}}\approx W_{\mathrm{S}}\hat{Q}_{\mathrm{S}}W_{\mathrm{S}}^*=\hat{W}T_w\hat{Q}_{\mathrm{S}}T_w^*\hat{W}^*$. Unlike ADI-based approximations, the projection matrices $\hat{V}$ and $W_{\mathrm{S}}$ are computed with interpolation points on the imaginary axis, i.e., $\mu_i=j\nu_i$.

$\hat{Q}_{\mathrm{S}}$ solves the following Riccati equation:
\begin{align}
(S_w-L_w\hat{C}_{\mathrm{S}})^*\hat{Q}_{\mathrm{S}}+\hat{Q}_{\mathrm{S}}(S_w-L_w\hat{C}_{\mathrm{S}})+\hat{C}_{\mathrm{S}}^*\hat{C}_{\mathrm{S}}+\hat{Q}_{\mathrm{S}}T_w^*\hat{W}^*\tilde{B}_{\mathrm{S}}\tilde{B}_{\mathrm{S}}^*\hat{W}T_w\hat{Q}_{\mathrm{S}}=0,\label{proj_qst}
\end{align} where $T_w$ solves the following Sylvester equation:
\begin{align}
\big(S_w-L_w\zeta-[(\hat{W}^*\hat{V})\hat{P}(C\hat{V})^*+(\hat{W}^*B)D^T]R_{\text{S}}^{-1}\zeta\big)^*T_{w} - T_{w}S_w^*+\zeta^*R_{\text{S}}^{-\frac{1}{2}}L_w^T=0.\label{Tw_bst}
\end{align}
Recall that
\[
\hat{W}^*\tilde{B}_{\mathrm{S}}=\big((\hat{W}^*\hat{V})\hat{P}(C\hat{V})^*+(\hat{W}^*B)D^T\big)R_{\mathrm{S}}^{-\frac{1}{2}}.
\]
By selecting the free parameter $\hat{C}_{\mathrm{S}}$ according to the dual of Theorem \ref{Theorem10}, a stabilizing solution to the projected Riccati equation \eqref{proj_qst} can be obtained non-intrusively from the samples $G(\infty)$, $G(j\omega_i)$, and $G(j\nu_i)$. Furthermore, by selecting the free parameter $\zeta$ in Sylvester equation \eqref{Tw_bst} according to the dual of Proposition \ref{prop13}, the invertibility of $T_w$ can be ensured.

If the amount of data to be processed makes it computationally infeasible to solve the projected Riccati equation \eqref{proj_qst} and Sylvester equation \eqref{Tw_bst}, these can be replaced with their diagonally-dominant approximations as follows:
\begin{align}
\hat{Q}_{\text{S}}&\approx\mathrm{blkdiag}(q_1,\cdots,q_w),\nonumber\\
T_w&\approx\mathrm{blkdiag}\big(t_{w,1},\cdots,t_{w,w}\big),\nonumber
\end{align}
where
\begin{align}
q_i&=\epsilon[t_{w,i}^*\mathscr{H}(j\nu_i)R_{\text{S}}^{-1}\mathscr{H}^*(j\nu_i)t_{w,i}]^{-1}\big(I_p-(I_p-t_{w,i}^*\mathscr{H}(j\nu_i)R_{\text{S}}^{-1}\mathscr{H}^*(j\nu_i)t_{w,i})^{\frac{1}{2}}\big),\nonumber\\
t_{w,i}&=\Big[R_{\text{S}}+\mathscr{H}^*(j\nu_i)\Big]^{-1}R_{\text{S}}^{\frac{1}{2}},\nonumber\\
\mathscr{H}(j\nu_i) &= \begin{cases}
\dfrac{\epsilon}{2(j\nu_i-j\omega_i)}\big[H(j\omega_i)-H(j\nu_i)\big]H^*(j\omega_i)+H(j\nu_i)D^T, & \text{if } \omega_i\neq\nu_i, \\[2ex]
\dfrac{\epsilon}{2} H^{\prime}(j\omega_i)H^*(j\omega_i)+H(j\nu_i)D^T, & \text{if } \omega_i=\nu_i.
\end{cases}\nonumber
\end{align}
\section{Numerical Examples}\label{sec5}
This section evaluates the numerical performance of the proposed algorithms using benchmark dynamical system models for testing MOR algorithms. The MATLAB codes to reproduce the results are provided in \citep{mycode}. All simulations were performed in MATLAB R2021b on a laptop with a 2 GHz Intel i7 processor and 16 GB of RAM.\\

\noindent\textbf{Example 1: CD Player}\\
Consider the $120$-th order multiple-input multiple-output (MIMO) model from the benchmark collection of dynamic systems for testing MOR algorithms \citep{chahlaoui2005benchmark}. For demonstration purposes, the interpolation points consist of two sets of $300$ logarithmically spaced frequencies in $[10^{-3},10^{3}] \cup [-10^{3},-10^{-3}]$ rad/sec, forming $150$ conjugate pairs. The right and left interpolation points do not share any common elements. The free parameter $\epsilon$ is set to $10^{-5}$. For QuadBT, quadrature weights are computed for these nodes using the exponential trapezoidal rule \citep{goseaQuad}. For NI-ADI-BT, the ADI shifts $-\epsilon + j\omega_i$ are used to approximate $P$, and the ADI shifts $-\epsilon + j\nu_i$ are used to approximate $Q$. The samples $G(\epsilon + j\omega_i)$, $G(j\omega_i)$, $G(\epsilon + j\nu_i)$, and $G(j\nu_i)$ are computed numerically using the state-space realization of the CD player. Since the number of samples is moderate, the Sylvester equations and projected Lyapunov equations can be solved directly. Nevertheless, both the exact solutions and their block diagonally dominant approximations are computed and compared. The non-intrusive implementations, which use block diagonally dominant approximations of the projected Lyapunov equations, are marked with an asterisk in the figures. The Hankel singular values of the ROMs of order $25$ produced by BT, QuadBT, NI-ADI-BT, NI-ADI-BT$^*$, DD-P-BT, and DD-P-BT$^*$ are compared in Fig. \ref{fig1a}.
All non-intrusive BT algorithms capture the $20$ most significant Hankel singular values of the original system. Fig. \ref{fig1b} compares the relative error $\frac{\|G(s)-\hat{G}(s)\|_{\mathcal{H}_\infty}}{\|G(s)\|_{\mathcal{H}_\infty}}$ for ROMs of orders $1–25$. Both intrusive and non-intrusive BT algorithms perform comparably.
\begin{figure}[!h]
    \centering
    \begin{subfigure}{0.48\textwidth}
        \centering
        \includegraphics[width=\linewidth]{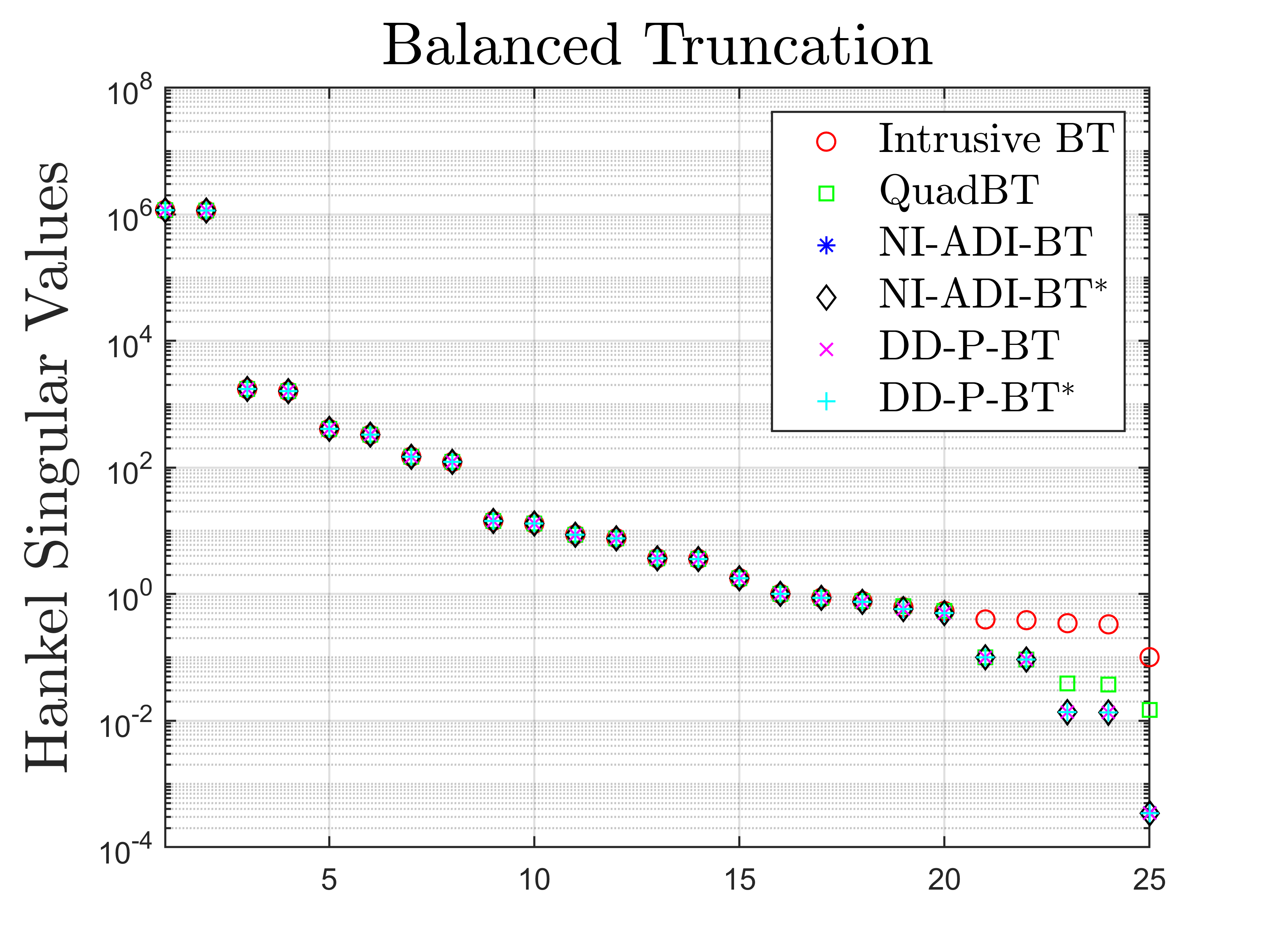}
        \caption{Hankel Singular Values Comparison}\label{fig1a}
    \end{subfigure}
    \hfill
    \begin{subfigure}{0.48\textwidth}
        \centering
        \includegraphics[width=\linewidth]{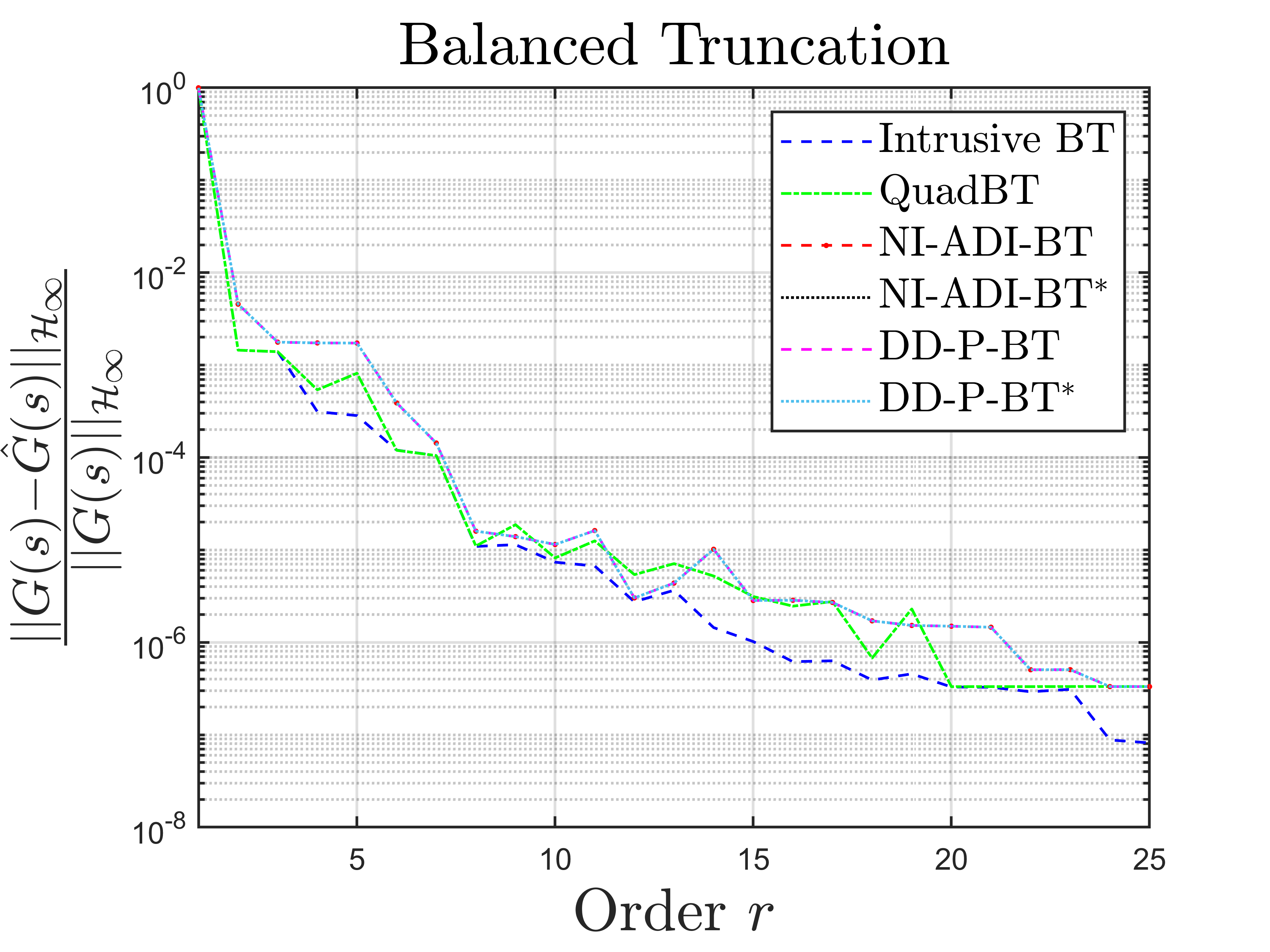}
        \caption{Relative Error Comparison}\label{fig1b}
    \end{subfigure}
    \caption{Performance Comparison between Intrusive and Non-intrusive BT}
\end{figure}

Next, the impact of different values of $\epsilon$ on capturing the dominant $25$ Hankel singular values is investigated for $\epsilon = 10^{-2}$, $10^{-3}$, $10^{-4}$, and $10^{-5}$. The relative difference between the exact Hankel singular values and the captured Hankel singular values in the $L_2$ norm is shown in Table \ref{tab2}. For $\epsilon=10^{-2}$, $Q_v^{-1}$ and $P_{w}^{-1}$ in NI-ADI-BT are ill-conditioned, so the ROM did not capture most Hankel singular values accurately, as indicated by the significant difference in Table \ref{tab2}. Interestingly, when these inverses are replaced with their diagonally dominant approximation, the resulting ROM captured the most significant Hankel singular values accurately, as shown by the small relative difference. For the remaining values of $\epsilon$, no numerical issues arise, leading to accurate ROMs that captured the most significant Hankel singular values well. It can be noted that block-diagonally dominant approximations are promising for avoiding numerical issues and ensuring ROM accuracy. Moreover, these approximations eliminate the need to compute the projected Lyapunov equations. Therefore, we advocate using block-diagonally dominant approximations instead of solving the projected Lyapunov equations.
\begin{table}[!h]
\centering
\caption{Relative Difference in Hankel Singular values}\label{tab2}
\begin{tabular}{|c|c|c|c|c|}
\hline
 &  &   &   &\\ 
Method     & $\epsilon=10^{-2}$ & $\epsilon=10^{-3}$ & $\epsilon=10^{-4}$ & $\epsilon=10^{-5}$\\
 &  &   &   &\\ \hline
  &  &   &   &\\
NI-ADI-BT  & $0.0019$  & $8.7849\times10^{-7}$  & $3.8577\times10^{-7}$  &$3.8576\times10^{-7}$ \\
 &  &   &   &\\ 
NI-ADI-BT$^*$ &  $3.8613\times10^{-7}$  & $3.8579\times10^{-7}$    & $3.8576\times10^{-7}$&  $3.8576\times10^{-7}$  \\
 &  &   &   &\\ 
DD-P-BT    &  $3.8576\times10^{-7}$  & $3.8576\times10^{-7}$   & $3.8576\times10^{-7}$   & $3.8576\times10^{-7}$ \\
 &  &   &   &\\ 
DD-P-BT$^*$   & $3.8576\times10^{-7}$   & $3.8576\times10^{-7}$   &$3.8576\times10^{-7}$   & $3.8576\times10^{-7}$  \\
 &  &   &   &\\ \hline
\end{tabular}
\end{table}\\

\noindent\textbf{Example 2: International Space Station}\\
Consider the $270$-th order MIMO model from the benchmark collection of dynamic systems for testing MOR algorithms \citep{chahlaoui2005benchmark}. For demonstration purposes, the interpolation points consist of two sets of $500$ logarithmically spaced frequencies in $[10^{-3},10^{2}] \cup [-10^{2},-10^{-3}]$ rad/sec, forming $250$ conjugate pairs. The right and left interpolation points do not share any common elements. The free parameter $\epsilon$ is set to $10^{-5}$. For QuadBT, quadrature weights are computed for these nodes using the exponential trapezoidal rule \citep{goseaQuad}. For NI-ADI-BT, the ADI shifts $-\epsilon + j\omega_i$ are used to approximate $P$, and the ADI shifts $-\epsilon + j\nu_i$ are used to approximate $Q$. The samples $G(\epsilon + j\omega_i)$, $G(j\omega_i)$, $G(\epsilon + j\nu_i)$, and $G(j\nu_i)$ are computed numerically using the state-space realization of the international space station model. Since the number of samples is moderate, the Sylvester equations and projected Lyapunov equations can be solved directly. Nevertheless, both the exact solutions and their block diagonally dominant approximations are computed and compared. The non-intrusive implementations, which use block diagonally dominant approximations of the projected Lyapunov equations, are marked with an asterisk in the figures. The Hankel singular values of the ROMs of order $30$ produced by BT, QuadBT, NI-ADI-BT, NI-ADI-BT$^*$, DD-P-BT, and DD-P-BT$^*$ are compared in Fig. \ref{fig2a}. All non-intrusive BT algorithms capture the $20$ most significant Hankel singular values of the original system. Fig. \ref{fig2b} compares the relative error $\frac{\|G(s)-\hat{G}(s)\|_{\mathcal{H}_\infty}}{\|G(s)\|_{\mathcal{H}_\infty}}$ for ROMs of orders $1–30$. As the order of the ROM is increased, the relative error for both intrusive and non-intrusive algorithms declines. Among the non-intrusive methods, QuadBT provides the best performance in this example in terms of capturing the significant Hankel singular values and overall accuracy.
\begin{figure}[!h]
    \centering
    \begin{subfigure}{0.48\textwidth}
        \centering
        \includegraphics[width=\linewidth]{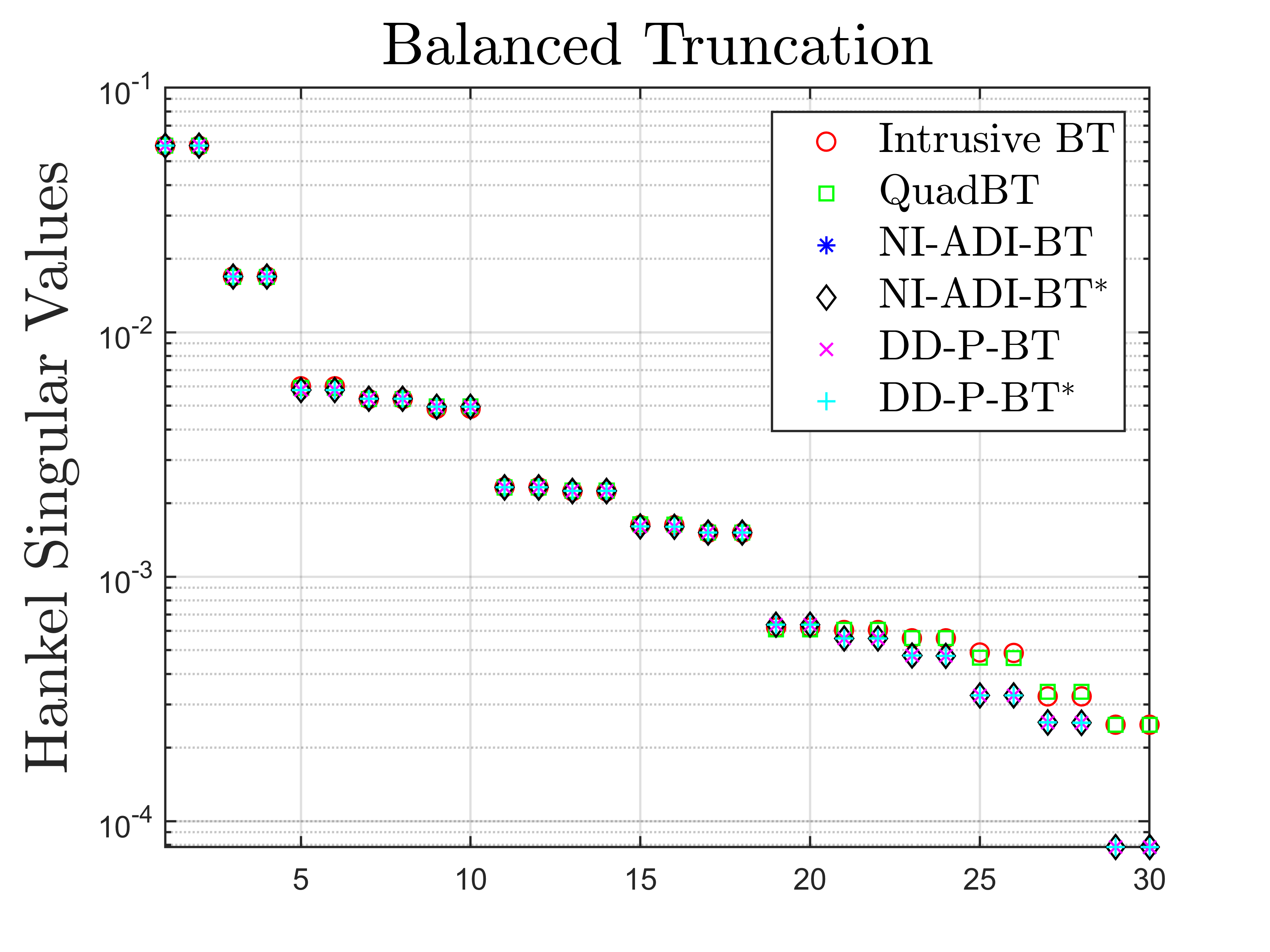}
        \caption{Hankel Singular Values Comparison}\label{fig2a}
    \end{subfigure}
    \hfill
    \begin{subfigure}{0.48\textwidth}
        \centering
        \includegraphics[width=\linewidth]{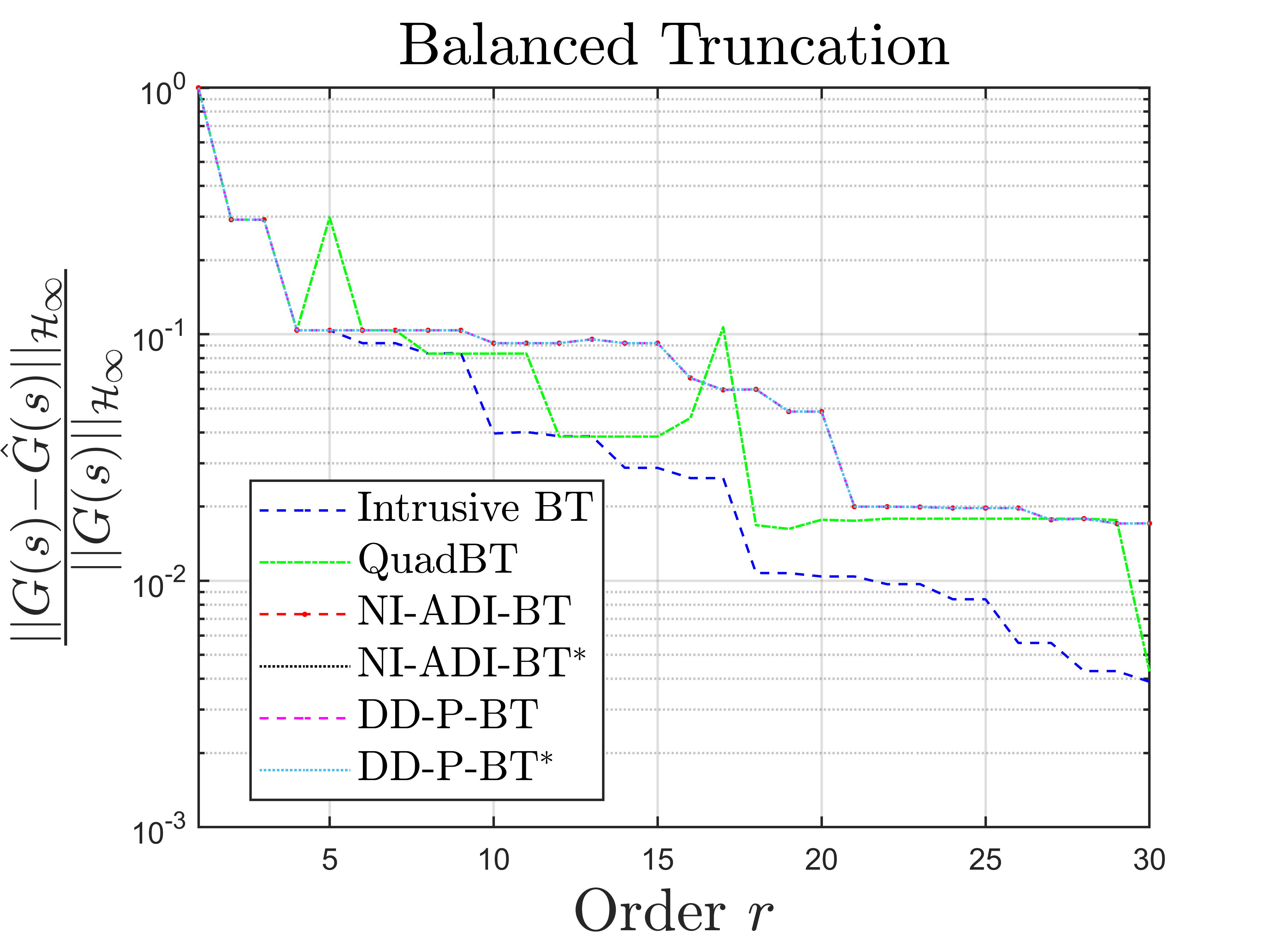}
        \caption{Relative Error Comparison}\label{fig2b}
    \end{subfigure}
    \caption{Performance Comparison between Intrusive and Non-intrusive BT}
\end{figure}\\

\noindent\textbf{Example 3: Los Angeles Building}\\
Consider the \(48\)-th order single-input single-output (SISO) model from the benchmark collection of dynamic systems for testing MOR algorithms \citep{chahlaoui2005benchmark}. For demonstration purposes, the interpolation points consist of two sets of \(500\) logarithmically spaced frequencies in \([10^{-1},10^{3}] \cup [-10^{3},-10^{-1}]\) rad/sec, forming \(250\) conjugate pairs. The right and left interpolation points do not share any common elements. The free parameter \(\epsilon\) is set to \(10^{-5}\), and the parameter \(\gamma\) in \(\mathcal{H}_\infty\)BT is set to \(2.5\). For NI-ADI-LQGBT and NI-ADI-\(\mathcal{H}_\infty\)BT, the RADI shifts \(-\epsilon + j\omega_i\) are used to approximate \(P_{\mathrm{LQG}}\) and \(P_{\mathcal{H}_\infty}\), and the shifts \(-\epsilon + j\nu_i\) are used to approximate \(Q_{\mathrm{LQG}}\) and \(Q_{\mathcal{H}_\infty}\). The samples \(G(\epsilon + j\omega_i)\), \(G(j\omega_i)\), \(G(\epsilon + j\nu_i)\), and \(G(j\nu_i)\) are computed numerically using the state-space realization of the building model. Since the number of samples is moderate, the projected Riccati equations can be solved directly. Nevertheless, both the exact solutions and their block diagonally dominant approximations are computed and compared. The non-intrusive implementations that use block diagonally dominant approximations are marked with an asterisk in the figures. The LQG and \(\mathcal{H}_\infty\) characteristics, defined by \(\sqrt{\lambda_i(P_{\mathrm{LQG}}Q_{\mathrm{LQG}})}\) and \(\sqrt{\lambda_i(P_{\mathcal{H}_\infty}Q_{\mathcal{H}_\infty})}\), respectively, for ROMs of order \(25\) produced by LQGBT, NI-ADI-LQGBT, NI-ADI-LQGBT\(^*\), DD-P-LQGBT, DD-P-LQGBT\(^*\), $\mathcal{H}_\infty$BT, NI-ADI-\(\mathcal{H}_\infty\)BT, NI-ADI-\(\mathcal{H}_\infty\)BT\(^*\), DD-P-\(\mathcal{H}_\infty\)BT, and DD-P-\(\mathcal{H}_\infty\)BT\(^*\) are compared in Fig. \ref{fig3}.
\begin{figure}[!h]
    \centering
    \begin{subfigure}{0.48\textwidth}
        \centering
        \includegraphics[width=\linewidth]{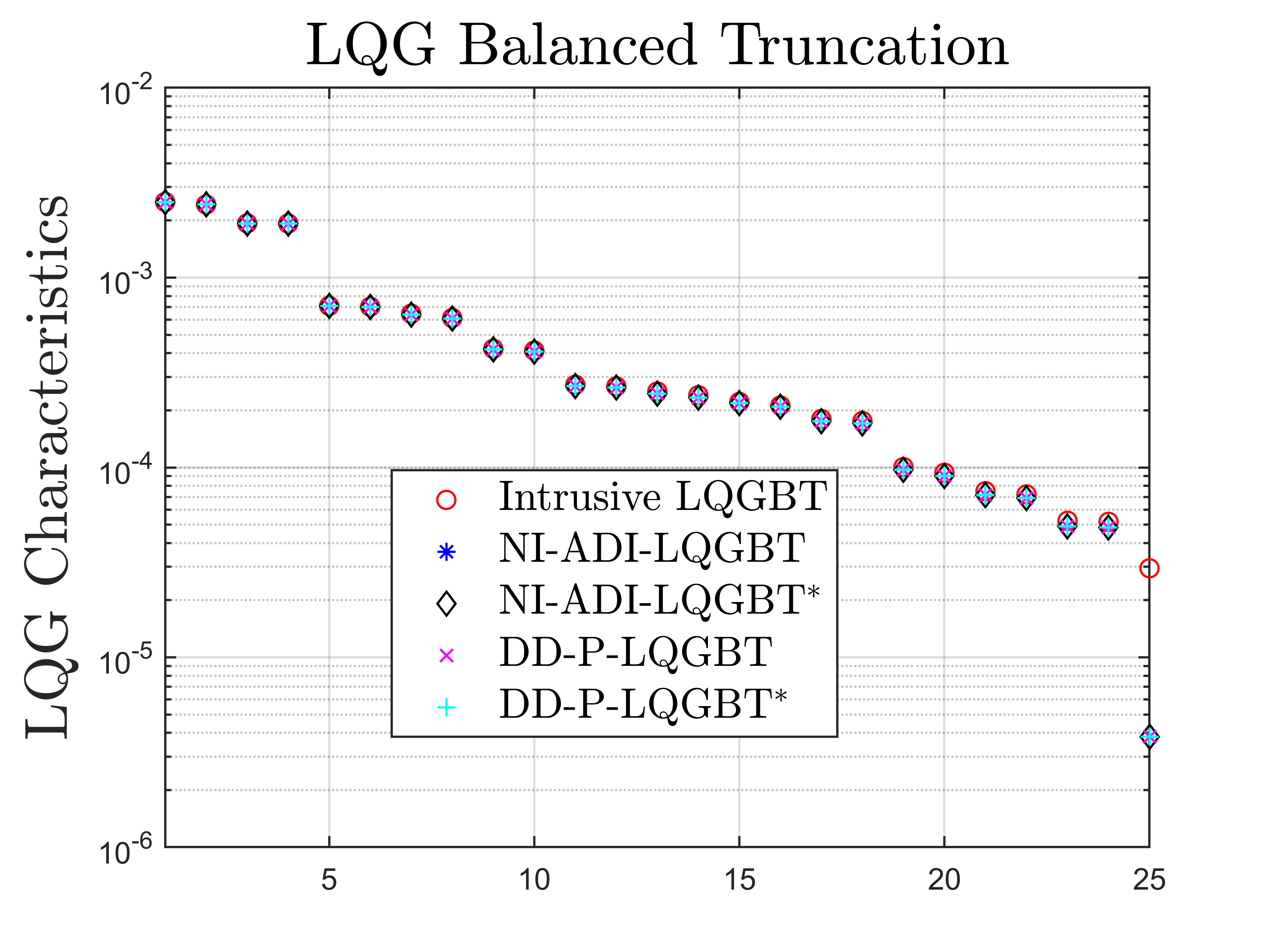}
        \caption{LQG Characteristics Comparison}\label{fig3a}
    \end{subfigure}
    \hfill
    \begin{subfigure}{0.48\textwidth}
        \centering
        \includegraphics[width=\linewidth]{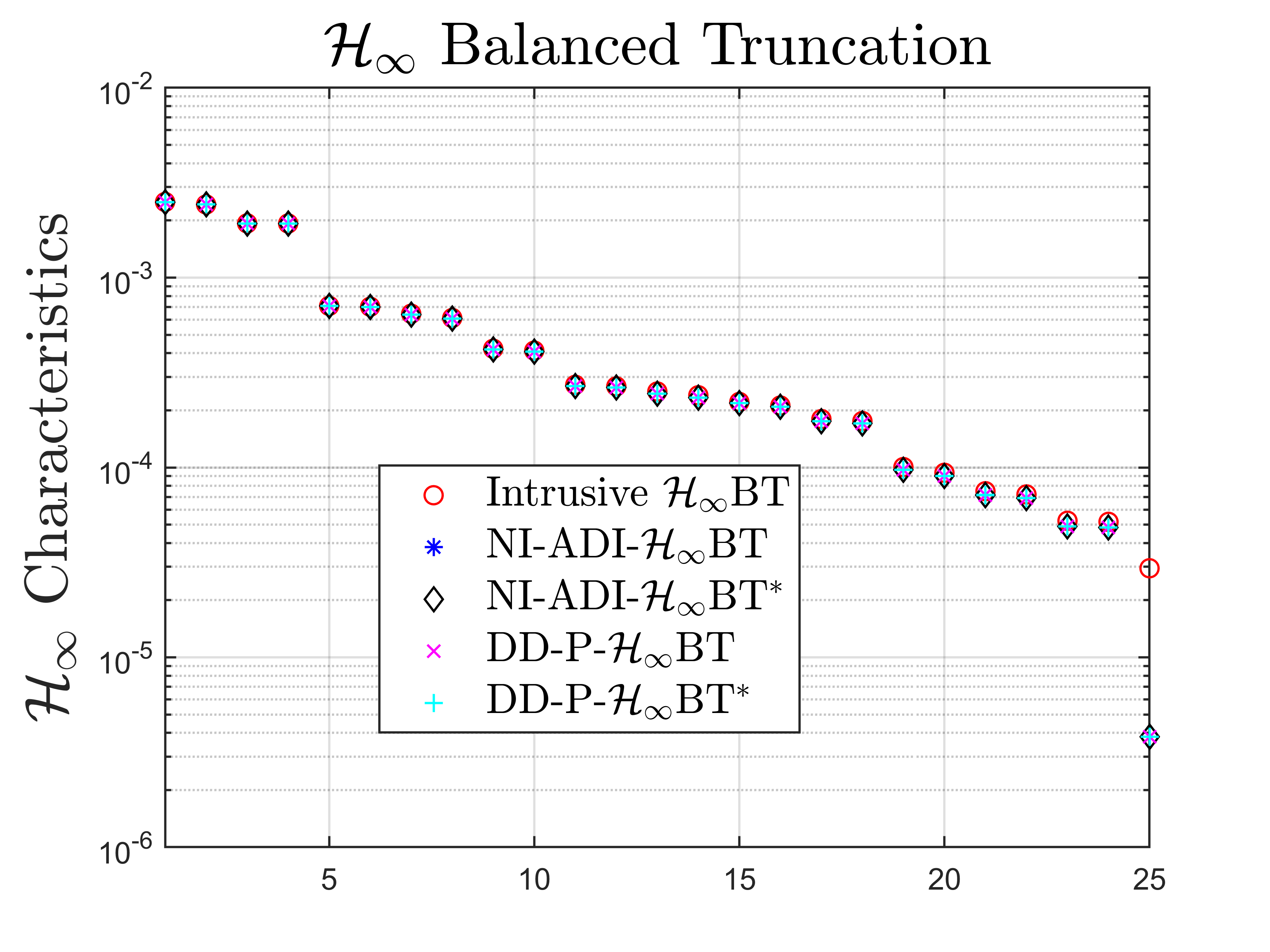}
        \caption{$\mathcal{H}_\infty$ Characteristics Comparison}\label{fig3b}
    \end{subfigure}
    \caption{Comparison between Intrusive and Non-intrusive LQGBT and $\mathcal{H}_\infty$BT}\label{fig3}
\end{figure}
All non-intrusive LQGBT and $\mathcal{H}_\infty$BT algorithms capture the $24$ most significant LQG and $\mathcal{H}_\infty$ characteristics of the original system.  Fig. \ref{fig4} compares the relative error $\frac{|G(s)-\hat{G}(s)|-{\mathcal{H}_\infty}}{|G(s)|_{\mathcal{H}_\infty}}$ for ROMs of orders $1–25$. As the order of the ROM increases, the performance of both intrusive and non-intrusive algorithms becomes comparable.
\begin{figure}[!h]
    \centering
    \begin{subfigure}{0.48\textwidth}
        \centering
        \includegraphics[width=\linewidth]{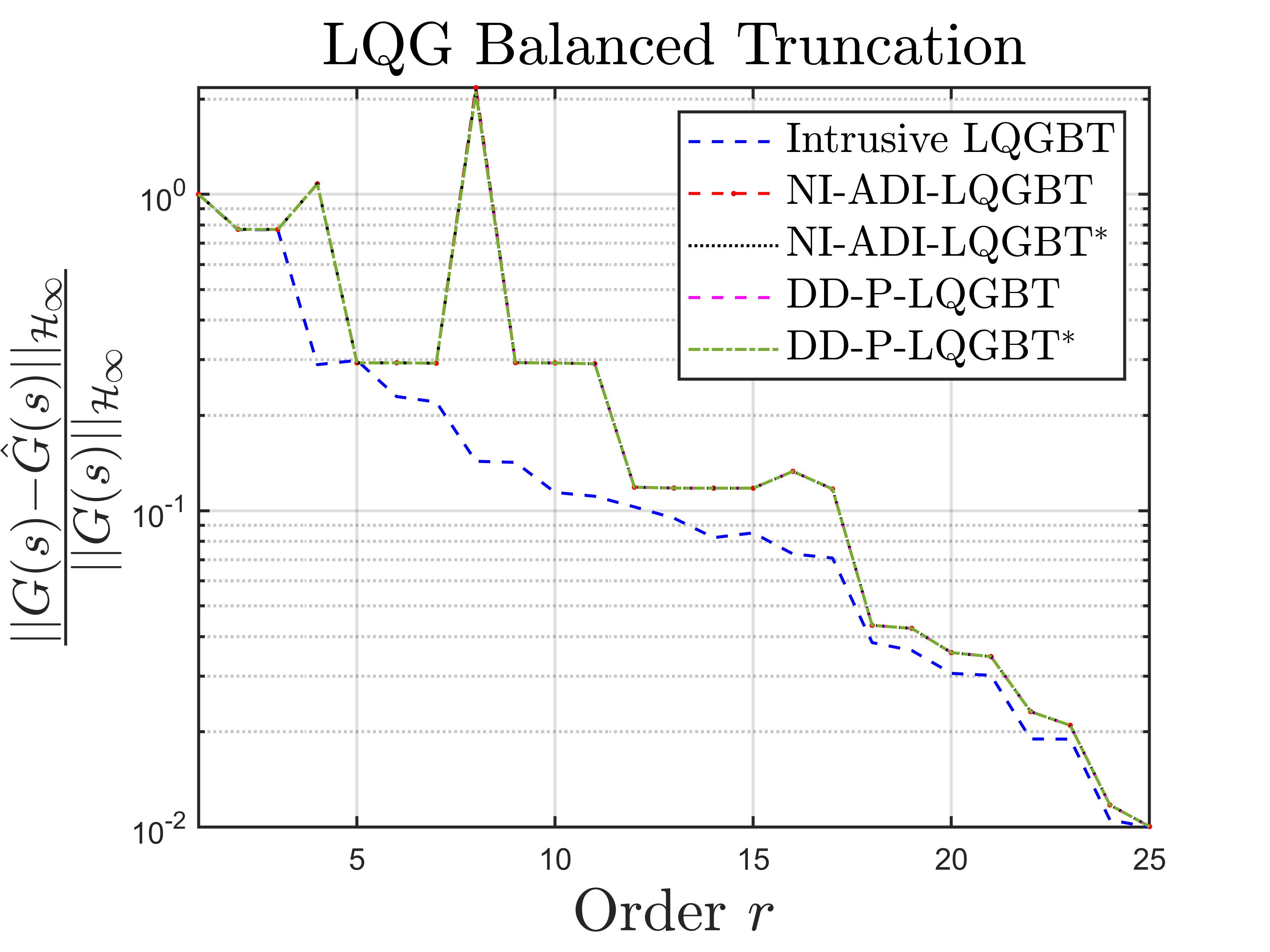}
        \caption{Relative Error Comparison}\label{fig4a}
    \end{subfigure}
    \hfill
    \begin{subfigure}{0.48\textwidth}
        \centering
        \includegraphics[width=\linewidth]{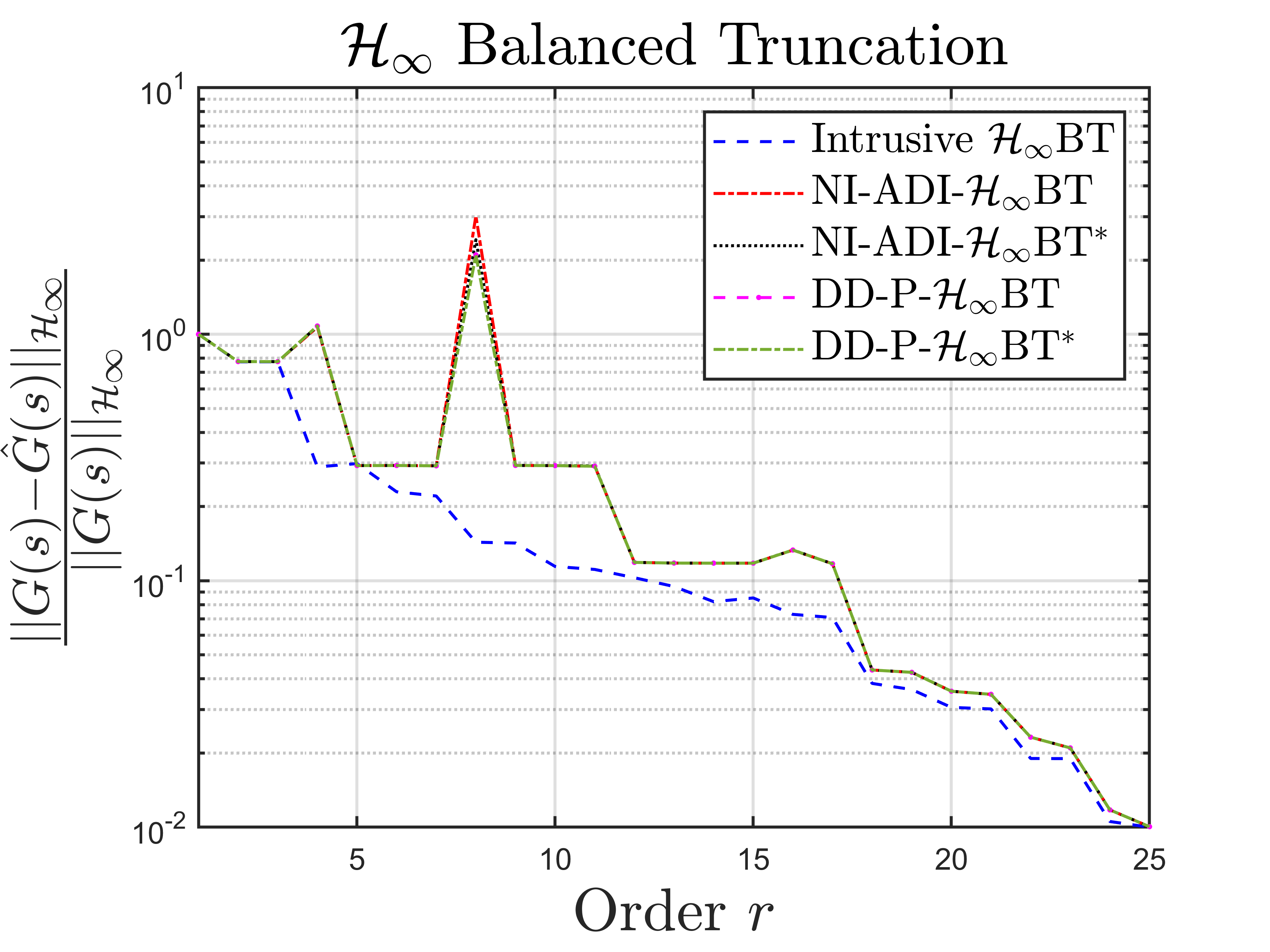}
        \caption{Relative Comparison}\label{fig4b}
    \end{subfigure}
    \caption{Accuracy Comparison between Intrusive and Non-intrusive LQGBT and $\mathcal{H}_\infty$BT}\label{fig4}
\end{figure}\\

\noindent\textbf{Example 4: RLC Ladder}\\
Consider the $400$th-order SISO RLC circuit model from \citep{reiter2023generalizations}. The state-space realization is normalized to ensure $\|G(s)\|_{\mathcal{H}_\infty}<1$, making BRBT applicable. The resulting model is stable, positive-real, bounded-real, and minimum-phase, which guarantees the invertibility of $D$, $DD^T$, $D+D^T$, $I_m-D^TD$, and $I_p-DD^T$. Thus, PRBT, BRBT, SWBT, and BST are applicable to this model and are tested in this example. For demonstration purposes, the interpolation points consist of two sets of \(200\) logarithmically spaced frequencies in \([10^{-1},10^{4}] \cup [-10^{4},-10^{-1}]\) rad/sec, forming \(100\) conjugate pairs. The right and left interpolation points do not share any common elements. The free parameter \(\epsilon\) is set to \(10^{-5}\). For quadrature-based PRBT (QuadPRBT), quadrature-based BRBT (QuadBRBT), and quadrature-based BST (QuadBST), quadrature weights are computed for these nodes using the trapezoidal rule \citep{reiter2023generalizations}. For NI-ADI-PRBT, NI-ADI-BRBT, NI-ADI-SWBT, and NI-ADI-BST, the ADI shifts \(-\epsilon + j\omega_i\) are used to approximate the generalized controllability Gramians, and the shifts \(-\epsilon + j\nu_i\) are used to approximate the generalized observability Gramians. The samples \(G(\epsilon + j\omega_i)\), \(G(j\omega_i)\), \(G(\epsilon + j\nu_i)\), and \(G(j\nu_i)\) are computed numerically using the state-space realization of the RLC model. Since the number of samples is moderate, the projected Lyapunov and Riccati equations can be solved directly. Nevertheless, both the exact solutions and their block diagonally dominant approximations are computed and compared. The non-intrusive implementations that use block diagonally dominant approximations are marked with an asterisk in the figures. The quantities $\sqrt{\lambda_i(P_{\text{PR}}Q_{\text{PR}})}$, $\sqrt{\lambda_i(P_{\text{BR}}Q_{\text{BR}})}$, $\sqrt{\lambda_i(PQ_{\text{SW}})}$, and $\sqrt{\lambda_i(PQ_{\text{S}})}$ are referred to as Hankel-like singular values. Figures \ref{fig5}-\ref{fig8} compare the Hankel-like singular values and the relative error $\frac{\|G(s)-\hat{G}(s)\|_{\mathcal{H}_\infty}}{\|G(s)\|_{\mathcal{H}_\infty}}$. It can be seen that the $25^{th}$-order ROMs generated by intrusive methods and their non-intrusive counterparts accurately capture the $20$ most dominant Hankel-like singular values. Moreover, the non-intrusive algorithms achieve accuracy comparable to that of the intrusive methods for ROMs of orders ranging from $1$ to $25$.
\begin{figure}[!h]
    \centering
    \begin{subfigure}{0.48\textwidth}
        \centering
        \includegraphics[width=\linewidth]{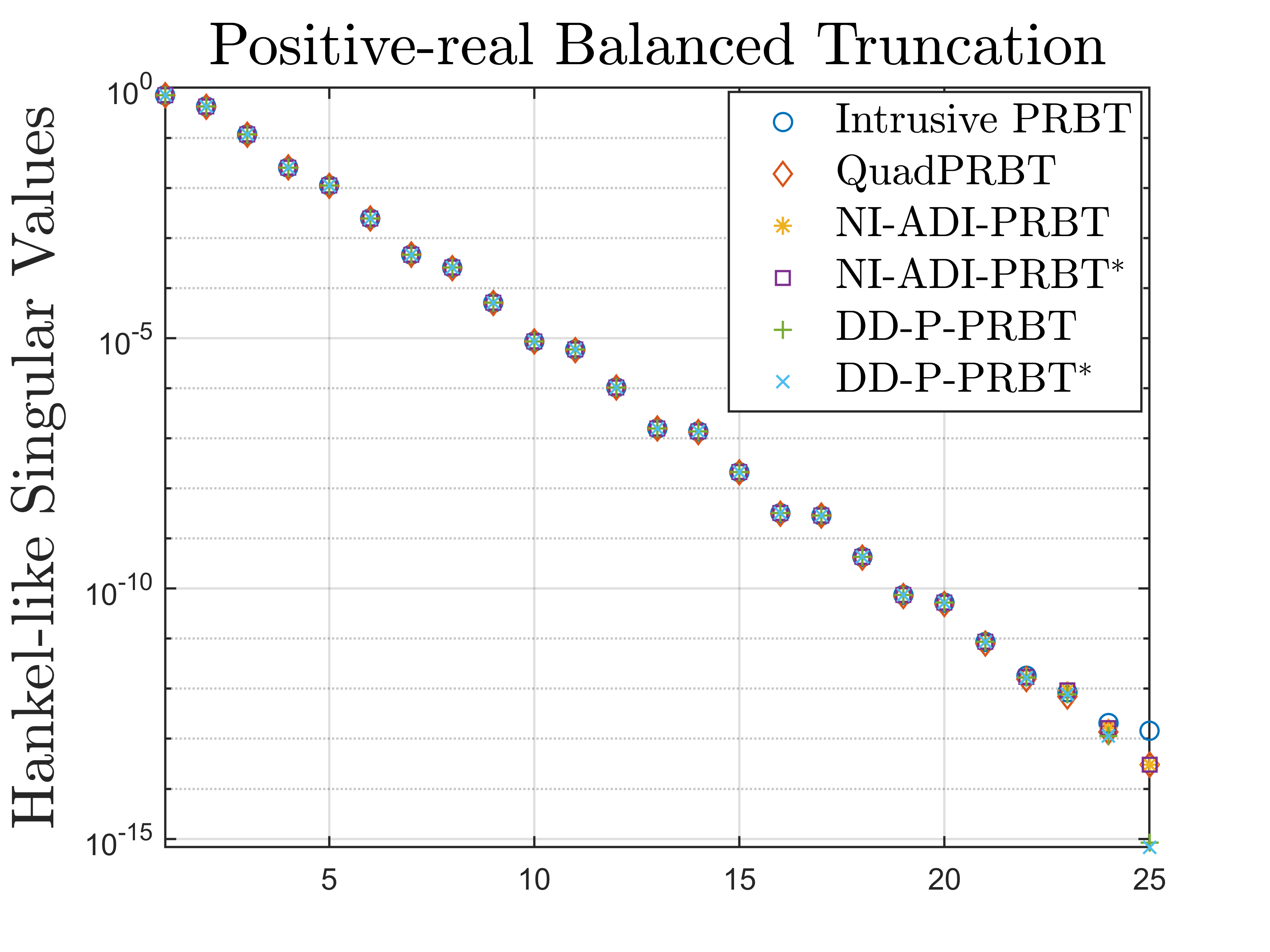}
        \caption{Hankel-like Singular Values Comparison}\label{fig5a}
    \end{subfigure}
    \hfill
    \begin{subfigure}{0.48\textwidth}
        \centering
        \includegraphics[width=\linewidth]{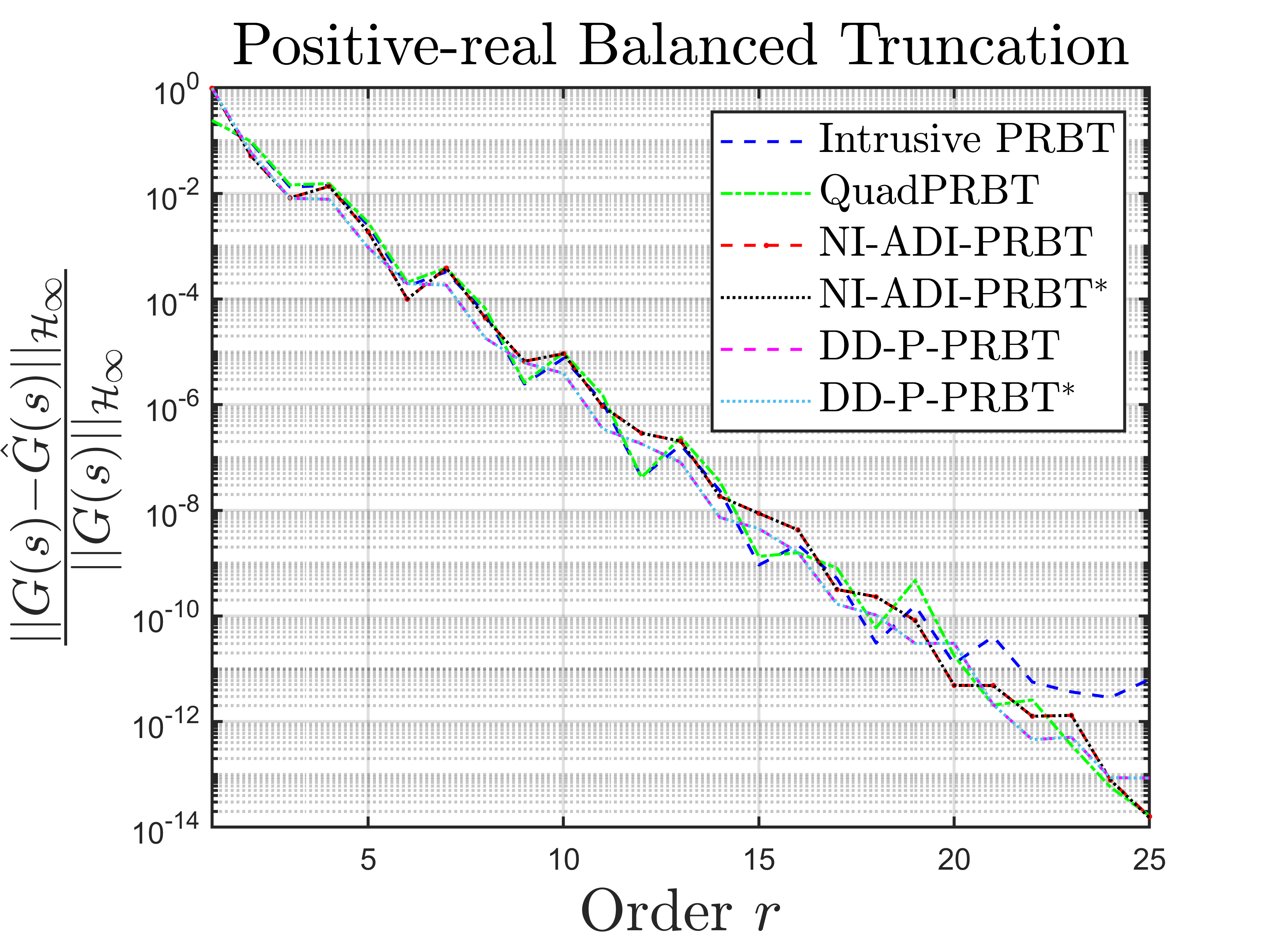}
        \caption{Relative Error Comparison}\label{fig5b}
    \end{subfigure}
    \caption{Performance Comparison between Intrusive and Non-intrusive PRBT}\label{fig5}
\end{figure}
\begin{figure}[!h]
    \centering
    \begin{subfigure}{0.48\textwidth}
        \centering
        \includegraphics[width=\linewidth]{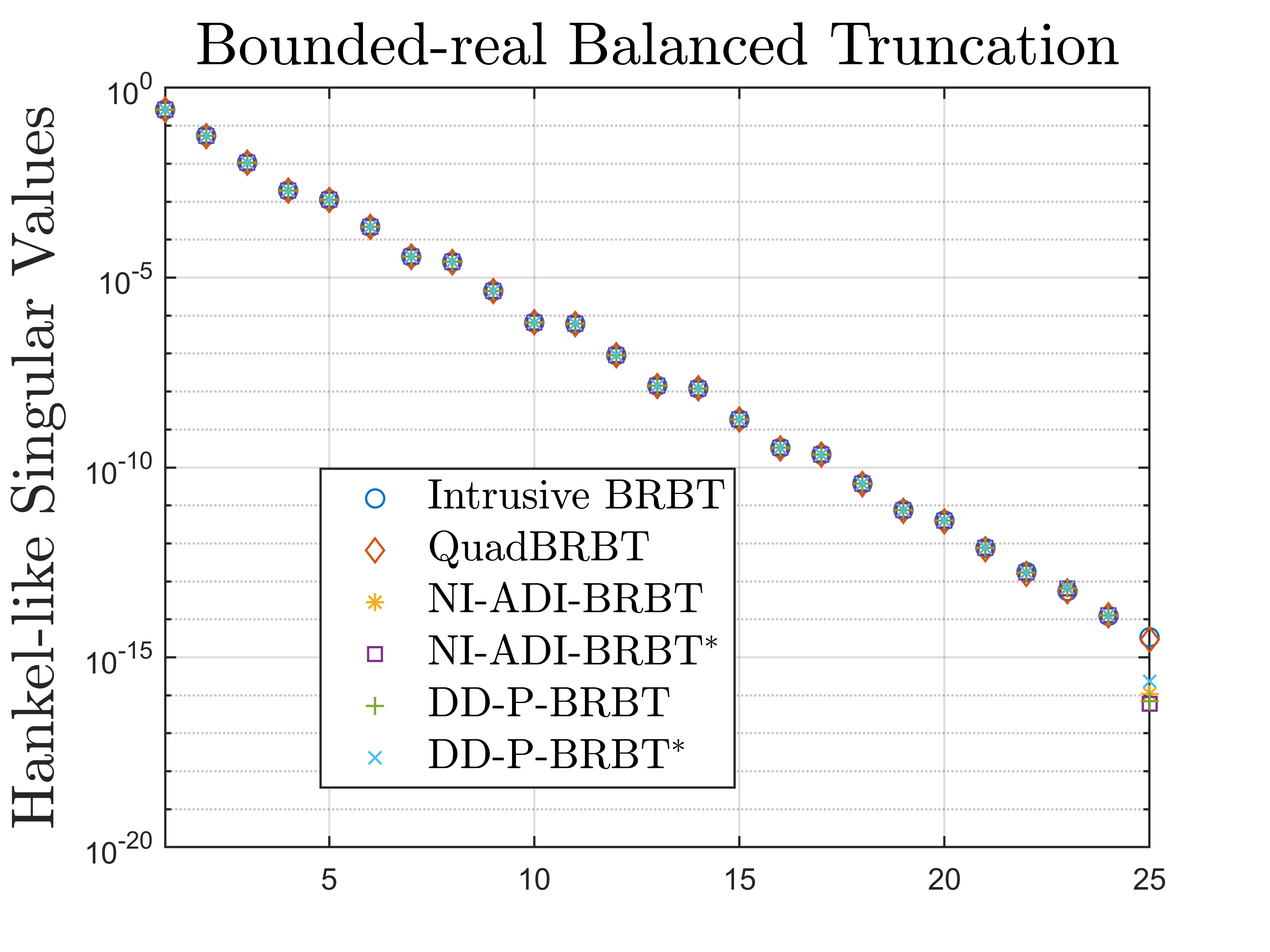}
        \caption{Hankel-like Singular Values Comparison}\label{fig6a}
    \end{subfigure}
    \hfill
    \begin{subfigure}{0.48\textwidth}
        \centering
        \includegraphics[width=\linewidth]{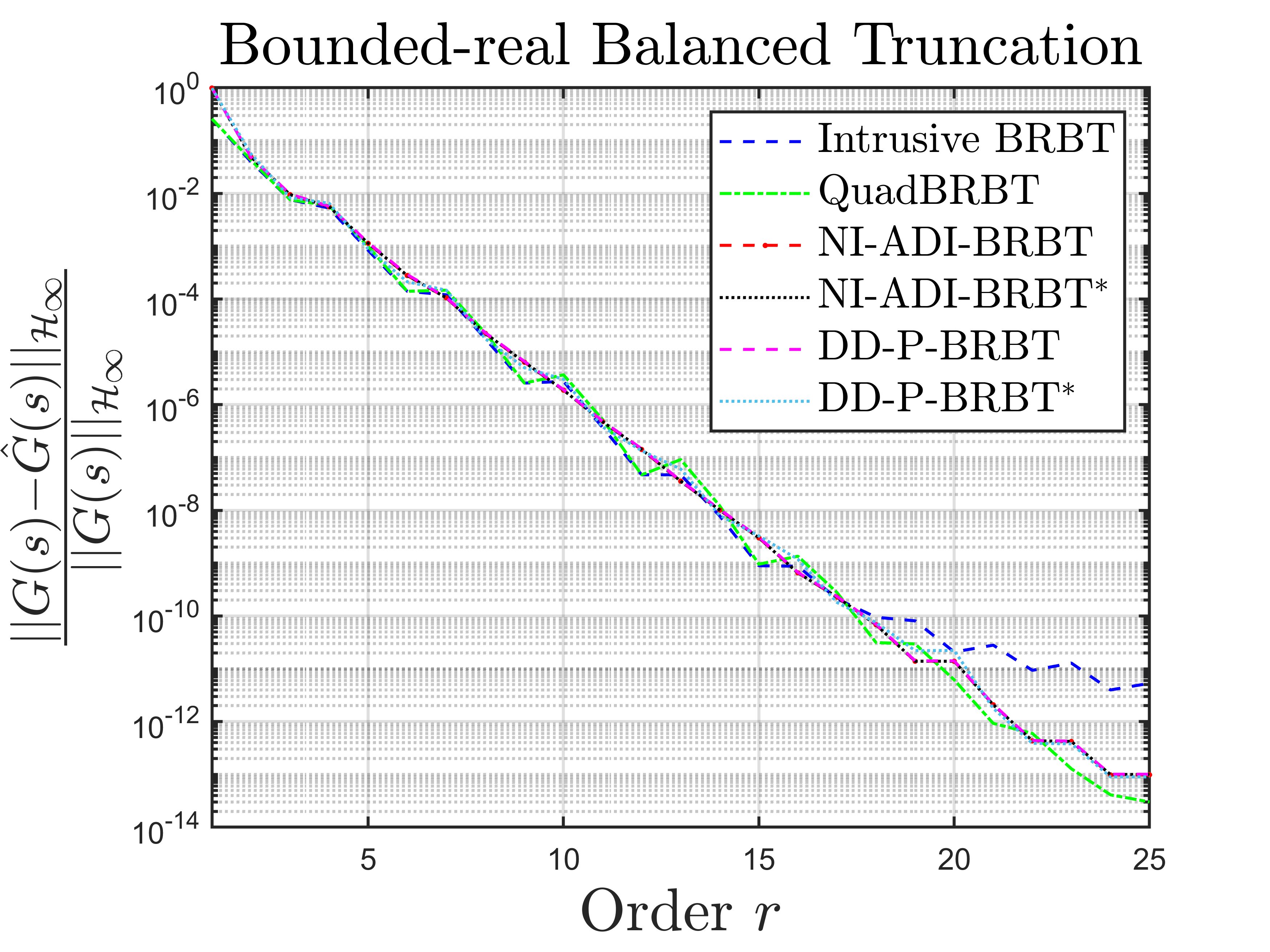}
        \caption{Relative Error Comparison}\label{fig6b}
    \end{subfigure}
    \caption{Performance Comparison between Intrusive and Non-intrusive BRBT}\label{fig6}
\end{figure}
\begin{figure}[!h]
    \centering
    \begin{subfigure}{0.48\textwidth}
        \centering
        \includegraphics[width=\linewidth]{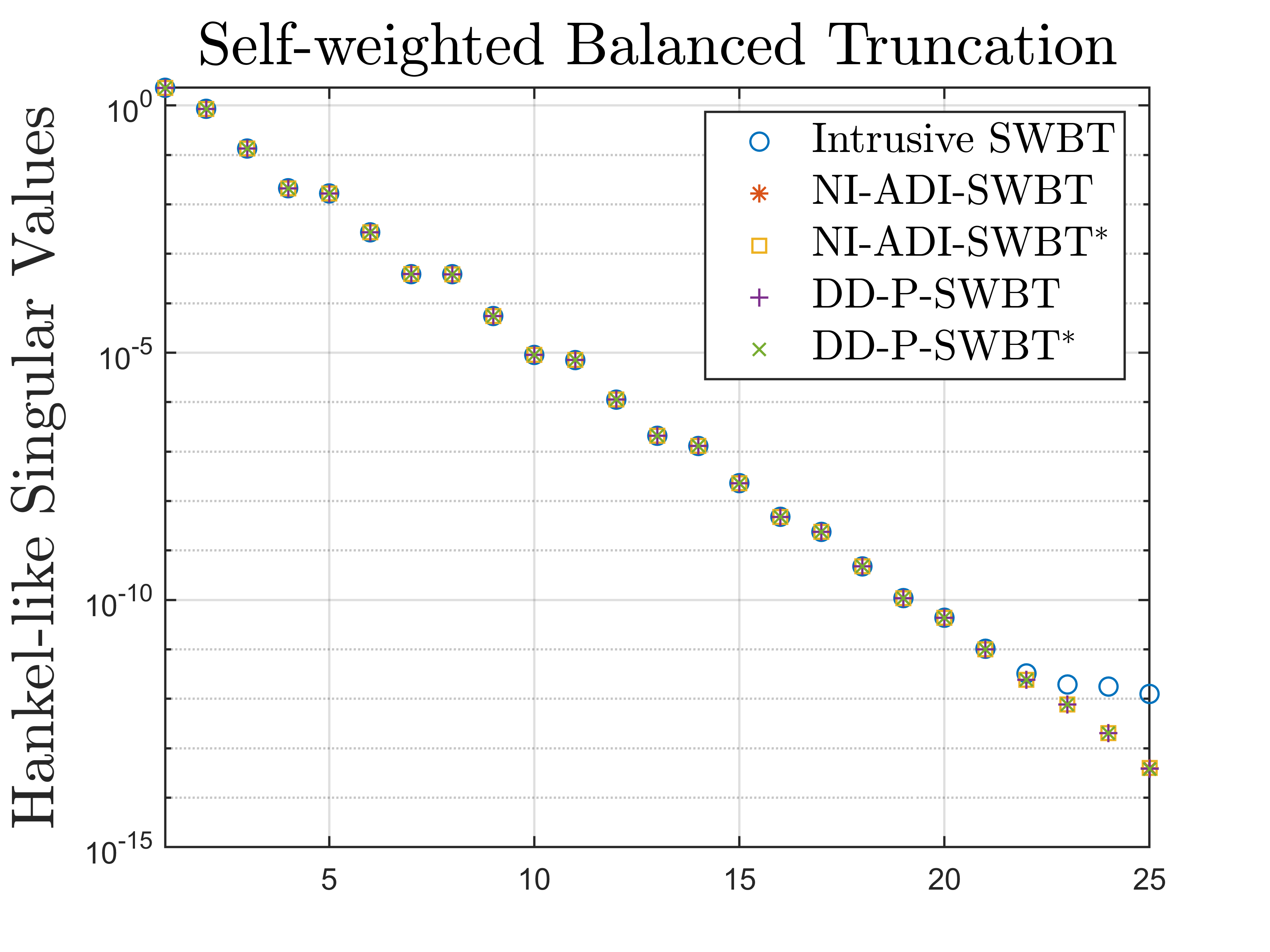}
        \caption{Hankel-like Singular Values Comparison}\label{fig7a}
    \end{subfigure}
    \hfill
    \begin{subfigure}{0.48\textwidth}
        \centering
        \includegraphics[width=\linewidth]{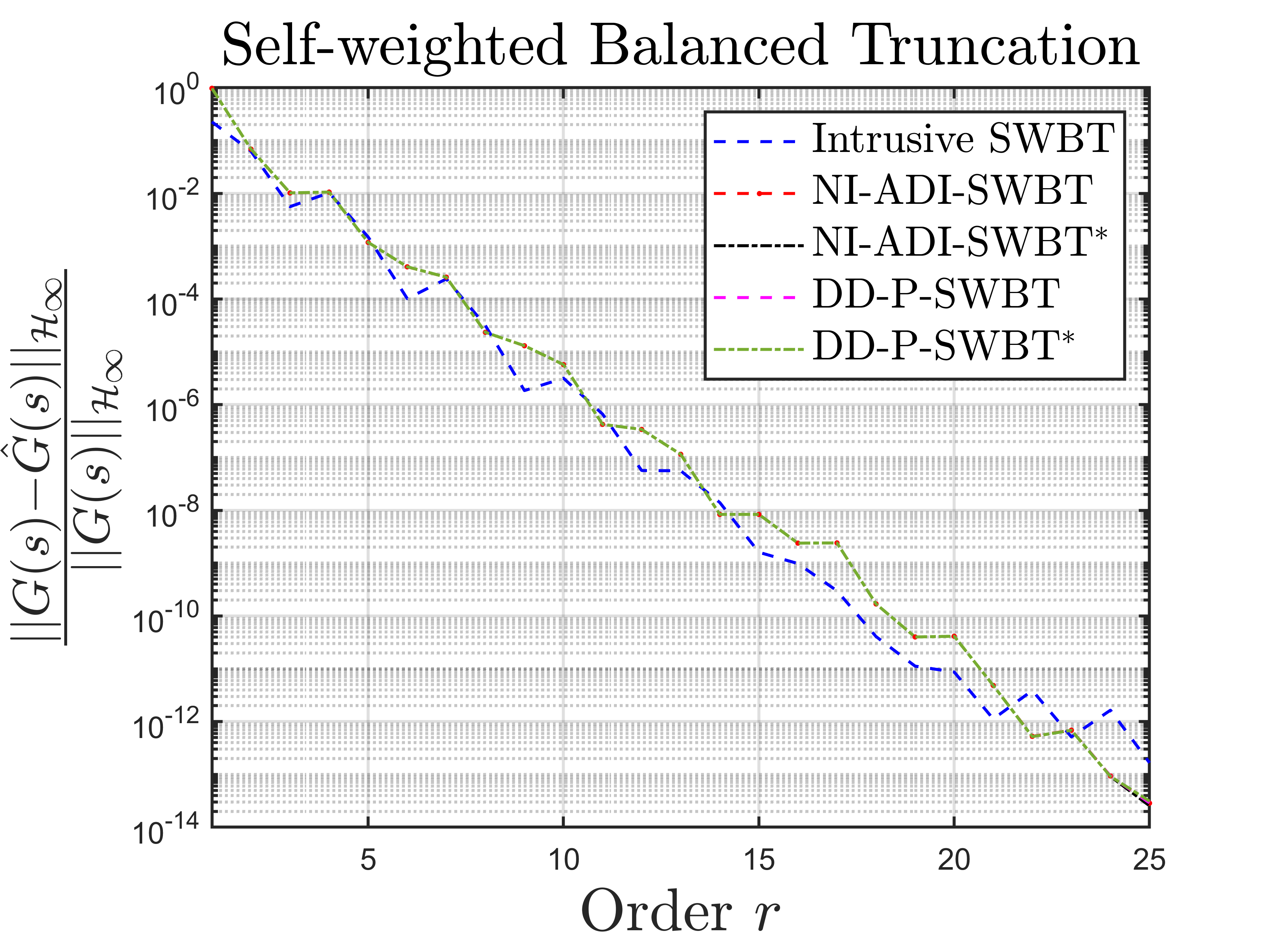}
        \caption{Relative Error Comparison}\label{fig7b}
    \end{subfigure}
    \caption{Performance Comparison between Intrusive and Non-intrusive SWBT}\label{fig7}
\end{figure}
\begin{figure}[!h]
    \centering
    \begin{subfigure}{0.48\textwidth}
        \centering
        \includegraphics[width=\linewidth]{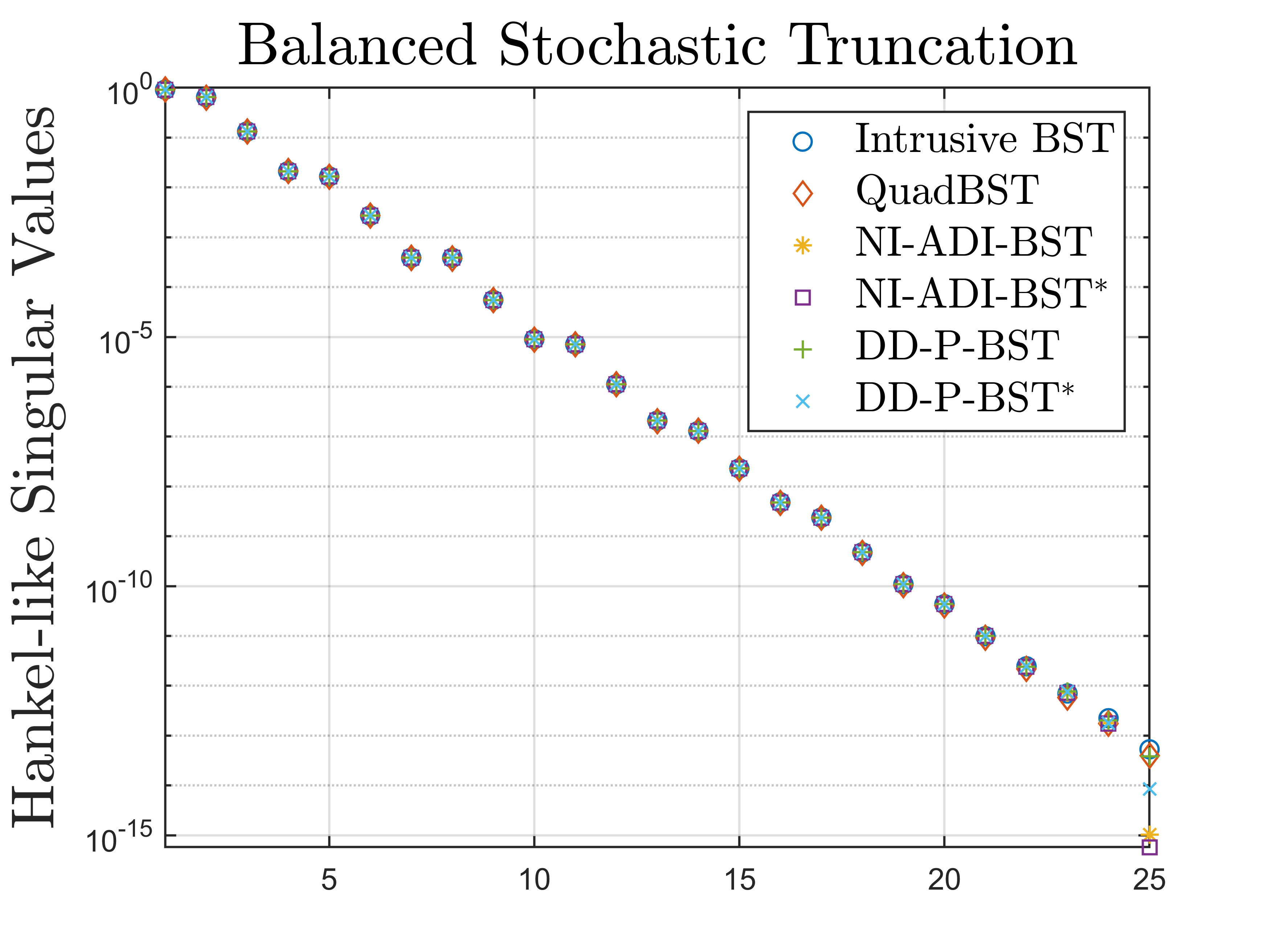}
        \caption{Hankel-like Singular Values Comparison}\label{fig8a}
    \end{subfigure}
    \hfill
    \begin{subfigure}{0.48\textwidth}
        \centering
        \includegraphics[width=\linewidth]{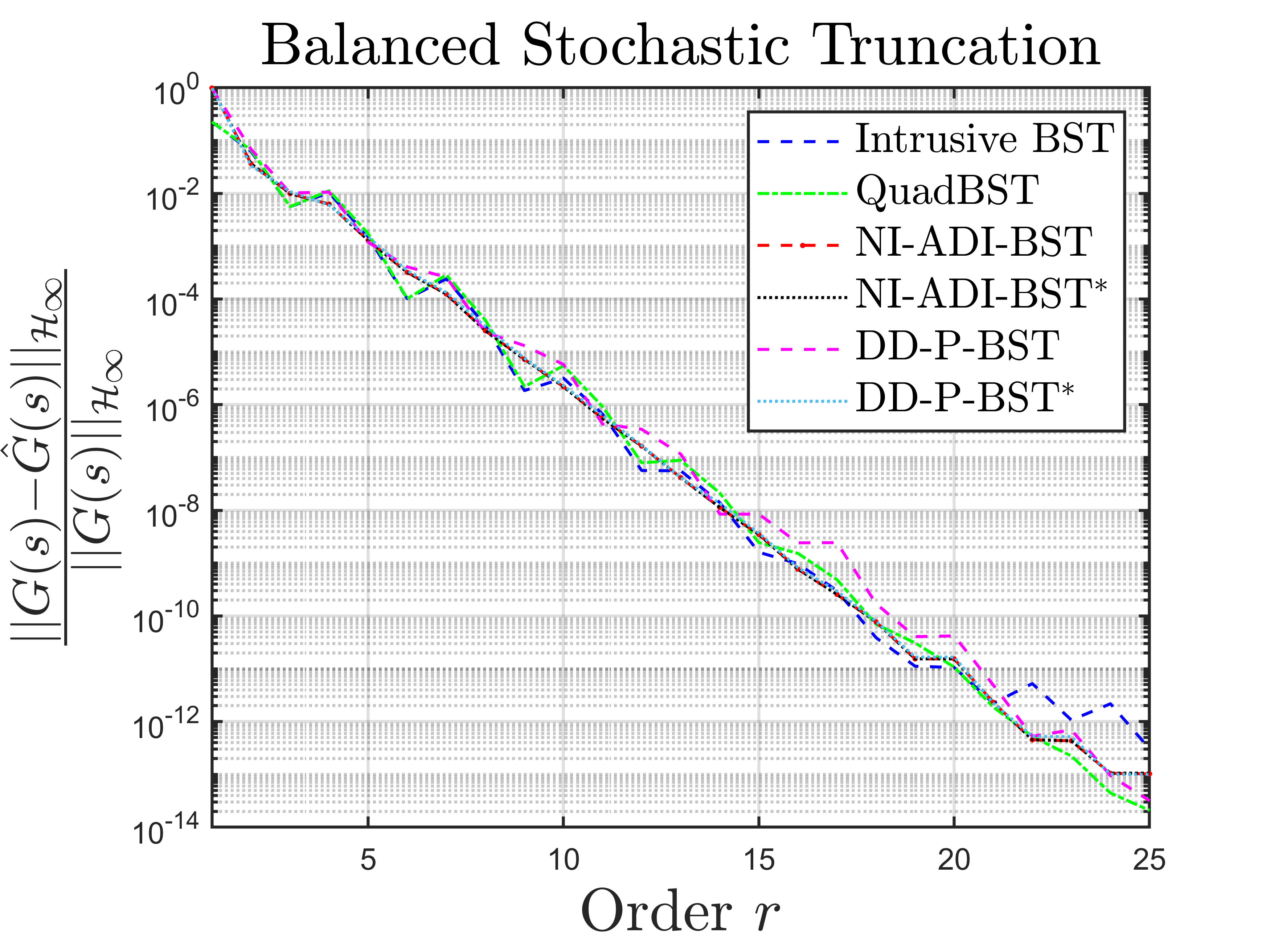}
        \caption{Relative Error Comparison}\label{fig8b}
    \end{subfigure}
    \caption{Performance Comparison between Intrusive and Non-intrusive BST}\label{fig8}
\end{figure}
\section{Conclusion}\label{sec6}
This paper shows that replacing the Gramians in various generalizations of BT—such as LQGBT, \(\mathcal{H}_\infty\)BT, PRBT, BRBT, SWBT, and BST—with their Krylov subspace-based approximations in BSA leads to non-intrusive algorithms. In these methods, ROMs can be constructed from transfer function samples without access to the state-space realization of the original system. Moreover, when the shifts (interpolation points) in the Krylov subspace framework are chosen on the imaginary axis of the \(s\)-plane, the resulting non-intrusive BT approximations become data-driven, since transfer function samples on the imaginary axis can be measured experimentally. This resolves the issue in quadrature-based approximations of these BT generalizations, which require samples of spectral factorizations on the imaginary axis that cannot be measured in practice. Potential numerical issues in the implementation are also discussed, along with remedies to address them. The performance of the proposed non-intrusive BT algorithms is evaluated against their intrusive counterparts using benchmark dynamical system models for MOR. Numerical results show that the proposed methods perform comparably to the intrusive ones.
\section*{Appendix A}
\begin{proof}
By the mixed-product property of Kronecker products \citep{hardy2019matrix}, $(A \otimes B)(C \otimes D) = (AC) \otimes (BD)$, the product $\zeta L_v$ is given by
\begin{equation}
    \zeta L_v = \left( (2\epsilon \mathbf{1}_v) \otimes I_m \right) \left( \mathbf{1}_v^T \otimes I_m \right) = 2\epsilon (\mathbf{1}_v \mathbf{1}_v^T) \otimes I_m.
\end{equation}
The matrix $\mathbf{1}_v \mathbf{1}_v^T$ is a $v \times v$ matrix of ones. Thus, $\zeta L_v$ is a block matrix where every $(i,k)$-th block is $2\epsilon I_m$. The matrix $S_v$ is block diagonal with $(i,i)$-th block $(\epsilon + j\omega_i)I_m$. Consequently, the blocks of $\hat{A} = S_v - \zeta L_v$ are
\begin{equation}
    [\hat{A}]_{ik} = \begin{cases} 
    (j\omega_i - \epsilon)I_m & \text{if } i = k, \\
    -2\epsilon I_m & \text{if } i \neq k.
    \end{cases}
\end{equation}
The diagonal part is $\hat{A}_d = \operatorname{blkdiag}([\hat{A}]_{11}, \dots, [\hat{A}]_{vv})$, and the error matrix is $\hat{E} = \hat{A} - \hat{A}_d$. Explicitly,
\begin{equation}
    \hat{E} = -2\epsilon \left( (\mathbf{1}_v \mathbf{1}_v^T - I_v) \otimes I_m \right).
\end{equation}
Using the property $\|X \otimes Y\|_2 = \|X\|_2 \|Y\|_2$ and noting $\|I_m\|_2 = 1$, we have
\begin{equation}
    \|\hat{E}\|_2 = 2\epsilon \left\| \mathbf{1}_v \mathbf{1}_v^T - I_v \right\|_2.
\end{equation}
The matrix $M = \mathbf{1}_v \mathbf{1}_v^T - I_v$ is real and symmetric. Therefore, its spectral norm equals its spectral radius, $\|M\|_2 = \rho(M)$. The eigenvalues of $\mathbf{1}_v \mathbf{1}_v^T$ are $v$ (multiplicity 1) and $0$ (multiplicity $v-1$). Thus, the eigenvalues of $M$ are $v-1$ (multiplicity 1) and $-1$ (multiplicity $v-1$). For $v=1$, the only eigenvalue is $0$, so $\rho(M)=0$. For $v \ge 2$, the spectral radius is $\max(|v-1|, |-1|) = v-1$. In both cases, $\|M\|_2 = v-1$. Therefore,
\begin{equation}
    \|\hat{A} - \hat{A}_d\|_2 = 2\epsilon(v-1).
\end{equation}
This proves the first claim. The bound $\|\hat{E}\|_2 \le \delta$ is satisfied if $\epsilon \le \frac{\delta}{2(v-1)}$ for $v \ge 2$.

A block matrix is strictly block diagonally dominant with respect to the spectral norm if for every block row $i$:
\begin{equation}
    \|[\hat{A}]_{ii}\|_2 > \sum_{k \neq i} \|[\hat{A}]_{ik}\|_2.
\end{equation}
Substituting the block norms:
\begin{align}
    \|[\hat{A}]_{ii}\|_2 &= \|(j\omega_i - \epsilon)I_m\|_2 = \sqrt{\omega_i^2 + \epsilon^2}, \\
    \sum_{k \neq i} \|[\hat{A}]_{ik}\|_2 &= \sum_{k \neq i} \|-2\epsilon I_m\|_2 = 2\epsilon(v-1).
\end{align}
The condition becomes $\sqrt{\omega_i^2 + \epsilon^2} > 2\epsilon(v-1)$. Since $\omega_i \neq 0$ and $\epsilon > 0$, both sides are strictly positive, allowing us to square the inequality:
\begin{equation}
    \omega_i^2 + \epsilon^2 > 4\epsilon^2(v-1)^2 \implies \omega_i^2 > \epsilon^2 \left( 4(v-1)^2 - 1 \right).
\end{equation}
For $v \ge 2$, the term $4(v-1)^2 - 1 > 0$. Solving for $\epsilon$ yields $\epsilon < \frac{|\omega_i|}{\sqrt{4(v-1)^2 - 1}}$. To satisfy this for all $i$, we require $\epsilon < \frac{\omega_{\min}}{\sqrt{4(v-1)^2 - 1}}$. This proves the second claim.

Combining the conditions from Part 1 and Part 2, selecting $\epsilon$ strictly less than the dominance bound and satisfying the error tolerance ensures both properties. Specifically, choosing $\epsilon$ according to \eqref{eq:epsilon_bound} (with strict inequality for the dominance term) guarantees $\|\hat{A} - \hat{A}_d\|_2 \le \delta$ and that $\hat{A}$ is strictly block diagonally dominant. As $\epsilon \to 0$, $\|\hat{E}\|_2 \to 0$ linearly, rendering $\hat{A}$ essentially diagonal.
\end{proof}
\section*{Appendix B}
\begin{proof}
The eigenvalues of $S_v$ are $\epsilon + j\omega_i$. Since $\epsilon > 0$, the spectra of $-S_v^*$ and $S_v$ are disjoint, guaranteeing a unique solution to \eqref{Qv_LQG}. Let $Q_v$ be partitioned into $m \times m$ blocks $[Q_v]_{ij}$. Since $S_v$ is block diagonal with blocks $S_{ii} = (\epsilon + j\omega_i)I_m$, the $(i,j)$-th block of the Lyapunov equation decouples as:
\begin{equation}
    S_{ii}^* [Q_v]_{ij} + [Q_v]_{ij} S_{jj} = M_{ij}.
\end{equation}
Substituting the scalar forms yields:
\begin{equation}
    (\epsilon - j\omega_i) [Q_v]_{ij} + [Q_v]_{ij} (\epsilon + j\omega_j) = M_{ij}.
\end{equation}
Since scalar multiples of the identity commute with any matrix, factoring out $[Q_v]_{ij}$ gives:
\begin{equation}
    [Q_v]_{ij} \left( (\epsilon - j\omega_i) + (\epsilon + j\omega_j) \right) = M_{ij},
\end{equation}
which simplifies to \eqref{eq:Q_block_sol}. Note that $L_v^* L_v$ has identity blocks on the diagonal, implying $M_{ii} \succeq I_m$. Given $\operatorname{Re}(\lambda(S_v)) > 0$ and the observability assumption in Theorem \ref{Theorem1}, $Q_v$ is Hermitian positive definite.

Strict block diagonal dominance requires $\| [Q_v]_{ii} \|_2 > \sum_{j \neq i} \| [Q_v]_{ij} \|_2$.
From Part 1, the norms are:
\begin{align}
    \| [Q_v]_{ii} \|_2 &= \frac{\| M_{ii} \|_2}{2\epsilon}, \\
    \| [Q_v]_{ij} \|_2 &= \frac{\| M_{ij} \|_2}{\sqrt{4\epsilon^2 + (\omega_j - \omega_i)^2}} \le \frac{\| M_{ij} \|_2}{|\omega_j - \omega_i|} \quad (i \neq j).
\end{align}
Note that $M = L_v^* L_v + \hat{C}^* \hat{C}$. Since $L_v^* L_v = \mathbf{1}_v \mathbf{1}_v^T \otimes I_m$, its diagonal blocks are $I_m$. Since $\hat{C}^* \hat{C} \succeq 0$, the diagonal blocks of $M$ satisfy $M_{ii} \succeq I_m$, implying $\| M_{ii} \|_2 \ge 1$.
The dominance condition is satisfied if:
\begin{equation}
    \frac{\| M_{ii} \|_2}{2\epsilon} > \sum_{j \neq i} \frac{\| M_{ij} \|_2}{|\omega_j - \omega_i|}.
\end{equation}
Rearranging for $\epsilon$ yields \eqref{eq:epsilon_dominance_Q}. Since $\|M_{ii}\|_2 \ge 1$ and the frequencies are distinct, the right-hand side is strictly positive (or infinite), ensuring such an $\epsilon^*$ exists.

Taking the limit $\epsilon \to 0$ in \eqref{eq:Q_block_sol} for $i \neq j$:
\begin{equation}
    \lim_{\epsilon \to 0} \frac{M_{ij}}{2\epsilon + j(\omega_j - \omega_i)} = \frac{M_{ij}}{j(\omega_j - \omega_i)},
\end{equation}
where we assume $M_{ij}$ remains bounded and continuous as $\epsilon \to 0$. For the diagonal terms ($i=j$), the denominator is $2\epsilon$, so $\| [Q_v]_{ii} \|_2 = \frac{\|M_{ii}\|_2}{2\epsilon} \to \infty$ as $\epsilon \to 0$ since $\|M_{ii}\|_2 \ge 1$.
\end{proof}
\section*{Appendix C}
\begin{proof}
Equation \eqref{eq:Tv_sylvester} is rewritten as $\mathcal{A}_\epsilon T_v - T_v S_v = -\zeta \mathcal{Q} L_v$, where $\mathcal{A}_\epsilon = S_v - \zeta (L_v + \mathcal{P})$. The $(i,j)$-th block equation is given by:
\begin{equation}
    \sum_{k=1}^v [\mathcal{A}_\epsilon]_{ik} [T_v]_{kj} - [T_v]_{ij} [S_v]_{jj} = -[\zeta \mathcal{Q} L_v]_{ij}.
\end{equation}
Substituting the explicit block structures derived from the definitions yields: $[S_v]_{jj} = (\epsilon + j\omega_j)I_m$, $[\mathcal{A}_\epsilon]_{ii} = (j\omega_i - \epsilon)I_m - 2\epsilon \mathcal{P}_i$, $[\mathcal{A}_\epsilon]_{ik} = -2\epsilon (I_m + \mathcal{P}_k)$ for $k \neq i$, and $[\zeta \mathcal{Q} L_v]_{ij} = 2\epsilon \mathcal{Q}$ for all $i,j$.
The block equation becomes:
\begin{equation}
    \left( (j\omega_i - \epsilon)I_m - 2\epsilon \mathcal{P}_i \right) [T_v]_{ij} - [T_v]_{ij} (\epsilon + j\omega_j)I_m - 2\epsilon \sum_{k \neq i} (I_m + \mathcal{P}_k) [T_v]_{kj} = -2\epsilon \mathcal{Q}.
\end{equation}
Grouping terms involving $[T_v]_{ij}$:
\begin{equation}
    [T_v]_{ij} \left( j(\omega_i - \omega_j) - 2\epsilon \right) - 2\epsilon \mathcal{P}_i [T_v]_{ij} - 2\epsilon \sum_{k \neq i} (I_m + \mathcal{P}_k) [T_v]_{kj} = -2\epsilon \mathcal{Q}.
\end{equation}
For $i \neq j$, the term $j(\omega_i - \omega_j)I_m$ is invertible since $\omega_i \neq \omega_j$. As $\epsilon \to 0$, the equation is dominated by $j(\omega_i - \omega_j) [T_v]_{ij} = O(\epsilon)$. Thus, $\| [T_v]_{ij} \|_2 \le \frac{2\epsilon \|\mathcal{Q}\|_2}{|\omega_i - \omega_j|} + O(\epsilon^2)$, proving \eqref{eq:Tv_offdiag_limit}.

For $i = j$, the term $j(\omega_i - \omega_j)$ vanishes. The equation becomes:
\begin{equation}
    -2\epsilon (I_m + \mathcal{P}_i) [T_v]_{ii} - 2\epsilon \sum_{k \neq i} (I_m + \mathcal{P}_k) [T_v]_{ki} = -2\epsilon \mathcal{Q}.
\end{equation}
Dividing by $-2\epsilon$ (valid since $\epsilon > 0$):
\begin{equation}
    (I_m + \mathcal{P}_i) [T_v]_{ii} + \sum_{k \neq i} (I_m + \mathcal{P}_k) [T_v]_{ki} = \mathcal{Q}.
\end{equation}
From the off-diagonal case, $[T_v]_{ki} = O(\epsilon)$ for $k \neq i$. Thus, the summation term is $O(\epsilon)$. Taking the limit $\epsilon \to 0$:
\begin{equation}
    (I_m + \mathcal{P}_i) \lim_{\epsilon \to 0} [T_v]_{ii} = \mathcal{Q}.
\end{equation}
Given $(I_m + \mathcal{P}_i)$ is invertible, \eqref{eq:Tv_diag_limit} follows.

Strict block diagonal dominance requires $\| [T_v]_{ii} \|_2 > \sum_{j \neq i} \| [T_v]_{ij} \|_2$.
From Part 1, $\| [T_v]_{ii} \|_2 \to \| (I_m + \mathcal{P}_i)^{-1} \mathcal{Q} \|_2 > 0$.
Meanwhile, $\| [T_v]_{ij} \|_2 \le \frac{2\epsilon \|\mathcal{Q}\|_2}{|\omega_i - \omega_j|} + O(\epsilon^2)$.
Thus, for sufficiently small $\epsilon$, the diagonal norm dominates the sum of off-diagonal norms. The bound \eqref{eq:epsilon_dominance_T} is derived by retaining the leading order terms and ensuring the inequality holds strictly.

Let $\tilde{T}_v = \operatorname{blkdiag}([T_v]_{11}, \dots, [T_v]_{vv})$. The error matrix $E_T = T_v - \tilde{T}_v$ contains only off-diagonal blocks. Since each off-diagonal block is bounded by $C_{ij} \epsilon$, the spectral norm of the block matrix is bounded by $K \epsilon$ for some constant $K$ dependent on dimensions and frequency separations.
\end{proof}
\section*{Appendix D}
The Sylvester equation \eqref{Xp} has a unique solution because the spectra of $S_p^{*}$ and $-S_v$ are disjoint: $\operatorname{spec}(S_p^{*}) = \{\epsilon - j\omega_i\}_{i=1}^{v}$ (each eigenvalue with multiplicity $m$) and $\operatorname{spec}(-S_v) = \{-j\omega_i\}_{i=1}^{v}$ (each with multiplicity $m$) satisfy $\operatorname{Re}(\epsilon - j\omega_i) = \epsilon > 0 = \operatorname{Re}(-j\omega_i)$ for all $i$.

Since $S_v$, $S_p^{*}$, and $L_v^TL_v = \mathbf{1}_{v}\mathbf{1}_{v}^{\mathsf{T}} \otimes I_m$ share the Kronecker structure $\cdot \otimes I_m$, we seek a solution of the form $X_p = Y \otimes I_m$ with $Y \in \mathbb{C}^{v \times v}$. Substituting into \eqref{Xp} and exploiting the mixed-product property of Kronecker products \citep{hardy2019matrix} yields the reduced equation
\begin{equation}
\operatorname{diag}(\epsilon - j\omega_1, \dots, \epsilon - j\omega_{v}) \, Y + Y \, \operatorname{diag}(j\omega_1, \dots, j\omega_{v}) = \mathbf{1}_{v}\mathbf{1}_{v}^{\mathsf{T}}.
\end{equation}
Entry-wise for $i,j = 1,\dots,v$:
\begin{equation}
(\epsilon - j\omega_i) Y_{ij} + Y_{ij} (j\omega_j) = 1
\;\Longrightarrow\;
Y_{ij} = \frac{1}{\epsilon + j(\omega_j - \omega_i)},
\end{equation}
which is well-defined for all $\epsilon > 0$ since $\operatorname{Re}(\epsilon + j(\omega_j - \omega_i)) = \epsilon > 0$.

For diagonal entries ($i=j$), $Y_{ii} = 1/\epsilon$. For off-diagonal entries ($i \neq j$),
\begin{equation}
|Y_{ij}| = \frac{1}{\sqrt{\epsilon^2 + (\omega_j - \omega_i)^2}} \leq \frac{1}{|\omega_j - \omega_i|} \leq \frac{1}{\Delta_{\min}}.
\end{equation}
Define $E := Y - \frac{1}{\epsilon}I_{v}$, so $E_{ii}=0$ and $E_{ij}=Y_{ij}$ for $i \neq j$. Using the Frobenius norm identity $\|A \otimes B\|_F = \|A\|_F \|B\|_F$,
\begin{align}
\|X_p - \widetilde{X}_p\|_F &= \|E \otimes I_m\|_F = \|E\|_F \, \|I_m\|_F = \|E\|_F \sqrt{m}, \\
\|E\|_F^2 &= \sum_{i \neq j} |Y_{ij}|^2 \leq \sum_{i \neq j} \frac{1}{\Delta_{\min}^2} = \frac{v(v-1)}{\Delta_{\min}^2}
\;\Longrightarrow\;\nonumber\\
&\hspace*{2cm}\|E\|_F \leq \frac{\sqrt{v(v-1)}}{\Delta_{\min}}, \\
\|\widetilde{X}_p\|_F &= \frac{1}{\epsilon} \|I_{mv}\|_F = \frac{\sqrt{mv}}{\epsilon}.
\end{align}
Therefore,
\begin{equation}
\frac{\|X_p - \widetilde{X}_p\|_F}{\|\widetilde{X}_p\|_F}
\leq \frac{\sqrt{v(v-1)}/\Delta_{\min} \cdot \sqrt{m}}{\sqrt{mv}/\epsilon}
= \frac{\epsilon \sqrt{v-1}}{\Delta_{\min}},
\end{equation}
proving \eqref{eq:relative_error}. The tolerance condition follows immediately.

For diagonal dominance, observe that for each row $i$,
\begin{equation}
|Y_{ii}| - \sum_{j \neq i} |Y_{ij}| \geq \frac{1}{\epsilon} - \sum_{j \neq i} \frac{1}{|\omega_j - \omega_i|}
\geq \frac{1}{\epsilon} - \frac{v-1}{\Delta_{\min}}.
\end{equation}
This quantity is positive when $\epsilon < \Delta_{\min}/(v-1)$, establishing strict diagonal dominance of $Y$ and hence of $X_p = Y \otimes I_m$.
\section*{Appendix E}
\begin{proof}
From Proposition \ref{prop_BT_diag}, $X_p$ is strictly diagonally dominant for $\epsilon < \Delta_{\min}/(v-1)$. The bound \eqref{eq:epsilon_bound_gramian} satisfies this condition since $v \ge 2$. Thus, $X_p$ is invertible, and $\hat{B}$ is well-defined.
From Proposition \ref{prop_BT_diag}, $X_p = \frac{1}{\epsilon} I_{vm} + \mathcal{O}(1)$. Using the Neumann series expansion \citep{meyer2023matrix} for the inverse:
\[
X_p^{-1} = \left( \frac{1}{\epsilon} (I_{vm} + \epsilon \mathcal{O}(1)) \right)^{-1} = \epsilon (I_{vm} + \epsilon \mathcal{O}(1))^{-1} = \epsilon I_{vm} + \mathcal{O}(\epsilon^2).
\]
Substituting this into the definition of $\hat{B}$:
\[
\hat{B} = (\epsilon I_{vm} + \mathcal{O}(\epsilon^2)) (\mathbf{1}_v \otimes I_m) = \epsilon (\mathbf{1}_v \otimes I_m) + \mathcal{O}(\epsilon^2).
\]
Let $J = \mathbf{1}_v \mathbf{1}_v^T \otimes I_m$. Then $\hat{B} \hat{B}^* = \epsilon^2 J + \mathcal{O}(\epsilon^3)$.

Substituting the expansion of $\hat{B}$ into $\hat{A}$:
\[
\hat{A} = S_v - (\epsilon (\mathbf{1}_v \otimes I_m) + \mathcal{O}(\epsilon^2)) (\mathbf{1}_v^T \otimes I_m) = S_v - \epsilon J + \mathcal{O}(\epsilon^2).
\]
The eigenvalues of $S_v$ are $\{j\omega_i\}_{i=1}^v$ (each with multiplicity $m$), which are distinct and purely imaginary. The perturbation $-\epsilon J$ is Hermitian negative semidefinite. By standard eigenvalue perturbation theory for distinct eigenvalues, the eigenvalues $\lambda_i(\epsilon)$ of $\hat{A}$ satisfy:
\[
\lambda_i(\epsilon) = j\omega_i - \epsilon \frac{u_i^* J u_i}{u_i^* u_i} + \mathcal{O}(\epsilon^2),
\]
where $u_i$ are the eigenvectors of $S_v$. Since $J = \mathbf{1}\mathbf{1}^T \otimes I_m$ and $u_i$ are standard basis vectors (tensor product), $u_i^* J u_i = 1$ and $u_i^* u_i = 1$. Thus, $\operatorname{Re}(\lambda_i(\epsilon)) = -\epsilon + \mathcal{O}(\epsilon^2)$.
For sufficiently small $\epsilon$ (satisfying \eqref{eq:epsilon_bound_gramian}), $\operatorname{Re}(\lambda_i(\epsilon)) < 0$ for all $i$. Thus, $\hat{A}$ is Hurwitz.
Consequently, the Lyapunov equation \eqref{eq:lyapunov_P} has a unique Hermitian positive definite solution $\hat{P}$.

We seek a solution of the form $\hat{P} = \epsilon P_1 + \epsilon^2 P_2 + \mathcal{O}(\epsilon^3)$. Substituting expansions into \eqref{eq:lyapunov_P} and collecting terms:
\begin{enumerate}
    \item Order $\epsilon$: $S_v P_1 + P_1 S_v^* = 0$. Since $S_v$ has distinct imaginary eigenvalues, $P_1$ must be block diagonal. Let $P_1 = \mathrm{blockdiag}(Q_1, \dots, Q_v)$.
    \item Order $\epsilon^2$: $S_v P_2 + P_2 S_v^* - J P_1 - P_1 J + J = 0$.
\end{enumerate}
Consider the $(i,i)$-th block diagonal entry of the Order $\epsilon^2$ equation. The term $(S_v P_2 + P_2 S_v^*)_{ii}$ vanishes for diagonal blocks. The equation reduces to:
\[
-(J P_1)_{ii} - (P_1 J)_{ii} + J_{ii} = 0.
\]
Since $P_1$ is block diagonal, $(P_1)_{ki} = 0$ for $k \neq i$. The matrix product $(J P_1)_{ii} = \sum_{k=1}^v J_{ik} (P_1)_{ki}$ collapses to the single term $J_{ii} (P_1)_{ii}$.
Since $J_{ii} = I_m$ and $(P_1)_{ii} = Q_i$, we have $(J P_1)_{ii} = Q_i$. Similarly, $(P_1 J)_{ii} = Q_i$.
Thus, $-Q_i - Q_i + I_m = 0 \implies 2Q_i = I_m \implies Q_i = \frac{1}{2} I_m$.
Therefore, $P_1 = \frac{1}{2} I_{vm}$. This proves:
\[
\hat{P} = \frac{\epsilon}{2} I_{vm} + \mathcal{O}(\epsilon^3).
\]
Since $\hat{P}_{ii} = \frac{\epsilon}{2} I_m + \mathcal{O}(\epsilon^3)$, $\sigma_{\min}(\hat{P}_{ii}) = \frac{\epsilon}{2} + \mathcal{O}(\epsilon^3)$.
The off-diagonal blocks $\hat{P}_{ij}$ ($i \neq j$) are determined by the off-diagonal part of the Order $\epsilon^2$ equation:
\[
j(\omega_i - \omega_j) (P_2)_{ij} - (J P_1)_{ij} - (P_1 J)_{ij} + J_{ij} = 0.
\]
Using $P_1 = \frac{1}{2} I_{vm}$ and $J_{ij} = I_m$:
\[
(J P_1)_{ij} = \sum_{k} J_{ik} (P_1)_{kj} = J_{ij} Q_j = \frac{1}{2} I_m.
\]
Similarly $(P_1 J)_{ij} = \frac{1}{2} I_m$. The equation becomes:
\[
j(\omega_i - \omega_j) (P_2)_{ij} - \frac{1}{2} I_m - \frac{1}{2} I_m + I_m = 0 \implies j(\omega_i - \omega_j) (P_2)_{ij} = 0.
\]
Thus $(P_2)_{ij} = 0$. This implies the off-diagonal blocks are of order $\mathcal{O}(\epsilon^3)$ (dominated by higher order terms in $\hat{B}$ and $\hat{A}$). Conservatively, we bound $\| \hat{P}_{ij} \| \le C \epsilon^2$ for some constant $C$.
The sum of off-diagonal norms in row $i$ is bounded by $(v-1) C \epsilon^2$.
We require $(v-1) C \epsilon^2 \le \delta \frac{\epsilon}{2}$. This is satisfied for $\epsilon \le \frac{\delta}{2 C (v-1)}$.
The bound \eqref{eq:epsilon_bound_gramian} is chosen to satisfy this condition.
This establishes block diagonal dominance and error bound.
\end{proof}
\section*{Appendix F}
\begin{proof}
The proof utilizes the Implicit Function Theorem (IFT) \citep{krantz2002implicit} on a scaled and decomposed system to establish existence and asymptotic bounds.

For any fixed $\epsilon > 0$, $\hat{A}(\epsilon)$ is Hurwitz. The pair $(\hat{A}(\epsilon), \hat{B}(\epsilon))$ is trivially stabilizable and the pair $(\hat{C}, \hat{A}(\epsilon))$ is trivially detectable because $\hat{A}(\epsilon)$ is strictly Hurwitz for any $\epsilon > 0$. Thus, a unique positive definite stabilizing solution $\hat{P}_{\mathrm{LQG}}(\epsilon)$ exists.

To analyze the limit $\epsilon \to 0^{+}$, introduce the scaling $\hat{P}_{\mathrm{LQG}} = \epsilon \mathbf{X}$. Substituting this into \eqref{eq:are_main} and using $\hat{A}(\epsilon) = \Lambda - \epsilon I_{vm}$ where $\Lambda = \mathrm{diag}(j\omega_1,\dots,j\omega_v)\otimes I_m$, yields:
\begin{align}
(\Lambda - \epsilon I)\mathbf{X} + \mathbf{X}(\Lambda^* - \epsilon I) + \epsilon \mathbf{J} - \epsilon \mathbf{X}\hat{C}^*\hat{C}\mathbf{X} = 0,
\end{align}
where $\mathbf{J} = \hat{B}(1)\hat{B}(1)^*$ is the block matrix with every block equal to $I_m$. Rearranging terms gives:
\begin{align}
\Lambda \mathbf{X} + \mathbf{X}\Lambda^* + \epsilon \left( -2\mathbf{X} + \mathbf{J} - \mathbf{X}\hat{C}^*\hat{C}\mathbf{X} \right) = 0. \label{eq:scaled_are}
\end{align}
Let $\mathcal{M} = \mathbb{C}^{vm \times vm}$ be the space of block matrices. Decompose $\mathcal{M}$ into the direct sum of the block-diagonal subspace $\mathcal{D}$ and the block-off-diagonal subspace $\mathcal{O}$:
\begin{align}
\mathcal{M} = \mathcal{D} \oplus \mathcal{O}, \quad \mathbf{X} = \mathbf{X}_D + \mathbf{X}_O.
\end{align}
Let $\Pi_D$ and $\Pi_O$ be the orthogonal projections onto $\mathcal{D}$ and $\mathcal{O}$. Note that for any $\mathbf{Y} \in \mathcal{D}$, $\Lambda \mathbf{Y} + \mathbf{Y} \Lambda^* = 0$. Define the linear operator $\mathcal{L}_0: \mathcal{O} \to \mathcal{O}$ by $\mathcal{L}_0(\mathbf{Z}) = \Lambda \mathbf{Z} + \mathbf{Z} \Lambda^*$. The eigenvalues of $\mathcal{L}_0$ are $\{j(\omega_i - \omega_k) : i \neq k\}$. Since $\omega_i$ are distinct, $\mathcal{L}_0$ is invertible with $\|\mathcal{L}_0^{-1}\| \le \Delta_{\mathrm{min}}^{-1}$.

Projecting \eqref{eq:scaled_are} onto $\mathcal{O}$ and $\mathcal{D}$ yields the coupled system:
\begin{align}
\mathcal{L}_0(\mathbf{X}_O) + \epsilon \Pi_O \left( -2\mathbf{X} + \mathbf{J} - \mathbf{X}\hat{C}^*\hat{C}\mathbf{X} \right) &= 0, \label{eq:proj_O} \\
-2\epsilon \mathbf{X}_D + \epsilon \Pi_D \left( \mathbf{J} - \mathbf{X}\hat{C}^*\hat{C}\mathbf{X} \right) &= 0. \label{eq:proj_D}
\end{align}
Define the map $\mathcal{F}_O: \mathcal{D} \times \mathcal{O} \times \mathbb{R} \to \mathcal{O}$ by the left-hand side of \eqref{eq:proj_O}. At $\epsilon = 0$, $\mathcal{F}_O(\mathbf{X}_D, \mathbf{X}_O, 0) = \mathcal{L}_0(\mathbf{X}_O)$. The solution is $\mathbf{X}_O = 0$.
The Fréchet derivative of $\mathcal{F}_O$ with respect to $\mathbf{X}_O$ at $(\mathbf{X}_D, 0, 0)$ is:
\begin{align}
D_{\mathbf{X}_O} \mathcal{F}_O = \mathcal{L}_0 + 0 \cdot D_{\mathbf{X}_O} \Pi_O(\dots) = \mathcal{L}_0.
\end{align}
The nonlinear term is multiplied by $\epsilon$, so its derivative vanishes at $\epsilon=0$. Since $\mathcal{L}_0$ is invertible, the IFT \citep{krantz2002implicit} guarantees the existence of $\epsilon_1 > 0$ and a unique smooth function $\mathbf{X}_O(\mathbf{X}_D, \epsilon)$ defined for $\|\mathbf{X}_D\|$ bounded and $|\epsilon| < \epsilon_1$, satisfying \eqref{eq:proj_O}. Furthermore, since $\mathcal{F}_O(\mathbf{X}_D, 0, 0) = 0$, the expansion yields:
\begin{align}
\mathbf{X}_O(\mathbf{X}_D, \epsilon) = \mathcal{O}(\epsilon). \label{eq:XO_order}
\end{align}
Substitute $\mathbf{X}_O(\mathbf{X}_D, \epsilon)$ into \eqref{eq:proj_D}. Dividing by $\epsilon$ (valid for $\epsilon > 0$) defines the map $\mathcal{F}_D: \mathcal{D} \times \mathbb{R} \to \mathcal{D}$:
\begin{align}
\mathcal{F}_D(\mathbf{X}_D, \epsilon) = -2\mathbf{X}_D + \Pi_D \left( \mathbf{J} - (\mathbf{X}_D + \mathbf{X}_O)\hat{C}^*\hat{C}(\mathbf{X}_D + \mathbf{X}_O) \right) = 0.
\end{align}
At $\epsilon = 0$, $\mathbf{X}_O = 0$, and the equation reduces to decoupled block equations for $X_i \in \mathbb{C}^{m \times m}$:
\begin{align}
-2X_i + I_m - X_i H^*(j\omega_i) H(j\omega_i) X_i = 0. \label{eq:limit_diag}
\end{align}
Let $Q_i = H^*(j\omega_i) H(j\omega_i)$. Since $H(j\omega_i)$ has full column rank, $Q_i \succ 0$. Equation \eqref{eq:limit_diag} is $X_i Q_i X_i + 2X_i - I_m = 0$, which has a unique positive definite solution $X_i^{(0)}$. To apply the IFT \citep{krantz2002implicit}, compute the Fréchet derivative of $\mathcal{F}_D$ with respect to $\mathbf{X}_D$ at $(\mathbf{X}_D^{(0)}, 0)$. The derivative acts on a perturbation $\Delta \in \mathcal{D}$ as:
\begin{align}
\mathcal{T}(\Delta) = -2\Delta - \Delta \mathcal{Q} \mathbf{X}_D^{(0)} - \mathbf{X}_D^{(0)} \mathcal{Q} \Delta,
\end{align}
where $\mathcal{Q} = \hat{C}^*\hat{C} = \mathrm{blkdiag}(Q_1, \dots, Q_v)$. Since the system is block diagonal, $\mathcal{T}$ decomposes into operators $\mathcal{T}_i(\delta) = -2\delta - \delta Q_i X_i^{(0)} - X_i^{(0)} Q_i \delta$.

Since $X_i^{(0)}$ is a polynomial function of $Q_i$, $X_i^{(0)}$ and $Q_i$ commute. Let $Q_i = U \Sigma U^*$ and $X_i^{(0)} = U \Gamma U^*$ be simultaneous eigen-decompositions with $\sigma_k, \gamma_k > 0$. The eigenvalues of $\mathcal{T}_i$ acting on the entries of $\tilde{\delta} = U^* \delta U$ are:
\begin{align}
\lambda_{kl} = -2 - (\sigma_k \gamma_k + \sigma_l \gamma_l).
\end{align}
Since $\sigma_k, \gamma_l > 0$, $\lambda_{kl} < -2$. Thus, $\mathcal{T}_i$ is invertible with bounded inverse. Consequently, $\mathcal{T}$ is invertible on $\mathcal{D}$.
By the IFT \citep{krantz2002implicit}, there exists $\epsilon_2 > 0$ and a unique smooth solution $\mathbf{X}_D(\epsilon)$ for $|\epsilon| < \epsilon_2$ such that $\mathbf{X}_D(\epsilon) = \mathbf{X}_D^{(0)} + \mathcal{O}(\epsilon)$.

Let $\epsilon_0 = \min(\epsilon_1, \epsilon_2)$. For $0 < \epsilon \le \epsilon_0$, the solution is $\mathbf{X}(\epsilon) = \mathbf{X}_D(\epsilon) + \mathbf{X}_O(\mathbf{X}_D(\epsilon), \epsilon)$. Combining the orders from \eqref{eq:XO_order}:
\begin{align}
\mathbf{X}(\epsilon) = \mathbf{X}_D^{(0)} + \mathcal{O}(\epsilon).
\end{align}
Transforming back to $\hat{P}_{\mathrm{LQG}} = \epsilon \mathbf{X}$:
\begin{align}
\hat{P}_{\mathrm{LQG}}(\epsilon) = \epsilon \mathbf{X}_D^{(0)} + \mathcal{O}(\epsilon^2).
\end{align}
Identifying $\tilde{P}_{\mathrm{LQG}}(\epsilon) = \epsilon \mathbf{X}_D^{(0)} = \mathrm{blkdiag}(p_1, \dots, p_v)$, the error matrix $E(\epsilon) = \hat{P}_{\mathrm{LQG}}(\epsilon) - \tilde{P}_{\mathrm{LQG}}(\epsilon)$ satisfies $\|E(\epsilon)\| \le K \epsilon^2$ for some $K > 0$ independent of $\epsilon$. This establishes the asymptotic block diagonal dominance.
\end{proof}
\section*{Appendix G}
\begin{proof}
Using the mixed-product property of the Kronecker product \citep{hardy2019matrix}, the coupling term expands as:
\[
\hat{B}L_v = (\epsilon \mathbf{1}_v \otimes I_m)(\mathbf{1}_v^T \otimes I_m) = (\epsilon \mathbf{1}_v \mathbf{1}_v^T) \otimes I_m.
\]
This results in a $v \times v$ block matrix where every block is $\epsilon I_m$. Thus, the scalar entries $\hat{A}_{kl}$ for $k,l \in \{1, \dots, vm\}$ are given by:
\[
\hat{A}_{kl} = \begin{cases} 
j\omega_{\lceil k/m \rceil} - \epsilon & \text{if } k = l, \\
-\epsilon & \text{if } k \neq l \text{ and } (k-1) \equiv (l-1) \pmod m, \\
0 & \text{otherwise}.
\end{cases}
\]
For any row $k$, there are exactly $v-1$ non-zero off-diagonal entries, each equal to $-\epsilon$.

The magnitude of the diagonal entry for any row $k$ associated with frequency $\omega_i$ is:
\[
|\hat{A}_{kk}| = |j\omega_i - \epsilon| = \sqrt{\omega_i^2 + \epsilon^2} \ge |\omega_i| \ge \omega_{\min}.
\]
The sum of the off-diagonal magnitudes in any row is:
\[
\sum_{l \neq k} |\hat{A}_{kl}| = (v - 1)\epsilon.
\]
Strict diagonal dominance requires $|\hat{A}_{kk}| > \sum_{l \neq k} |\hat{A}_{kl}|$. Substituting the bound from \eqref{eq:epsilon_bound}:
\[
(v - 1)\epsilon \le \delta \cdot \omega_{\min} < \omega_{\min} \le |\hat{A}_{kk}|,
\]
which confirms strict diagonal dominance since $\delta < 1$.

The error matrix $E = \hat{A} - \hat{A}_d$ contains zeros on the diagonal and the off-diagonal entries of $\hat{A}$. The induced infinity norm is the maximum absolute row sum:
\[
\| \hat{A} - \hat{A}_d \|_\infty = \max_{k} \sum_{l \neq k} |\hat{A}_{kl}| = (v - 1)\epsilon.
\]
Using the condition $\epsilon \le \frac{\delta \cdot \omega_{\min}}{v - 1}$, we obtain $\| \hat{A} - \hat{A}_d \|_\infty \le \delta \cdot \omega_{\min}$.

By the Gershgorin Circle Theorem \citep{varga2011gervsgorin}, every eigenvalue of $\hat{A}$ lies within the union of disks $D_i$ centered at $c_i = j\omega_i - \epsilon$ with radius $R = (v-1)\epsilon$. For any distinct $i, j$, the distance between centers is:
\[
|c_i - c_j| = |(j\omega_i - \epsilon) - (j\omega_j - \epsilon)| = |\omega_i - \omega_j| \ge \Delta_{\mathrm{min}}.
\]
To ensure the disks have disjoint interiors, we require $|c_i - c_j| \ge 2R$. Using the bound $\epsilon \le \frac{\Delta_{\mathrm{min}}}{2(v - 1)}$:
\[
2R = 2(v - 1)\epsilon \le \Delta_{\mathrm{min}} \le |c_i - c_j|.
\]
Thus, the disks are separated (with disjoint interiors for $\epsilon$ strictly less than the bound). Consequently, the spectrum consists of $v$ distinct clusters (each of multiplicity $m$), ensuring the eigenvalues remain grouped and associated with their respective frequencies.
\end{proof}
\section*{Appendix H}
\begin{proof}
Partition $T_v$ into $v \times v$ blocks $T_{ij} \in \mathbb{C}^{m \times m}$. Using the structures $\hat{B} = \epsilon \mathbf{1}_v \otimes I_m$, $L_v = \mathbf{1}_v^T \otimes I_m$, and $\mathcal{P} = [\mathcal{P}_1, \dots, \mathcal{P}_v]$, the Sylvester equation \eqref{eq:sylvester_Tv} expands block-wise for each $(i,j)$ pair as:
\begin{equation}
\label{eq:block_sylvester}
j(\omega_i - \omega_j) T_{ij} - \epsilon \sum_{k=1}^v (I_m + \mathcal{P}_k) T_{kj} = -\epsilon \mathcal{Q}.
\end{equation}
The Sylvester equation has a unique solution if and only if $\sigma(S_v - \hat{B}(L_v + \mathcal{P})) \cap \sigma(S_v) = \emptyset$. The eigenvalues of $S_v$ are $\{j\omega_1, \dots, j\omega_v\}$. By Gershgorin's Circle Theorem \citep{varga2011gervsgorin} applied to the blocks of $A = S_v - \hat{B}(L_v + \mathcal{P})$, the eigenvalues of $A$ lie within the union of disks centered at $j\omega_i - \epsilon \lambda$ (where $\lambda \in \sigma(I_m + \mathcal{P}_i)$) with radius $R = \epsilon (v-1) K_{\mathcal{P}}$. The condition $\epsilon \le \frac{\Delta_{\mathrm{min}}}{2 v K_{\mathcal{P}}}$ ensures $R < \Delta_{\mathrm{min}}/2$. Thus, the disks around $j\omega_i$ do not contain $j\omega_l$ for $l \neq i$. For $l=i$, since $I_m + \mathcal{P}_i$ is invertible, $\lambda \neq 0$. For sufficiently small $\epsilon$, the perturbed eigenvalues do not coincide with $j\omega_i$. Thus, the spectra are disjoint, ensuring existence and uniqueness.

From \eqref{eq:block_sylvester}, for $i \neq j$:
\[
T_{ij} = \frac{\epsilon}{j(\omega_i - \omega_j)} \left( \sum_{k=1}^v (I_m + \mathcal{P}_k) T_{kj} - \mathcal{Q} \right).
\]
Taking norms and using $|\omega_i - \omega_j| \ge \Delta_{\mathrm{min}}$:
\[
\| T_{ij} \| \le \frac{\epsilon}{\Delta_{\mathrm{min}}} \left( \sum_{k=1}^v K_{\mathcal{P}} \| T_{kj} \| + \| \mathcal{Q} \| \right).
\]
Let $\tau = \max_{i \neq j} \| T_{ij} \|$ and $\mu = \max_i \| T_{ii} \|$. Then $\| T_{kj} \| \le \max(\tau, \mu)$.
\[
\tau \le \frac{\epsilon}{\Delta_{\mathrm{min}}} \left( v K_{\mathcal{P}} \max(\tau, \mu) + \| \mathcal{Q} \| \right).
\]
Using the bound $\epsilon \le \frac{\Delta_{\mathrm{min}}}{2 v K_{\mathcal{P}}}$, we have $\frac{\epsilon v K_{\mathcal{P}}}{\Delta_{\mathrm{min}}} \le \frac{1}{2}$. This implies $\tau \le \mu$ for small $\epsilon$. Specifically:
\[
\tau \le \frac{2 \epsilon}{\Delta_{\mathrm{min}}} ( v K_{\mathcal{P}} \mu + \| \mathcal{Q} \| ).
\]
This establishes $\lim_{\epsilon \to 0} T_{ij} = 0$ for $i \neq j$.

From \eqref{eq:block_sylvester} with $i=j$:
\[
-\epsilon \sum_{k=1}^v (I_m + \mathcal{P}_k) T_{ki} = -\epsilon \mathcal{Q} \implies \sum_{k=1}^v (I_m + \mathcal{P}_k) T_{ki} = \mathcal{Q}.
\]
Isolating $T_{ii}$:
\[
(I_m + \mathcal{P}_i) T_{ii} = \mathcal{Q} - \sum_{k \neq i} (I_m + \mathcal{P}_k) T_{ki}.
\]
Multiplying by $(I_m + \mathcal{P}_i)^{-1}$:
\[
T_{ii} = (I_m + \mathcal{P}_i)^{-1} \mathcal{Q} - (I_m + \mathcal{P}_i)^{-1} \sum_{k \neq i} (I_m + \mathcal{P}_k) T_{ki}.
\]
Let $L_i = (I_m + \mathcal{P}_i)^{-1} \mathcal{Q}$. Note $\sigma_{\min}(L_i) \ge K_{\mathcal{P}}^{-1} \sigma_{\mathcal{Q}}$.
Using the Reverse Triangle Inequality \citep{horn2012matrix} for singular values ($\sigma_{\min}(A+B) \ge \sigma_{\min}(A) - \|B\|$):
\[
\sigma_{\min}(T_{ii}) \ge K_{\mathcal{P}}^{-1} \sigma_{\mathcal{Q}} - \gamma (v-1) K_{\mathcal{P}} \tau.
\]
Substituting the bound for $\tau$ and using \eqref{eq:epsilon_bound_Tv_final}, the subtraction term is less than $\frac{1}{2} K_{\mathcal{P}}^{-1} \sigma_{\mathcal{Q}}$. Thus:
\[
\sigma_{\min}(T_{ii}) \ge \frac{1}{2} K_{\mathcal{P}}^{-1} \sigma_{\mathcal{Q}} > 0.
\]
This proves $T_{ii}$ is invertible and $\| T_{ii}^{-1} \| \le 2 K_{\mathcal{P}} \sigma_{\mathcal{Q}}^{-1}$. The limit follows as $\tau \to 0$.

We verify the condition $\sum_{j \neq i} \| T_{ij} \| \le \delta \cdot \sigma_{\min}(T_{ii})$.
LHS $\le (v-1) \tau$.
Using the bound on $\tau$ and the lower bound on $\sigma_{\min}(T_{ii})$, the condition \eqref{eq:epsilon_bound_Tv_final} ensures:
\[
(v-1) \tau \le \delta \cdot \frac{1}{2} K_{\mathcal{P}}^{-1} \sigma_{\mathcal{Q}} \le \delta \sigma_{\min}(T_{ii}).
\]
Since $T_v$ is block diagonally dominant with invertible diagonal blocks, $T_v$ is invertible.

The error $\| T_v - \tilde{T}_v \|_\infty = \max_i \sum_{j \neq i} \| T_{ij} \| \le \delta \min_i \sigma_{\min}(T_{ii})$.
\end{proof}
\backmatter

\bmhead{Acknowledgements}
We are deeply grateful to Ion Victor Gosea at the Max Planck Institute for Dynamics of Complex Technical Systems in Magdeburg, Germany, for his patient responses to our numerous questions about the Loewner framework and for his valuable feedback. We are also thankful to Prof. Patrick K{\"u}rschner at Leipzig University of Applied Sciences for his patient responses to our numerous questions about the ADI method. 
\section*{Statements and Declarations}
\subsection*{Funding}
This work is supported by the National Natural Science Foundation of China under Grants No. 62350410484 and 62273059.
\subsection*{Competing Interests}
The authors declare no competing interests.
\subsection*{Consent for publication}
All authors have read and approved the final manuscript.
\subsection*{Data availability}
The MATLAB code and data to reproduce the results of this paper are publicly available at \citep{mycode}.
\subsection*{Authors' Contributions}
Umair Zulfiqar developed the main results of the paper and wrote the first manuscript. Qiu-Yan Song, Zhi-Hua Xiao, and Victor Sreeram contributed equally by providing critical improvements to the mathematical results and the manuscript's presentation. Their contributions significantly enhanced the final draft. They also validated the MATLAB code to ensure the reproducibility of the numerical results.

\end{document}